\newtheorem{theorem}{Theorem}[section]
\newtheorem{lemma}[theorem]{Lemma}
\newtheorem{proposition}[theorem]{Proposition}
\newtheorem{remark}[theorem]{Remark}
\newcommand{\C}{\mathbb{C}}
\def\Ddots{\mathinner{\mkern1mu\raise\p@
\vbox{\kern7\p@\hbox{.}}\mkern2mu
\raise4\p@\hbox{.}\mkern2mu\raise7\p@\hbox{.}\mkern1mu}}
\newcommand\scalemath[2]{\scalebox{#1}{\mbox{\ensuremath{\displaystyle #2}}}}
\begin{document}

\title[A perturbation of the harmonic chain of coupled oscillators]{Quantitative Rates of Convergence to Non-Equilibrium Steady State for a Weakly Anharmonic Chain of Oscillators}

\author{ANGELIKI MENEGAKI}
\email{angeliki.menegaki@dpmms.cam.ac.uk}
\address{DPMMS, University of Cambridge, Wilberforce Rd, Cambridge CB3 0WA, UK}

\maketitle

\begin{abstract}
We study a $1$-dimensional chain of $N$ weakly anharmonic classical oscillators coupled at its ends to heat baths at different temperatures.  Each oscillator is subject to pinning potential and it also interacts with its nearest neighbors.  In our set up both potentials are homogeneous and bounded (with $N$ dependent bounds) perturbations of the harmonic ones. We show how a generalized version of Bakry-Emery theory can be adapted to this case of a hypoelliptic generator which is inspired by F. Baudoin (2017).
 By that we prove exponential convergence to non-equilibrium steady state (NESS) in Wasserstein-Kantorovich distance and in relative entropy with quantitative rates. We estimate the constants in the rate by solving a Lyapunov-type matrix equation and we obtain that the exponential rate, for the homogeneous chain, has order bigger than $N^{-3}$.  For the purely harmonic chain the order of the rate is in $ [N^{-3},N^{-1}]$. This shows that, in this set up, the spectral gap decays at most polynomially with $N$.
\end{abstract}

\setcounter{secnumdepth}{7}
\setcounter{tocdepth}{1}
\tableofcontents

\section{Introduction} \label{introduction}
 
\subsection{Description of the model} 

We consider a model for heat conduction consisting of a one-dimensional chain of $N$ coupled oscillators whose dynamics governed by the Hamiltonian $$ H(p,q)= \sum_{1\leq i \leq N} \left( \frac{p_i^2}{2} + V(q_i) \right) + \sum_{i=1}^{N-1}U(q_{i+1}-q_i) $$ where $q_i \in \mathbb{R}$ is the displacements of the atoms with respect to their respective equilibrium positions and $p_i \in \mathbb{R}$ the momentum of the atoms $i=1,\dots,N$ in the phase space $  \mathbb{R}^N \times \mathbb{R}^N$. Each particle has its own pinning potential $V$ and it also interacts with its nearest neighbors through an interaction potential $U$. Notice that here we take all the masses $m_i=1$ for simplicity. The classical Hamiltonian dynamics is perturbed by noise and friction in the following way: the two ends of the chain are in contact with heat baths at two different temperatures $T_L$, $T_R >0 $. In this case we consider the interaction of the chain with the heat baths to be described by two Langevin processes.  So our dynamics is described by the following system of SDEs: \begin{equation} 
\begin{split} 
\label{eq: SDE}
 \mbox{d}q_i(t)&=p_i(t) \mbox{d}t  \quad \text{for} \quad i=1,\dots,N, \\
 \mbox{d}p_i(t)&= ( - \partial_{q_i} H) \mbox{d}t \quad \text{for} \quad  i=2,\dots,N-1, \\ 
 \mbox{d}p_1(t)&=(-\partial_{q_1}H - \gamma_1 p_1)\mbox{d}t + \sqrt{2 \gamma_1 T_1} \mbox{d}W_1(t),\\
 \mbox{d}p_N(t)&=(-\partial_{q_N}H - \gamma_N p_N)\mbox{d}t + \sqrt{2 \gamma_N T_N} \mbox{d}W_i(t) 
 \end{split}
\end{equation}
where $\gamma_i$ are the friction constants, $T_i$ the two temperatures and $W_1,W_N$ are two independent normalized Wiener processes.\\

\noindent
The dynamics \eqref{eq: SDE} is equivalently described by Liouville equation \begin{equation}\label{Liouville}
\partial_tf = \mathcal{L}f\quad \text{with}\quad f(0,p,q)=f_0(q,p) 
\end{equation} 
 where $\mathcal{L}$ is the second order differential operator
 \begin{equation} \label{eq:generator}
\mathcal{L}= \sum_{i=1}^N ( p_i \partial_{q_i}- \partial_{q_i}H \partial_{p_i})- \gamma_1 p_1\partial_{p_1}- \gamma_N p_N\partial_{p_N}+\gamma_1T_L \partial_{p_1}^2+\gamma_NT_R \partial_{p_N}^2.
\end{equation}
 
 \subsubsection{Derivation of $\mathcal{L}$}
 Since the solution to \eqref{eq: SDE} form a Markov process, we define the transition probabilities $$ P_t^*(z ,dy)=\mathbb{P}( z_t \in \mbox{d}y |z_0=z)\quad \text{with} \quad \int_{\mathbb{R}^{2N}} P_t^*(z, \mbox{d}y)=1, \quad \forall\ z \in \mathbb{R}^{2N} $$
where $z$ is the initial condition and $ P_t^*$ satisfies the Chapman-Kolmogorov relation $$  P_{t+s}^*(z,\mbox{d}y)=\int_{w \in \mathbb{R}^{2N}}P_t^*(z , \mbox{d}w)P_s^* (w , \mbox{d}y).$$ Thus we consider a semigroup $\{P_t^*$ , $ t \geq 0\}$ on the space of Borel probability measures on the space $\mathbb{R}^{2N}$ such that $$ (P_t^* \mu)(B) = \int_{\mathbb{R}^N \times \mathbb{R}^N} P_t^* (x,B)d\mu(x),$$ for $B$ a Borel subset on $\mathbb{R}^{2N}$.  
Now, one can similarly consider the \textit{dual} semigroup $ \{ P_t, t \geq 0 \}$ acting on observables. For any measurable function $f: \Omega \rightarrow \mathbb{R}$ we define $$ P_t f(z) = \int_{\mathbb{R}^{2N} } f(y)P_t^*(z,dy) = \mathbb{E}_z(f(z_t))$$ where $z_t=(p_t,q_t)$ solves \eqref{eq: SDE} and $\mathbb{E}_z\big( f(z_t) \big)$ is the expectation taking over all the realizations of the Brownian motion starting from $z \in \mathbb{R}^{2N}$.\\
Having a well-defined semigroup $\{P_t\}_{t\geq0}$  with invariant measure $\mu$, we can make sense of the following definition of the \textit{generator} of the semigroup $\{P_t\}_{t\geq0}$ (see for instance Section 1.4 in \cite{BGL14}) :  \begin{align}
\mathcal{L}f := \lim_{t\to 0^+} \frac{P_t f-f}{t}
\end{align}
for every $f \in C_c^{\infty}(\mathbb{R}^{2N})$.\\
 By applying Ito's formula, we get the expression of the generator of the Markov process which solves \eqref{eq: SDE} to be \eqref{eq:generator}.

\subsubsection{State of the art} This model was first used to describe heat diffusion and derive rigorously Fourier's law (for an overview see \cite{BLR00} and references therein).  Since then, it has been the subject of many studies, both from a numerical and from a theoretical perspective. Firstly, the purely harmonic case with several idealized reservoirs at different temperatures has been solved explicitly in \cite{RLL67} where they found exactly how the non-equilibrium stationary state (NESS) looks like: they show that the stationary measure is Gaussian in the positions and momenta of the system, corresponding to the distribution when the temperatures of all the reservoirs are equal, with the difference that the covariances between momenta and positions are non zero, but proportional to the temperature difference. For the anharmonic chain there are no explicit results in general. However it has been studied numerically for many different potentials and many kinds of heat baths, including the Langevin heat baths that we consider here. See for instance
\cite{MR2265846} \cite{GLPV00,LLP03} and references therein. \\

There are two facts in this model that make its rigorous study very challenging: first of all, we do not know explicitly the form of the invariant measure of \eqref{eq: SDE} and also our generator is highly degenerate, having the dissipation and noise acting only on two variables of momenta at the end of the chain. It is not difficult to see, though, that in the equilibrium case, \textit{i.e}. when the two temperatures are equal $T_L=T_R=T= \beta^{-1}$, the stationary measure is the Gibbs-Boltzmann measure $\mbox{d}\mu(p,q)=\exp(-\beta H(p,q)) \mbox{d}p\mbox{d}q$: after explicit calculations we have $\mathcal{L}^* e^{-\beta H(p,q)}=0$ since $$ \mathcal{L}^* = p \cdot \nabla_q + \nabla_q H \cdot \nabla_p + \big(  \gamma_i(1+p_i \partial_{p_i}) + \gamma_i T_i \partial_{p_i}^2\big)\delta_{i=1,i=N} $$ on $ L^2(\mathbb{R}^{2N}).$  So if $f_t(p,q)$ solves 
\eqref{Liouville} then $\int f_t(p,q) \mbox{d}\mu(p,q)$ does not depend on $t$. \\

Since we are interested in the theoretical aspects of the model, we refer to \cite{EPR99a, EPR99b}, which is the first rigorous studies of the anharmonic case. The existence of a steady state has only been obtained in some cases where the potentials act like polynomials near infinity. In particular making the following assumptions on pinning and interaction potentials $$ \lim_{\lambda \to \infty} \lambda^{-k} U(\lambda q )=a_k |q|^k\ \text{and}\ \lim_{\lambda \to \infty} \lambda^{1-k} U'(\lambda q )=k a_k |q|^{k-1} \text{sign}(q) $$ for constants $a_k>0$, 
and assuming that  \textit{the interaction potential is at least as strong as the pinning}, the existence and uniqueness of an invariant measure was first proved in \cite{EPR99a} using functional analytic methods: they show that the resolvent of the generator of \eqref{eq: SDE} is compact in a suitable weighted $L^2$ space. Later it was proven in \cite{RBT02} that the rate of convergence to equilibrium is exponential using probabilistic tools. \\ 

In these papers  they model the heat baths  slightly differently and a bit more complicated than the Langevin thermostats, with physical interpretation: the model of the reservoirs is the classical field theory given by linear wave equations with initial conditions distributed with respect to appropriate Gibbs measures at different temperatures, see also \cite[Section 2]{RB06}.  Later, an adaptation of a very similar probabilistic proof was provided in \cite{Car07} for the Langevin thermostats. The difference  with the Langevin heat baths is that the dissipation and the noise act on the momenta only indirectly through some auxialiary variables. \\ 

Regarding the existence, uniqueness of a Non-Equilibrium Stationary State and exponential convergence towards it in more complicated networks of oscillators (multi-dimensional cases) see \cite{CEHRB18}. The proofs there are inspired by the abovementioned works in the $1$-dimensional chains. \\
%Note, however, that in all these papers the dependence of the convergence rate on the number of the oscillators is not clear. \\

There are also cases where there is no convergence to equilibrium, when for instance $l > k$, \textit{i.e}. when the pinning is stronger than the coupling potential, see for example \cite{Hair09,HM09} where they cover some cases like that. In \cite{HM09} they show that the resolvent of the generator fails to be compact or/and that there is lack of spectral gap. In particular they show that with harmonic interaction, $0$ belongs in the essential spectrum of the generator as soon as the pinning potential is of the form $| q |^{k}$ for $k>3$. The conjecture is that this is true as soon as $k>\frac{2n}{2n-1}$ if $n$ is the center of the chain.

\subsection{Set up and main results} 
We consider a small perturbation of the harmonic chain, \textit{weakly anharmonic}: 
First, regarding the boundary conditions, we model the oscillators chain with \textit{rigidly fixed edges} meaning that we consider a chain beginning with an oscillator labelled $0$ and ending with one labelled $N+1$ under the hypothesis that $q_0=q_{N+1}=0$. The first and the last particle are pinned with additional harmonic forces, modelling their attachment to a wall.  These boundary conditions and heat baths modelled by adding two Ornstein-Uhlembeck processes at both ends as explained above, is the same model as in \cite{RLL67} and  is known as the \textit{Casher-Lebowitz model}, since it is also one of the models considered in \cite{CL71} in order to study the heat flux behaviour in disordered harmonic chains \footnote{The other one considered for studying the $N$-dependence of the energy flux in disordered harmonic chains, was first introduced by Rubin-Greer, \cite{RG71}, where the heat baths are semi-infinite chains distributed according to Gibbs equilibrium measures of temperatures $T_L, T_R$ (free boundaries). } .\\

Second, we assume that both pinning and interaction potentials of the oscillator chain differ from the harmonic ones by potentials with Hessians bounded from above by positive constants, uniformly in $q \in \mathbb{R}^N$. Eventually  we have a Hamiltonian of the form \begin{align} \label{Hamiltonian} H(p,q)= \sum_{i=1}^N \left( \frac{p_i^2}{2} + a \frac{ q_i^2}{2} + U_{\text{pin}}(q_i) \right) + \sum_{i=1}^{N-1}& \left( c \frac{(q_{i+1}-q_i)^2}{2} + U_{\text{int}}( q_{i+1}-q_i) \right)+  \\&+ \frac{c q_1^2}{2}+\frac{c q_N^2}{2} \notag
\end{align}
with \begin{align} \label{assumptions on the potentials} \sup_{(p,q) \in \mathbb{R}^{2N}}\|\text{Hess}\ U_{\text{pin}}(p,q) \|_{2}  \leq C_{pin}(N) \quad \text{and}\quad  \sup_{(p,q) \in \mathbb{R}^{2N}} \| \text{Hess}\ U_{\text{int}} (p,q)\|_{2} \leq C_{int}(N) \end{align}
where the positive constants $C_{pin}(N)$, $C_{int}(N)$  will be chosen later and they will depend on $N$. \\ % Then  we can choose, for instance,  $U_{\text{int}}$ such that $\nabla_q U_{\text{int}}(q) \sim -a(1+|q|)^{a-1}  -q \epsilon_{int} $ as $|q| \to \infty$ with $ 1 \leq a \leq 2$.\\

Denoting by $\mathcal{L}$ the infinitesimal generator of the Markov semigroup $(P_t)_{t\geq 0}$ which is defined as before, we look at Liouville equation $ \partial_tf = \mathcal{L}f$,  where the generator of the dynamics now is \begin{align*}  \mathcal{L} = p\cdot \nabla_q - &q B  \cdot\nabla_p - \nabla_q U_{\text{pin}} \cdot \nabla_p- \gamma p_1 \partial_{p_1} -\gamma p_N \partial_{p_N} + \gamma T_L \partial_{p_1}^2 + \gamma T_R \partial_{p_N}^2- \\&- \sum_{i=2}^{N-1} \Big( \nabla_q U_{\text{int}}(q_{i+1}-q_i) \cdot \nabla_p + \nabla_q U_{\text{int}}(q_{i}-q_{i-1}) \cdot \nabla_p \Big)
\end{align*} where we take all the friction constants equal $\gamma_1=\gamma_N=\gamma$, for the two temperatures $T_L,T_R$ we assume that they satisfy $T_L=T+\Delta T$, $T_R=T- \Delta T$, for some temperature difference $\Delta T >0$. Also, $B$ is the symmetric tridiagonal (Jacobi) matrix %so that for a vector $q \in \mathbb{R}^N$ :
%\begin{align}  \label{matrix B}
%(qB)_{i}:= -c (q_{i+1} + q_{i-1} ) + (a+2c) q_i,\quad  1 \le i \le N 
%\end{align} with $q_0=q_{N+1}=0$. 

\begin{align} \label{matrix B} B:= \begin{bmatrix} 
(a+2c) & -c &0 &0& \dots &0& 0 &0 \\ -c& (a +2c)  & -c &0& \cdots & 0 & 0&0 \\ 0 & -c & (a +2c)  & -c& \dots& 0 &0&0\\ \vdots &\vdots & \quad & \quad&\quad & \quad & \quad\\ \vdots &\vdots & \quad & \quad&\ddots & \ddots & \ddots\\ 0&0&0&0&\quad & -c & (a+2c)  & -c \\ 0 &0& 0&0 & \dots & 0& -c & (a+2c) 
\end{bmatrix}. \end{align}

It is convenient to see the above form of the generator in the following block-matrix form:\begin{equation} \label{generator} \mathcal{L} =- zM \cdot \nabla_z -\nabla_q \Phi(q) \cdot \nabla_p + \nabla_p \cdot \Gamma \Theta \nabla_p
\end{equation}
where $z=(p,q) \in \mathbb{R}^{2N}$, $\Phi(q)$ corresponds to the perturbing potentials so that $$  \Phi(q) =  U_{\text{pin}} + \sum_{i=2}^{N-1} \Big( U_{\text{int}}(q_{i+1}-q_i)  + U_{\text{int}}(q_{i}-q_{i-1}) \Big) ,$$ the matrix $\Gamma$ is the friction matrix $$ \Gamma = \text{diag}(\gamma,0 ,\cdots ,0, \gamma) $$ the matrix $\Theta$ is the temperature matrix $$ \Theta= \text{diag}(T_L,0,\cdots, 0,T_R)$$ and $M$ in blocks is the following \begin{equation} \label{matrix M} M = \begin{bmatrix} \Gamma & -I\\ B & 0 
\end{bmatrix}
\end{equation} where  $I$ is the identity matrix, so that it corresponds to the transport part of the operator, while $B$ and $\Gamma$ correspond to the harmonic part of the potentials and the drift from both ends, respectively.\\

\noindent
\textit{Motivation}. This study is motivated by a discussion opened in C. Villani's memoir on hypocoercivity, see Section 9.2 in \cite{Villani09}, about approaching questions on this heat conduction  model by hypocoercive techniques. This chain of coupled oscillators is a hypocoercive situation, where the diffusion only on the ends of the chain leads to a convergence to the stationary distribution exponentially fast, under the following assumptions on the potentials: strict convexity on interaction potential (being stronger than the pinning one) and bounded Hessians for both potentials. In particular, he points out that it might be possible to recover the previous results of exponential convergence in the weighted $H^1(\mu)$-norm for this different class of potentials (than the potentials assumed in \cite{EPR99b} for instance)  by applying a generalized version of Theorem 24 in \cite{Villani09}. For that, one needs to know some properties of the, non-explicit,  non-equilibrium steady state $\mu$: for instance, if it satisfies a Poincar\'{e} inequality or if the Hessian of the logarithm of its density is bounded.  
\\

\noindent
\textit{Main results}. Here, considering a perturbation of the harmonic chain (homogeneous case), instead we follow an approach that combines hypocoercivity  and Bakry-\'{E}mery theory of $\Gamma$ calculus and curvature conditions as in \cite{BakEm83}. We obtain the same results with Bakry-\'{E}mery but in a perturbed setting. This is explained in more details and is implemented in \hyperref[section with functional ineqaualities]{Section 3}.  The whole idea was inspired by F. Baudoin in \cite{Bau17}, where he used this combination in order to show exponential convergence to equilibrium for the Kinetic Fokker-Planck equation in $H^1$-norm and in Kantorovich-Wasserstein distance. 

By that it is possible to show, for the dynamics \eqref{eq: SDE} as well, exponential convergence in Kantorovich-Wasserstein distance and in relative entropy and to get quantitative rates of convergence in these distances, \textit{i.e.} to obtain information on the $N$-dependence of the rate.  In particular our estimates show that the convergence rate in the harmonic chain approach $0$ as $N$ tends to infinity at a \textit{polynomial} rate of order $1/N^{\gamma}$ where $ 1 \leq \gamma \leq 3$ and that it is bigger than $N^{-3}$ in the weakly anharmonic chain. 

In order to quantify the above rates, we estimate $\|b_N\|_2$, where $b_N$ is a block matrix defined in \hyperref[section with functional ineqaualities]{Section 3} as a solution of a matrix equation, \eqref{tupos gia b'}. Since $\|b_N\|_2$ appears in the rates in the Theorems \ref{maintheorem}, \ref{converg in entropy}, we start by stating this result:

\begin{proposition} \label{propos of induction} Considering the homogeneous scenario of the oscillators chain, with pinning coefficient $a>0$ and interaction coefficient $c>0$, there exists a  symmetric block matrix  $$b_{N}= \begin{bmatrix}  x_{N} & z_{N} \\ z_{N}^T & y_{N} 
\end{bmatrix} \in \mathbb{C}^{2N \times 2N}$$ that can be constructed as the unique solution of the following Lyapunov equation \begin{equation} \label{tupos gia b_N} 
b_{N} M +M^T b_{N} = \Pi_{N}
\end{equation} 
where $\Pi_{N} =   \operatorname{diag}(2T_L, 1, \dots,1,2T_R, 1, 1, \dots, 1,1)$ and $M= \begin{bmatrix} \Gamma & -I \\ B& 0 \end{bmatrix}$,  and the spectral norm of which, $ \| b_{N} \|_2, $ is bounded from above in terms of the dimension $N$, as  $\mathcal{O}(N^{3})$. 
\end{proposition}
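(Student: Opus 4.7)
\emph{Proof plan.} The strategy is to combine classical Lyapunov-equation theory with the explicit block structure of $M$. First I would establish existence and uniqueness of $b_N$ by showing that every eigenvalue of $M$ has strictly positive real part. Since the linear system $\dot z=-Mz$ is exactly the deterministic part of the harmonic chain damped at its two ends, Kalman controllability of the pair $(\Gamma,B)$ --- which is immediate because $B$ is tridiagonal and irreducible --- gives asymptotic stability of $-M$. By the standard Lyapunov theorem the unique solution then admits the integral representation
\[
  b_N \;=\; \int_0^{\infty} e^{-M^T t}\,\Pi_N\,e^{-Mt}\,dt,
\]
which is automatically real and symmetric (symmetry also follows directly by transposing \eqref{tupos gia b_N} and invoking uniqueness).

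To approach the $N$-dependent bound I would exploit the $2\times 2$ block structure of the equation. Writing $b_N=\begin{pmatrix} x_N & z_N \\ z_N^T & y_N\end{pmatrix}$ and expanding $b_N M+M^T b_N=\Pi_N$ block by block produces three coupled relations: the lower-right block forces $z_N+z_N^T=-I$; the upper-right block yields $x_N=B y_N+\Gamma z_N$; and the upper-left block gives $\Gamma x_N+x_N\Gamma+B z_N^T+z_N B=\Pi_{pp}$, where $\Pi_{pp}=\operatorname{diag}(2T_L,1,\dots,1,2T_R)$. Decomposing $z_N=-\tfrac12 I+A_N$ with $A_N$ antisymmetric and eliminating $x_N$ reduces the problem to a Sylvester-type equation coupling $y_N$ and $A_N$ through commutators with $B$ and $\Gamma$. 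Because $B$ is a homogeneous Toeplitz--Jacobi matrix, its spectrum and eigenbasis are explicit (the discrete sine basis with eigenvalues $a+4c\sin^2(k\pi/(2(N+1)))$), and the reduced system becomes transparent in Fourier coordinates, with the Fourier modes of $B$ coupled to one another only through the rank-two matrix $\Gamma$.

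The principal obstacle is obtaining the sharp $O(N^3)$ upper bound. Two complementary routes look viable. The analytic one combines the integral representation above with a spectral-gap estimate for $M$: if one establishes $\min_j \operatorname{Re}\lambda_j(M)\gtrsim \gamma/N^3$ --- the classical scaling of the slowest acoustic mode of a pinned harmonic chain dissipated only at its endpoints --- then
\[
  \|b_N\|_2\;\le\; \|\Pi_N\|_2\int_0^{\infty}\|e^{-Mt}\|_2^{\,2}\,dt\;\lesssim\; N^3,
\]
once the non-normality prefactor of $e^{-Mt}$ is controlled (e.g.\ by passing to the explicit eigenbasis of $M$ and tracking the conditioning of the diagonalizing change of variables). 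The second route, more in the spirit of the ``induction'' in the name of the proposition, is to solve the reduced Sylvester equation directly by induction on $N$: appending one oscillator at a time and computing the corresponding Schur-complement update, one shows that $\|y_N\|_2$, $\|A_N\|_2$ and $\|x_N\|_2$ grow by at most an additive $O(N^2)$ at each step, which telescopes to $\|b_N\|_2=O(N^3)$. The delicate point in either route is the same: the slowest mode of the chain couples only weakly to the boundary dissipation, and keeping precise track of this mode --- while showing that every other contribution stays of lower order --- is what both forces and saturates the $N^3$ factor.
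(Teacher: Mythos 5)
Your setup is sound: the integral representation, positive stability of $M$, and the block relations you extract (lower-right $\Rightarrow z_N+z_N^T=-I$, upper-right $\Rightarrow x_N=By_N+\Gamma z_N$, upper-left $\Rightarrow$ the boundary equation) match the paper's Lemma on the block equations exactly, and the decomposition $z_N=-\tfrac12 I+A_N$ with $A_N$ antisymmetric is the right first move. But neither of the two routes you sketch actually delivers the $\mathcal{O}(N^3)$ bound, and the one the paper uses is different from both.

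Your first route is circular inside this paper's logic: Proposition \ref{Veselic} derives $\min_j \operatorname{Re}\lambda_j(M)\gtrsim N^{-3}$ \emph{from} the bound $\|b_N\|_2\lesssim N^3$ (via the Veseli\'{c}-type inequality $\operatorname{Re}\mu \ge 1/(2\|b_N\|)$), so you cannot feed that spectral-gap estimate back in as an input. Even if you imported the gap independently from \cite{BM19}, the estimate $\int_0^\infty\|e^{-Mt}\|_2^2\,dt\lesssim 1/\rho$ requires a uniform bound on the eigenvector conditioning of $M$; $M$ here is very far from normal (two controls on a $2N$-dimensional system), and you flag this yourself as "the delicate point" but offer no mechanism to control it --- that delicate point \emph{is} the proposition. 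Your second route (append an oscillator, Schur-complement update, show additive $O(N^2)$ growth) is not what the paper does either, and you only assert the additive step without any computation; the Schur-complement recursion for a Lyapunov equation with fixed boundary dissipation but growing interior is not obviously of this form.

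What the paper actually does is work entirely at fixed $N$ with the block equations: it shows $z_N$ has an explicit perturbed-Toeplitz structure governed by a single vector $\underline{z_N}$ of first-row entries, derives $B\underline{y_N}=\underline{\tilde z_N}$ relating $\underline{y_N}$ to $\underline{z_N}$, solves a linear recursion to get $|z^{(N),a}_{1,i}|\lesssim (\Delta T)R^{1-i}+(N-i)=\mathcal{O}(N)$, concludes $|y^{(N)}_{1,j}|\lesssim N$, and then bounds $\|y_N\|_2$ row by row: each entry of $y_N$ is a sum of $\mathcal{O}(N)$ terms each of size $\mathcal{O}(N)$, over $N$ rows, giving $\|y_N\|_2\lesssim N^3$; combined with $\|x_N\|_2\le\|B\|_2\|y_N\|_2+\|\Gamma z_N\|_2$ and $\|b_N\|_2\le\|x_N\|_2+\|y_N\|_2$ (via a square-root block decomposition) this yields $\mathcal{O}(N^3)$. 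The precise entrywise control that saturates the $N^3$ is exactly what your proposal leaves open.
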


\begin{theorem} \label{maintheorem}
We consider a chain of coupled oscillators whose dynamics are described by the system \eqref{eq: SDE} and the Hamiltonian is given by \eqref{Hamiltonian} under the assumptions on the potentials given by \eqref{assumptions on the potentials}. In particular we assume bounded perturbations of the harmonic chain with bounds depending on $N$, \eqref{behaviour of bounds}, so that $$ \|b_N\|_2^{-1} - 2 \big( C_{pin}(N)+C_{int}(N) \big)\|b_N\|_2 >0  . $$
 For a fixed number of particles $N$, we have a convergence to NESS in Wasserstein-Kantorovich distance \begin{equation} \label{convergence result} 
W_2(P_t^*f_0^1, P_t^* f_0^2) \leq \| b_N \|_{2}^{1/2}  \| b_N^{-1} \|_{2}^{1/2} e ^{- \lambda_N t} W_2 (f_0^1,f_0^2)
\end{equation} for $f_0^1,f_0^2$ initial data of the evolution equation. Here
  $b_N$ is the matrix that solves the equation \eqref{tupos gia b'} and $\lambda_N$ a function of $\|b_N\|_2$ as in \eqref{lambda_N}: $$ \lambda_N =  \| b_N \|_{2}^{-1}  - 2(C_{pin}(N) + C_{int}(N) ) \|b_N\|_2 \| b_N^{-1}\|_2.$$ 
Moreover, there is a unique stationary solution $f_{\infty}$, since all the solutions $f_t$ will converge towards it if we make the choice  $f_0^2= f_{\infty}$. Estimates on $\|b_N\|_2$ regarding $N$ are given by the Proposition \ref{propos of induction} and this allows to conclude that \begin{align} \label{quantit convergen in W2} W_2(P_t^*f_0^1, P_t^* f_0^2) \leq N^{\frac{3}{2}} e^{- \frac{\lambda_0}{N^3}t}\  W_2 (f_0^1,f_0^2) \end{align} and that the bounds on the perturbing potentials vanish asymptotically with rate $$   C_{pin}(N)+C_{int}(N)  \lesssim \frac{C_0}{N^6}$$ with constants $C_0, \lambda_0$ independent of $N$. 
 \end{theorem}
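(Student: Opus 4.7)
My plan is a direct synchronous coupling argument in which the twisted quadratic form $V(\delta) := \delta^T b_N \delta$ serves as a Lyapunov function for the difference of two trajectories. Given initial distributions $f_0^1, f_0^2$, I would take $(Z_t^1, Z_t^2)$ to be a coupled pair of solutions of \eqref{eq: SDE} driven by the \emph{same} pair of Brownian motions, with initial datum $(Z_0^1, Z_0^2)$ realizing the optimal $W_2$-coupling of $(f_0^1, f_0^2)$. Because the diffusion matrix in \eqref{eq: SDE} is state-independent, the stochastic terms cancel in the difference and $\delta_t := Z_t^1 - Z_t^2$ obeys a deterministic ODE whose linear part is encoded by the matrix $M$ of \eqref{matrix M} and whose only nonlinear contribution is a block driven by $\nabla_q \Phi(q_t^1) - \nabla_q \Phi(q_t^2)$ acting on the momentum coordinates. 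Since $W_2^2(P_t^* f_0^1, P_t^* f_0^2) \leq \mathbb{E}\|\delta_t\|_2^2$, the whole problem reduces to a pointwise estimate on $\|\delta_t\|_2$ along trajectories.

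Differentiating $V(t) := \delta_t^T b_N \delta_t$ and invoking the Lyapunov identity $b_N M + M^T b_N = \Pi_N$ supplied by Proposition \ref{propos of induction} produces a clean splitting
\[
\frac{dV}{dt} = -\delta_t^T \Pi_N \delta_t - 2\delta_t^T b_N \begin{pmatrix} \nabla_q \Phi(q_t^1) - \nabla_q \Phi(q_t^2) \\ 0 \end{pmatrix}.
\]
The first term is bounded below by a multiple of $\|b_N\|_2^{-1} V$ using $\lambda_{\min}(\Pi_N) \geq \min(1, 2T_L, 2T_R) > 0$ together with the elementary inequality $\|\delta\|_2^2 \geq V/\|b_N\|_2$. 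For the second, the Hessian assumptions \eqref{assumptions on the potentials} give the Lipschitz bound $\|\nabla_q \Phi(q^1) - \nabla_q \Phi(q^2)\|_2 \leq (C_{pin}(N)+C_{int}(N))\|\delta_t\|_2$, and Cauchy--Schwarz combined with $\|\delta_t\|_2^2 \leq \|b_N^{-1}\|_2 V$ bounds the perturbation contribution above by a multiple of $(C_{pin}+C_{int})\|b_N\|_2\|b_N^{-1}\|_2 V$. Collecting yields $\dot V \leq -2\lambda_N V$ with $\lambda_N$ exactly as in the theorem; the assumed positivity of the potential-bound condition is precisely what is needed to guarantee $\lambda_N > 0$.

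Gronwall then produces $V(t) \leq e^{-2\lambda_N t} V(0)$. Translating back through $\|\delta_t\|_2^2 \leq \|b_N^{-1}\|_2 V(t)$ and $V(0) \leq \|b_N\|_2 \|\delta_0\|_2^2$, and taking expectations over the optimal initial coupling, yields \eqref{convergence result}. Existence and uniqueness of the stationary measure $f_\infty$ are then immediate corollaries: uniqueness follows at once from the contraction (two invariant measures would have vanishing $W_2$-distance), and existence from the standard Cauchy-sequence argument applied to $\{P_t^* f_0\}_{t \geq 0}$ in the complete metric space $(\mathcal{P}_2(\mathbb{R}^{2N}), W_2)$. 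The quantitative refinement \eqref{quantit convergen in W2} follows by inserting $\|b_N\|_2 = \mathcal{O}(N^3)$ from Proposition \ref{propos of induction} together with a matching control on $\lambda_{\min}(b_N)$, and by imposing $C_{pin}(N) + C_{int}(N) \lesssim N^{-6}$ so that the perturbative subtraction cannot consume more than half of the leading $\|b_N\|_2^{-1} \asymp N^{-3}$ term in $\lambda_N$. The genuinely substantive work is not this coupling-Lyapunov scheme, which is essentially algebraic once $b_N$ is in hand, but Proposition \ref{propos of induction} itself: establishing the sharp $\mathcal{O}(N^3)$ scaling of $\|b_N\|_2$ together with the companion lower bound on $\lambda_{\min}(b_N)$ directly from the Lyapunov equation \eqref{tupos gia b_N} is where the real difficulty lies.
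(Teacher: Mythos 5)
Your synchronous-coupling route is a genuinely different — and substantially more elementary — way to reach the Wasserstein contraction than the one the paper uses. The paper proceeds entirely on the function side: it establishes the twisted curvature condition $\mathcal{T}_2 \geq \lambda_N \mathcal{T}$ (Proposition~3.1), turns that into the gradient estimate $|\nabla_z P_t f|^2 \leq \|b_N\|_2\|b_N^{-1}\|_2 e^{-2\lambda_N t} P_t(|\nabla_z f|^2)$ via the interpolation-semigroup lemma (Lemma~3.3, Remark~3.4), and then invokes Kuwada's duality (Theorem~5.1) to pass to a contraction for $P_t^*$ in $W_2$. You instead work directly on the measure side: same-noise coupling, pointwise Lyapunov estimate for $V(\delta)=\delta^T b_N\delta$, Gr\"onwall, and the definition of $W_2$. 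Since Kuwada's theorem is precisely the statement that these two sides are equivalent, your argument is the ``primal'' rendition of the paper's ``dual'' one; it has the virtues of avoiding all of the $\Gamma$-calculus machinery and of being manifestly rigorous once the SDE is well posed (Lipschitz drift, additive noise), whereas the paper's route pays that overhead in order to simultaneously obtain the Log-Sobolev inequality and entropy decay from the same curvature condition.

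There is, however, one point you should scrutinize before declaring the argument complete. Writing the SDE in column-vector form, the drift matrix is $-M^T$ (because the generator's transport term is $-zM\cdot\nabla_z$ with $z$ a row vector), so that along your coupled trajectories $\dot\delta = -M^T\delta - (\text{Lipschitz perturbation})$. Differentiating $V=\delta^T b_N\delta$ then produces $\dot V = -\delta^T\bigl(Mb_N + b_N M^T\bigr)\delta - 2\delta^T b_N g$, so the identity you actually need is $Mb_N + b_N M^T = \Pi_N$. The paper's stated Lyapunov equation is $b_N M + M^T b_N = \Pi_N$, and this is the version that the Bakry--\'Emery computation in Proposition~3.1 and the block analysis of Section~6 (Lemma~6.2) genuinely use; the integral formula $b_N = \int_0^\infty e^{-tM}\Pi_N e^{-tM^T}\,dt$ that the paper also quotes actually solves your equation, not theirs (one can check this by differentiating $e^{-tM}\Pi_N e^{-tM^T}$), so the paper contains a small internal inconsistency that your derivation exposes. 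The two Lyapunov equations have different solutions — already visible for $N=1$ — so you cannot directly cite Proposition~2.1 for the $\mathcal{O}(N^3)$ bound without first noting that you need the solution of the \emph{transposed} equation, and then either re-running the Section~6 block analysis with $M\leftrightarrow M^T$ (which should go through by the same mechanism) or exhibiting a conjugation that controls $\|b_N'\|_2$ by $\|b_N\|_2$. Finally, a cosmetic remark: tracking the constants as the paper does (using $|\delta|^2 \geq V/\|b_N\|_2$ for the leading term and $|\delta|^2 \leq \|b_N^{-1}\|_2 V$ for the perturbation) gives $\dot V \leq -\lambda_N V$ with the paper's $\lambda_N$, not $-2\lambda_N V$ as you state; this yields $e^{-\lambda_N t/2}$ rather than $e^{-\lambda_N t}$ in \eqref{convergence result}, but such factors of $2$ do not affect the $N$-scaling that is the actual content of \eqref{quantit convergen in W2}.
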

 
%\begin{remark} For the study of convergence in time for large particle systems, it is natural to work with the so-called extensive functionals, \textit{i.e.} that are additive (proportional) in $N$, examples of which are the Wasserstein metric or the relative entropy. So the normalized Wasserstein distance is tensorized correctly and thus we can not do better for the constant in front of the exponential in \eqref{quantit convergen in W2} by working with the $L^2$-norm for example, which can imply geometric dependence on $N$: one then has to wait even more time in terms of $N$ before the convergence to NESS starts taking place.
%The $L^2$-norm can imply geometric dependence in N that will destroy eventually the convergence in time: one then has to wait time proportional to $N$ before the convergence in NESS starts taking place.
 %For more details (in the context of Kac's model), see the discussion just before the subsection 1.5 in \cite{MM13}.
 %\end{remark}

Moreover, in the set up of Theorem \ref{maintheorem}, we get some qualitative information about the non-equilibrium steady distribution, like the validity of a Poincar\'{e} inequality and even better, a Log-Sobolev inequality:

\begin{proposition}[Log-Sobolev inequality] \label{log sobolev inqlt} Let $\mathcal{L} $ be the generator of the dynamics described by the SDEs \eqref{eq: SDE}. Let $\Gamma$ be the Carr\'{e} du Champ operator defined in \eqref{Gamma operator}, while $\mathcal{T}$ the perturbed quadratic form defined in \eqref{perturbed Gamma operator}.
Assuming that we have a gradient estimate of the form \eqref{strong gradient bound}, we obtain that the unique invariant measure $\mu=f_{\infty} $ from the Theorem \ref{maintheorem} satisfies a Log-Sobolev inequality $(LSI(C_N) )$ : 
\begin{equation} \label{LSI} \int_{\mathbb{R}^{2N}} f \log f\ d\mu - \int_{\mathbb{R}^{2N}} f\ d\mu\ \log \left( \int_{\mathbb{R}^{2N}} f\ d\mu \right) \leq C_N \int_{\mathbb{R}^{2N}} \frac{\mathcal{T}(f,f)}{f} d\mu.
\end{equation}
where $$C_N=\frac{ T_L\| b_N^{-1} \|_{2}}{2\lambda_N}.$$
\end{proposition}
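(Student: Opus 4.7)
The plan is to follow the classical Bakry--Émery route from a strong (logarithmic form) gradient bound to a log-Sobolev inequality, in the non-symmetric hypocoercive setting here via the perturbed carré du champ $\mathcal{T}$. First I would derive the standard entropy dissipation identity: the chain rule $\mathcal{L}(f\log f) = (1+\log f)\mathcal{L}f + \Gamma(f,f)/f$ combined with invariance of $\mu$ under $\mathcal{L}$ yields
\[
\frac{d}{dt}\mathrm{Ent}_\mu(P_t f) \;=\; -\int \frac{\Gamma(P_t f,P_t f)}{P_t f}\,d\mu,
\]
the antisymmetric transport part of $\mathcal{L}$ vanishing when integrated against $\mu$. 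Since Theorem \ref{maintheorem} gives $P_t f\to \int f\,d\mu$ (and hence $\mathrm{Ent}_\mu(P_t f)\to 0$), integrating in time yields
\[
\mathrm{Ent}_\mu(f) \;=\; \int_0^\infty\!\!\int \frac{\Gamma(P_t f,P_t f)}{P_t f}\,d\mu\,dt.
\]

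Next I would bring in the two ingredients linking the degenerate $\Gamma$ to the non-degenerate $\mathcal{T}$. A pointwise comparison
\[
\Gamma(f,f) \;\le\; T_L\,\|b_N^{-1}\|_2\,\mathcal{T}(f,f)
\]
should follow from the definition of $\mathcal{T}$ as the quadratic form twisted by $b_N$, together with $b_N\ge \|b_N^{-1}\|_2^{-1}I$; the temperature $T_L$ enters because $\Gamma$ reads off the thermal noise coefficients $\gamma T_L,\gamma T_R$. Then the assumed strong gradient bound \eqref{strong gradient bound}, together with a Cauchy--Schwarz step inside the semigroup,
\[
\bigl(P_t\sqrt{\mathcal{T}(f,f)}\bigr)^2 \;=\; \Bigl(P_t\bigl[\sqrt{\mathcal{T}(f,f)/f}\,\sqrt f\,\bigr]\Bigr)^2 \;\le\; P_t f \cdot P_t\!\left(\frac{\mathcal{T}(f,f)}{f}\right),
\]
upgrades it to the logarithmic (chain-rule) form
\[
\frac{\mathcal{T}(P_t f,P_t f)}{P_t f} \;\le\; e^{-2\lambda_N t}\,P_t\!\left(\frac{\mathcal{T}(f,f)}{f}\right).
\]

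Combining the two, integrating against $\mu$ and using invariance to remove the outer $P_t$, and then computing $\int_0^\infty e^{-2\lambda_N t}\,dt = 1/(2\lambda_N)$, produces precisely the stated constant:
\[
\mathrm{Ent}_\mu(f) \;\le\; T_L\,\|b_N^{-1}\|_2\,\int_0^\infty e^{-2\lambda_N t}\,dt\,\int\frac{\mathcal{T}(f,f)}{f}\,d\mu \;=\; \frac{T_L\,\|b_N^{-1}\|_2}{2\lambda_N}\int\frac{\mathcal{T}(f,f)}{f}\,d\mu.
\]

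The genuine substance of the proof lives entirely in the strong gradient bound \eqref{strong gradient bound} (supplied from Section~3 via the Baudoin--Bakry--Émery curvature condition on the tensor $\mathcal{T}$); once that input is granted, the argument above is essentially syntactic. The main technical care needed is to justify the chain rule and the limit $\mathrm{Ent}_\mu(P_t f)\to 0$ for general nonnegative $f\in L^1(\mu)$: this is handled by first proving the inequality for smooth compactly supported densities bounded away from $0$ and $\infty$ (using the hypoelliptic smoothing of $P_t$ and the ergodicity from Theorem \ref{maintheorem}), then extending by a standard approximation/density argument.
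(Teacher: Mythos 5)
Your proof is correct and uses essentially the same ingredients as the paper's: the strong gradient bound \eqref{strong gradient bound}, the comparison $\Gamma \leq T_L\|b_N^{-1}\|_2\,\mathcal{T}$, a Cauchy--Schwarz/Jensen step to pass to $\mathcal{T}(f,f)/f$, and the time integral $\int_0^\infty e^{-2\lambda_N t}\,dt = 1/(2\lambda_N)$ giving the constant $C_N$. The only organizational difference is that the paper differentiates the Bakry interpolation functional $H(s)=P_s\big(P_{t-s}f\log P_{t-s}f\big)$ and first obtains a pointwise (``local'') log-Sobolev inequality before integrating against $\mu$ and letting $t\to\infty$, whereas you integrate against the invariant measure from the outset via the entropy dissipation identity; both lead to the same $C_N = T_L\|b_N^{-1}\|_2/(2\lambda_N)$.
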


 Consequently we have convergence to NESS in Entropy.
 Let us first define the following information-theoretical functionals: the Boltzmann $H$ functional \begin{equation} \label{Boltzmann functional}
H_{\mu}(\nu)=\int_{\mathbb{R}^{2N}} h \log h\ d\mu,\ \nu=h \mu
\end{equation}
and the relative Fisher information
\begin{equation}  \label{Fisher info}
I_{\mu}(\nu)= \int_{\mathbb{R}^{2N}} \frac{| \nabla h |^2}{h} d \mu,\ \nu=h \mu.
\end{equation}
Defining, for $f \in \C^{\infty}(\mathbb{R}^{2N})$, the functional $$\mathcal{E}(f) := \int_{\mathbb{R}^{2N}} \left( f \log f+ f\  \mathcal{T}( \log f, \log f)  \right) d\mu,$$ we have entropic convergence in the following sense, as in \cite[Section 6]{Villani09}:

 \begin{theorem}\label{converg in entropy} We consider a chain of coupled oscillators whose dynamics are described by the system \eqref{eq: SDE} and the Hamiltonian is given by \eqref{Hamiltonian} under the assumptions on the potentials given by \eqref{assumptions on the potentials}: we assume bounded perturbations of the harmonic chain with bounds depending on $N$, \eqref{behaviour of bounds}, so that $$ \|b_N\|_2^{-1} - 2 \big( C_{pin}(N)+C_{int}(N) \big)\|b_N\|_2 >0  . $$  Here
  $b_N$ is the matrix that solves the equation \eqref{tupos gia b'}. For a fixed number of particles $N$,  assuming that (i) $\mu$ is the invariant measure for $P_t$ and (ii) that it satisfies a Log-Sobolev inequality with constant $C_N>0$,  for all $f$ with $\mathcal{E}(f)< \infty $ i.e. the initial data have finite relative entropy with respect to $\mu$, we have a convergence to NESS in Entropy: \begin{equation}
\mathcal{E}(P_t f) \leq e^{- \frac{\lambda_N}{C_N} t} \mathcal{E}(f) 
\end{equation} 
where $\lambda_N$ is the constant in the estimate \eqref{T_2>cT}  $$ \lambda_N =  \| b_N \|_{2}^{-1}  - 2(C_{pin}(N) + C_{int}(N) ) \|b_N\|_2 \| b_N^{-1}\|_2.$$  Thanks to the equivalence of $ \mathcal{T}$ and $ | \nabla_z|^2$ in \eqref{equiv T and grad}, we get the above convergence in the non-perturbed setting with equivalence-constant $  \operatorname{max}\left(1,\|b_N^{-1}\|_2 \right) \| b_N \|_{2}. $\\
In particular, both the Boltzmann entropy $H_{\mu}(P_tf \mu)$, given by \eqref{Boltzmann functional}, and the Fisher information $I_{\mu}(P_tf \mu)$, given by \eqref{Fisher info},  decay:

\begin{align} H_{\mu}(P_tf \mu ) + I_{\mu}(P_tf \mu) \le \frac{\|b_N\|_2}{\operatorname{min}\left(1,\|b_N^{-1}\|_2^{-1}\right)} e^{- \frac{\lambda_N}{C_N} t} \Big( H_{\mu}(f \mu ) + I_{\mu}(f \mu)   \Big)
\end{align}
%\begin{equation} I_{\mu}( P_t f \mu) = O\Big( \| b_N^{-1} \|_{2} \| b_N \|_{2} e^{ - \frac{\lambda_N}{C_N} t} \Big) \quad \text{and}\quad H_{\mu}(P_tf \mu ) = O\Big( e^{ - \frac{\lambda_N}{C_N} t} \Big)
%\end{equation}

 Estimates on $\|b_N\|_2$ regarding $N$ are given by the Proposition \ref{propos of induction} and so we conclude that \begin{align} H_{\mu}(P_tf \mu ) + I_{\mu}(P_tf \mu)  \le N^3 e^{-2\kappa_0 N^{-6} t} \Big( H_{\mu}(f \mu ) + I_{\mu}(f \mu)   \Big).
 \end{align}
\end{theorem}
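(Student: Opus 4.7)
The plan is to adapt the classical Bakry--\'Emery strategy to the perturbed carr\'e du champ $\mathcal{T}$ in this hypoelliptic, non-reversible setting: introduce the Lyapunov functional $\Phi(t):=\mathcal{E}(P_tf)$ and establish the differential inequality
\[\Phi'(t)\leq -\frac{\lambda_N}{C_N}\,\Phi(t),\]
after which Gr\"onwall's lemma immediately yields $\mathcal{E}(P_tf)\le e^{-(\lambda_N/C_N)t}\mathcal{E}(f)$. Everything else is devoted to producing this inequality and then translating it into the statement on $H_\mu+I_\mu$.

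To compute $\Phi'(t)$ I would write $\partial_tP_tf=\mathcal{L}P_tf$, use the invariance of $\mu$, and integrate by parts. The derivative of the entropy piece $\int P_tf\log P_tf\,d\mu$ reduces, after the usual chain-rule manipulation and cancellation of the antisymmetric transport part against $\mu$, to a dissipation of the form $-\int \mathcal{T}(P_tf,P_tf)/P_tf\,d\mu$; the derivative of the perturbed Fisher piece $\int P_tf\,\mathcal{T}(\log P_tf,\log P_tf)\,d\mu$ is then controlled by the strong gradient bound \eqref{strong gradient bound} established in Section 3, which propagates $\mathcal{T}$ along the semigroup at quantitative rate $2\lambda_N$. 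Invoking the log-Sobolev inequality \eqref{LSI} of Proposition~\ref{log sobolev inqlt} allows the entropy piece to be reabsorbed into the same dissipation at the cost of the factor $C_N$, and piecing the two bounds together gives precisely the differential inequality above with rate $\lambda_N/C_N$.

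To pass from the decay of $\mathcal{E}$ to the statement on $H_\mu(P_tf\mu)+I_\mu(P_tf\mu)$, I would apply the two-sided equivalence \eqref{equiv T and grad} between $\mathcal{T}(\cdot,\cdot)$ and the standard gradient $|\nabla_z\cdot|^2$, whose constants are controlled by $\|b_N\|_2$ and $\|b_N^{-1}\|_2$; this converts the perturbed Fisher information on both sides of the inequality into the classical one and produces the prefactor $\|b_N\|_2/\min(1,\|b_N^{-1}\|_2^{-1})$ in front of the exponential. Finally, inserting the $\mathcal{O}(N^3)$ estimate on $\|b_N\|_2$ from Proposition~\ref{propos of induction}, together with the formulas for $\lambda_N$ and $C_N$, produces the quantitative rate $\lambda_N/C_N \gtrsim N^{-6}$ and the $N^3$ prefactor.

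The main technical obstacle I expect is rigorously justifying the differentiation under the semigroup. Because $f\log f$ is singular at $f=0$ and the semigroup is only hypoelliptic (with no a priori lower bound on $P_tf$), a regularization argument is needed---typically replacing $f$ by $f+\varepsilon$ and sending $\varepsilon\downarrow 0$, possibly combined with a spatial mollification---and one must carefully keep track of the antisymmetric transport part of $\mathcal{L}$, which is not self-adjoint against the non-explicit NESS $\mu$, in order to verify that no spurious boundary or drift contributions survive in $\Phi'(t)$. Once these analytic issues are settled, the Bakry--\'Emery-type algebraic manipulations outlined above produce the result in the form claimed.
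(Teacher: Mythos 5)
Your overall Lyapunov strategy---differentiate $\Phi(t)=\mathcal{E}(P_tf)$ in $t$, obtain $\Phi'\leq -(\lambda_N/C_N)\Phi$, close by Gr\"onwall, and translate via the equivalence \eqref{equiv T and grad}---is morally the same argument the paper makes, just bookkept in forward time rather than through the interpolation functional $\Lambda(s)=P_s\big(P_{t-s}f\,\mathcal{T}(\log P_{t-s}f,\log P_{t-s}f)\big)$ on $[0,t]$; both are legitimate presentations of the same calculation, and the final step passing to $H_\mu+I_\mu$ is correct.

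However, there is a genuine error in how you identify the two dissipation mechanisms. The derivative of the entropy piece is
\begin{align*}
\frac{d}{dt}\int P_tf\log P_tf\,d\mu = -\int \frac{\Gamma(P_tf,P_tf)}{P_tf}\,d\mu,
\end{align*}
with the \emph{original} carr\'e du champ $\Gamma$ of $\mathcal{L}$, which sees only $\partial_{p_1}$ and $\partial_{p_N}$ --- not $-\int\mathcal{T}(P_tf,P_tf)/P_tf\,d\mu$ as you claim. The perturbed form $\mathcal{T}$ is not the carr\'e du champ of $\mathcal{L}$; it is related to $\mathcal{L}$ only indirectly, and there is no inequality $\Gamma\gtrsim\mathcal{T}$ because $\Gamma$ is degenerate. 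If your claimed $-\int\mathcal{T}/f\,d\mu$ dissipation held, the log-Sobolev inequality would immediately give $\frac{d}{dt}\int f\log f\,d\mu\leq -\frac{1}{C_N}\int f\log f\,d\mu$ with no need for the Fisher piece at all, which would trivialize the whole perturbed Bakry--\'Emery machinery; the reason the combined functional $\mathcal{E}$ is needed is precisely that the entropy alone cannot be dissipated through the degenerate $\Gamma$. Relatedly, the derivative of the perturbed Fisher piece $\int P_tf\,\mathcal{T}(\log P_tf,\log P_tf)\,d\mu$ is controlled not by the strong gradient bound \eqref{strong gradient bound} but by the twisted curvature condition $\mathcal{T}_2\geq\lambda_N\mathcal{T}$ of Proposition \ref{Prop of T_2>T} (applied after the diffusion identities $\mathcal{L}\log g=\mathcal{L}g/g-\Gamma(\log g)$ and $\mathcal{T}(h,\Gamma(h,h))=\Gamma(h,\mathcal{T}(h,h))$ cancel the cross terms), giving $\frac{d}{dt}\int g\,\mathcal{T}(\log g)\,d\mu=-2\int g\,\mathcal{T}_2(\log g)\,d\mu\leq -2\lambda_N\int g\,\mathcal{T}(\log g)\,d\mu$; the strong gradient bound is the ingredient in the \emph{proof of the LSI} (Proposition \ref{log sobolev inqlt}), not of this differential inequality. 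Once the Fisher dissipation is established from the curvature condition, the LSI trades half of it for entropy decay and the differential inequality closes, but as written your account attributes the wrong dissipation to the entropy and the wrong lemma to the Fisher piece, and neither can be fixed by a superficial change.
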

Let us motivate in the next Remark the advantages of working in entropy when studying convergence in times for many particle systems. In short Relative Entropy and the Wasserstein metrics behave good with the dimension, whereas $L^2$- norm provides very bad, in $N$, estimates. 
\begin{remark}[Chaoticity and the Perks of Entropy/Wasserstein]
In general, in order to study convergence to equilibrium/NESS for many particle systems, it is natural to work with the so-called \textit{extensive} functionals which are functionals subadditive or even additive (proportional) in $N$. Typical examples of extensive functionals are the relative Entropy and the Wasserstein distances. In particlular, for the Entropy when we work with de-correlated data $f_t^{\otimes N}$ (i.e. chaotic data) we can write $$ H(f_t^{\otimes N})=NH(f_t) $$ and for the Wasserstein-2 distance $$   W_2(f_t^{\otimes N}, g_t^{\otimes N}) \le N^{1/2} W_2(f_t,g_t)$$ where for the second inequality we used that for $ k\ge 1$ and $a_i$ nonnegative constants, $(a_1+\cdots+a_n)^k \le n^{k-1}(a_1^k+ \cdots+a_n^k).$
So that if we have convergences in times  of the following form $$ H(f_t^{\otimes N}) \le C_N e^{-\lambda_N t}H(f_0^{\otimes N}) =  N C_N e^{-\lambda_N t} H(f_0)  \le C_1 N C_N e^{-\lambda_N t} $$ and $$ W_2(f_t^{\otimes N}, g_t^{\otimes N}) \le \tilde{C}_N e^{-\tilde{\lambda}_N t} W_2(f_0^{\otimes N}, g_0^{\otimes N}) \le C_2 N^{1/2} \tilde{C}_Ne^{-\tilde{\lambda}_N t}  $$
where $C_1,C_2$ do not depend on $N$. 
Whereas an $L_2$-estimate would give $$ \|f_0^{\otimes N} \|_2 \sim \| f_0\|_2^N.$$ So the convergence would take the form $$ \|f_t^{\otimes N}\|_2 \le C_N' e^{-\lambda_N' t} \|f_0^{\otimes N}\|_2 \leq \|f_0\|_2^NC_N' e^{-\lambda_N' t} .  $$
Then we need to wait until $ t \ge \frac{N \operatorname{ln}( \|f_0\|_2 )}{\lambda_N}$ and only then the convergence to equilibrium will start taking place. 
\end{remark}

From Theorem \ref{maintheorem} we get  an exponential rate of order bigger than $N^{-3}$. In the purely harmonic case, we have rate of order $1/N^{\gamma}$ where $ 1 \leq \gamma \leq 3.$\\

 We mention finally the following Proposition that gives directly the same  lower bound on the spectral gap (given the estimates on $\|b_N\|_2$ by Proposition \ref{propos of induction}): 
\begin{proposition}[Lower bound on the spectral gap of the harmonic chain] \label{Veselic} It is true that the spectral gap of the harmonic chain $\rho$ has a lower bound $$\rho \gtrsim \mathcal{O}(N^{-3} ) . $$ 
  \end{proposition}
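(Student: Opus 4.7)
The plan is to reduce the statement to a bound on the real parts of the eigenvalues of the drift matrix $M$ and then extract such a bound directly from the Lyapunov equation of Proposition \ref{propos of induction}.

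In the purely harmonic setting ($U_{\text{pin}} \equiv U_{\text{int}} \equiv 0$) the SDE \eqref{eq: SDE} becomes the linear system $dZ_t = -MZ_t\,dt + \sqrt{2\Gamma\Theta}\,dW_t$, so the process is a Gaussian Ornstein--Uhlenbeck process with Gaussian invariant measure $\mu$. For such processes the $L^2(\mu)$-spectrum of the generator is classical: it consists of the non-positive reals of the form $-\sum_i n_i \lambda_i$ with $n_i \in \mathbb{N}$ and $\{\lambda_i\}$ the spectrum of $M$ (see e.g. Metafune--Pallara--Priola, or Arnold--Erb for the closely related Kramers case). Consequently the spectral gap equals $\rho = \min_i \operatorname{Re}(\lambda_i)$. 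This identification is the only step in which one has to address the hypoelliptic, non-self-adjoint nature of $\mathcal{L}$; in the Gaussian regime it is standard.

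Once the reduction is in place, the bound becomes a one-line linear-algebra exercise. Let $(\lambda, v)$ be an eigenpair of $M$ with $v \in \mathbb{C}^{2N}$. Since $b_N$ is Hermitian and positive, testing the Lyapunov identity $b_N M + M^T b_N = \Pi_N$ against $v$ gives
\[2\operatorname{Re}(\lambda)\,\langle v, b_N v\rangle \;=\; \langle v, \Pi_N v\rangle \;\geq\; \|v\|^2,\]
since every diagonal entry of $\Pi_N$ is at least $1$. Combined with $\langle v, b_N v\rangle \leq \|b_N\|_2\|v\|^2$ this yields $\operatorname{Re}(\lambda) \geq 1/(2\|b_N\|_2)$ uniformly over eigenpairs, hence $\rho \geq 1/(2\|b_N\|_2)$. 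Plugging in $\|b_N\|_2 = \mathcal{O}(N^3)$ from Proposition \ref{propos of induction} delivers $\rho \gtrsim N^{-3}$.

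The main obstacle is cosmetic rather than substantive: selecting the correct spectral framework for the hypoelliptic Gaussian generator so that ``spectral gap'' really means $\min_i \operatorname{Re}(\lambda_i)$. As an internal sanity check, the same lower bound can be extracted from Theorem \ref{maintheorem}: in the unperturbed case one has $\lambda_N = \|b_N\|_2^{-1}$, and for a Gaussian semigroup the resulting $W_2$-contraction rate coincides with the $L^2(\mu)$ spectral gap, giving another route to $\rho \gtrsim N^{-3}$ that bypasses the spectral identification altogether.
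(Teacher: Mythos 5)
Your proof is correct, and it takes a cleaner, more self-contained route than the paper. The paper invokes a matrix-exponential decay estimate from Vesel\'i\'c (\cite[Inequality (13)]{Ves03}), namely $\|e^{-Mt}\|^2 \leq \|b_N\|\,\|b_N^{-1}\|\,e^{-t/\|b_N\|}$, and then specializes it to an eigenvector $u$ of $M$ to read off $\operatorname{Re}(\mu)\ge 1/(2\|b_N\|)$. You instead test the Lyapunov identity $b_N M + M^Tb_N = \Pi_N$ directly against a complex eigenvector $v$: since $M$ is real, $\langle v,(b_N M + M^T b_N)v\rangle = 2\operatorname{Re}(\lambda)\langle v,b_N v\rangle$ and $\langle v,\Pi_N v\rangle \ge c\|v\|^2$, which gives the same bound without any reference to $e^{-Mt}$ or $\|b_N^{-1}\|$. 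This is essentially the argument from which the Vesel\'i\'c bound is itself derived, so you have unrolled the cited inequality into a one-line linear-algebra computation — a genuine simplification that makes the $N^{-3}$ lower bound transparent. Your reduction from ``spectral gap'' to $\min_i\operatorname{Re}(\lambda_i)$ for the Gaussian Ornstein--Uhlenbeck semigroup is the same identification the paper makes in Proposition \ref{optimal sg for harmonic} (via Arnold--Erb/Monmarch\'e), so that step is consistent with the paper's framework. One small slip: $\Pi_N = \operatorname{diag}(2T_L,1,\dots,1,2T_R,1,\dots,1)$, so $\langle v,\Pi_N v\rangle \ge \min(1,2T_L,2T_R)\|v\|^2$ rather than $\|v\|^2$ unconditionally; this changes only the prefactor (not the $N$-dependence), so the conclusion $\rho \gtrsim N^{-3}$ is unaffected, but you should state the inequality with the correct constant.
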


\noindent
This lower bound is in fact the optimal rate in the case of the harmonic homogeneous chain.  In the work \cite{BM19} it is proven that $\rho = \mathcal{O}( N^{-3})$ by exploiting the form of the matrix $M$, \eqref{matrix M} and more specifically using  information on the spectrum of the discrete Laplacian. There, we study also the case of disordered chains. Unlike the homogeneous case here where the decay is \textit{polynomial}, in a disordered chain the spectral gap decays at an \textit{exponential rate} in terms of $N$. \\

\begin{remark}
Note that a  generalized version of $\Gamma$ calculus has been applied for a toy model of the dynamics \eqref{eq: SDE} by  P. Monmarch\'{e}, \cite{Mon15}: working with the unpinned, non-kinetic version, with convex interaction and given that the center of the mass is fixed, he proves the same kind of convergences and ends up with explicit and optimal $N$-dependent rates, of order $\mathcal{O}( N^{-2})$,  for the overdamped dynamics. 
\end{remark}

\subsection{Plan of the paper}  \hyperref[Section with Bakry-Emery theory]{Section 2}  contains an introduction to the
Bakry-Emery theory and an explanation of the method used. In \hyperref[section with functional ineqaualities]{Section 3} we obtain the estimates leading to the proof of Proposition \ref{log sobolev inqlt}. In \hyperref[Entropic convergence]{Section 4} and \hyperref[section with Wasserstein distance]{Section 5} we give the proof of Theorem \ref{converg in entropy} and Theorem \ref{maintheorem} respectively. Finally in \hyperref[section of induction]{Section 6} we prove Propositions \ref{propos of induction} and \ref{Veselic}.

\subsection{Notation} \label{Notation section} $\{e_i\}_{i=1}^n$  denote the elements of the canonical basis in $\mathbb{R}^n$ and  $| \cdot|$ to denote the Euclidean norm on $\mathbb{R}^n$, from the usual inner product $\langle \cdot, \cdot \rangle$. For a square matrix $A = (a_{ij})_{1 \leq i,j \leq n}  \in \mathbb{C}^{n \times n}$, we write  $\|A\|_2$ for the operator (spectral) norm, induced by the Euclidean norm for vectors :  $$ \| A \|_2 = \max_{x \in \mathbb{R}^n} \frac{ | Ax|_2}{|x|_2} = ( \text{maximum eigenvalue of } A^HA)^{1/2}  $$ and $A^*$ for the complex conjugate transpose $A^H=\bar{A}^T.$ 
We also write $A^{1/2}$ for the square root of a (positive definite) matrix $A$, \textit{i.e}. the matrix such that $A^{1/2} A^{1/2}=A$.   Moreover, by $C_b^{\infty}(\mathbb{R}^n)$ we denote the space of the smooth and bounded functions, by $\nabla_z$ we denote the gradient on $z$-variables in a metric space $X$ with respect to the Euclidean metric. We write $\mathcal{P}_2(\mathbb{R}^n)$ for the space of the probability measures on $\mathbb{R}^n$ that have second moment finite, \textit{i.e}. $$\mathcal{P}_2(\mathbb{R}^n) = \Big\{ \rho \in \mathcal{P}(\mathbb{R}^n): \int_{\mathbb{R}^n} |x|^2 d \rho(x) < \infty  \Big\}.$$
 $[N]$ denotes the set $\{1,2,\dots, N \}$.

%We use the above matrix notation only in the proof of Proposition \eqref{propos of induction}. 

\section{Carr\'{e} du Champ operators and curvature condition} \label{Section with Bakry-Emery theory}
Consider a Markov semigroup $P_t$ generated by an infinitesimal generator $\mathcal{L}: D(L) \subset L^2(\mu) \rightarrow L^2(\mu)$, where $\mu$ is the invariant measure of the dynamics. Here we restrict ourselves to the case of the diffusion operators and we associate a bilinear quadratic differential form $\Gamma$, the so-called \textit{Carr\'{e} du Champ} operator, and it is defined as follows: for every pair of functions $(f,g) $ in $ C^{\infty}\times C^{\infty}$ \begin{align}\label{Gamma operator} \Gamma(f,g):= \frac{1}{2} \Big( \mathcal{L}(fg) - f \mathcal{L}g-g \mathcal{L}f  \Big). \end{align}   
In other words $\Gamma$ measures the \textit{default of the distributivity} of $\mathcal{L}$.   \\                     
Then we can naturally define its iteration $\Gamma_2$, where instead of the multiplication we use the action of $\Gamma$:  \begin{equation} \Gamma_2(f,g):= \frac{1}{2} \Big( \mathcal{L}(\Gamma(f,g)) -\Gamma(f, \mathcal{L}g) - \Gamma(g, \mathcal{L}f)  \Big) \end{equation}
%Following Bakry-Emery's classical notation, we write $\Gamma(f):=\Gamma(f,f)$ and the same for $\Gamma_2$.\\

From the  theory of $\Gamma$-calculus we have that a curvature condition of the form \begin{equation} \label{classical Gamma_2 geq Gamma} \Gamma_2(f,f) \geq \lambda \Gamma(f,f) 
\end{equation} for all $f$ in a suitable algebra $\mathcal{A}$ dense in the $L^2(\mu)$ domain of $ \mathcal{L}$  and $\lambda >0$  is equivalent to the following gradient estimate $$ \Gamma \big( P_t f, P_t f \big) \leq e^{-2\lambda t} P_t (\Gamma (f,f)),\quad t \ge 0 $$ which implies a Log-Sobolev inequality (and thus Poincar\'{e} inequality), see \cite{BakEm83} or \cite[Section 3]{Bak04}.
\\

\noindent
Note that here the case is not the spatially homogeneous one and the generator of the dynamics described by \eqref{eq: SDE} is hypoelliptic, not elliptic, since the noise acts only on $2$ out of $2N$ variables of our phase space. So, the main problem in this case here is the lack of ellipticity, since the curvature condition \eqref{classical Gamma_2 geq Gamma} requires the ellipticity of the generator.  More specifically, for the operator \eqref{generator} we can not bound $\Gamma_2 $ by $ \Gamma$ from below since after explicit calculations we have the formulas \begin{align*} \Gamma(f,f) =2 \gamma_1 T_L (\partial_{p_1}f)^2 + 2  \gamma_N T_R (\partial_{p_N}f)^2\end{align*} while \begin{align*} \Gamma_2(f,f) = 2 (\gamma_1 T_L)^2 (\partial_{p_1}^2f)^2 + 2 (\gamma_N T_R)^2 (\partial_{p_N}^2f)^2 + 2 T_L \gamma_1^2 (\partial_{p_1}f) ( \partial_{q_1}f) + \\ + 2 T_R \gamma_N^2  (\partial_{p_N}f) ( \partial_{q_N}f) + \Gamma(f,f) .
\end{align*} Since we can not control the terms $ \partial_{p_i}f  \partial_{q_i}f, $ we can not bound $\Gamma_2$ from below by $\Gamma$. In cases like this, we say that the particle system has $ -\infty$  Bakry-Emery curvature. \\

\noindent
\subsection{Description of the method}
In order to overcome this problem,  we are doing the following:\\ 
 
\begin{enumerate} 
\item Firstly we perturb the classical $\Gamma$ theory, by defining a new quadratic form, different, but equivalent, to the $| \nabla_z f|^2$ that will play the role of the $\Gamma$ functional. This will spread the noise from $p_1$ and $p_N $ to all the other degrees of freedom as well. The general idea comes from Baudoin \cite{Bau17}.  We  make  a suitable choice of a positive definite matrix to define a new quadratic form that will replace the $\Gamma$ functional, so that we obtain a 'twisted' curvature condition: an estimate of the form \eqref{classical Gamma_2 geq Gamma}. This will imply also a perturbed gradient estimate, and thus a Poincar\'{e} and Log-Sobolev inequality. 

As Villani introduced in \cite{Villani09}, in order to deal with a hypocoercive situation in $H^1$- setting, one can perturb the norm to an equivalent norm, so that the desired convergence results can be deduced with this new norm. Then one can have convergence in the usual norm thanks to their equivalence. Here, instead of the norm, we perturb the gradient and thus the $\Gamma$ \textit{Carr\'{e} du Champ}, and work with a generalised $\Gamma$- theory. 

\item Second, in order to make our estimates quantitative we construct the matrix that we use to perturb the gradient. In  \hyperref[section of induction]{Section 6}, this matrix is constructed  as a solution of a sequence of \textit{continuous Lyapunov} equations and every step of the sequence corresponds to the spreading of noise and  dissipation to the next oscillator from both ends until the center of the chain. In fact this will be presented as follows: adding noise \textit{from the left to the right} until the $N$-th particle and \textit{from the right to the left} until the $1$-st particle, as shown in the Figure \ref{Diagram arrows}. These  steps will allow us to get estimates on how the rate of convergence to NESS behave with $N$. \\ 

\noindent
For those familiar with the method of H\"{o}rmander we describe briefly here the similarity with the spreading of dissipation-mechanism: in H\"{o}rmander's theory the  \textit{smoothing} mechanism is the one transferred through the interacting particles inductively by the use of commutators:  the generator has the form $$\mathcal{L}=X_0+X_1^2+X_N^2$$ where \begin{align*} X_0=p\cdot \nabla_q-\nabla_qH \cdot \nabla_p-\gamma p_1\partial_{p_1}- \gamma p_N \partial_{p_N}\quad \text{and}\quad X_i=\sqrt{T_i}\partial_{p_i} .\end{align*} Then $[\partial_{p_1},X_0]=-\partial_{p_1}+ \partial_{q_1} $. Now commuting $\partial_{q_1}$ with the first order terms of the generator: $[\partial_{q_1},X_0]= \partial_{q_1q_1}H \partial_{p_1}-\partial_{q_1q_2}H \partial_{p_2}$. Given that $\partial_{q_1q_2}H$ is non-vanishing we have 'spread the smoothing mechanism' to $p_2$. Continue like that, commuting the 'new' variable with the first order terms of $\mathcal{L}$,  inductively we will cover all the particles of the chain. \\  
\end{enumerate}
\begin{figure}
\[
\scalemath{1.2}{
\xymatrix{
p_1 \ar[d]  &p_2 \ar[d]
&\cdots& \cdots  & p_{N-1} \ar[d] & p_N \ar[d] \\
q_1 \ar[ru] & q_2  \ar[ru] & \cdots & \cdots &q_{N-1} \ar[lu] & q_N \ar[lu]}}
\]
\caption{Spreading of dissipation by commutators as in H\"{o}rmander's hypoellipticity theory. }\label{Diagram arrows}
\end{figure}
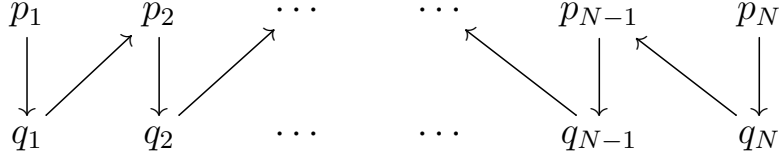
\medskip

\section{Functional inequalities in the perturbed setting } \label{section with functional ineqaualities}

\noindent
The goal is to apply a 'twisted' Bakry-Emery machinery, introduced by Baudoin in Section 2.6 of \cite{Bau17}. For that we define a perturbed quadratic form by using a positive definite matrix: this matrix is chosen as the solution of the following \textit{continuous Lyapunov equation} 

\begin{equation} \label{tupos gia b'} b_N M+M^Tb_N = \Pi_N .
\end{equation}
Here  $$M= \begin{bmatrix} \Gamma & -I \\ B & 0 \end{bmatrix}\  $$ and $\Pi_N$ is the  $2N \times 2N$ diagonal block matrix with strictly positive entries defined by $$ \Pi_N = \text{diag}(2 T_L, 1, \dots,1,2 T_R, 1, 1, \dots, 1,1). $$ 

The eigenvalues of $M$ have strictly positive real part (\cite[Lemma 5.1]{JPS17}) and the right hand side of \eqref{tupos gia b'} is positive definite. Therefore there exists a positive solution of \eqref{tupos gia b'}. This is a well known and classical result of Lyapunov  and it can be found for instance in \cite[page 224]{Gan59}: this equation first arose in connection with stability of linear dynamical systems. From A.M. Lyapunov's  monograph \cite[Section 20]{Lyap47}  follows a Theorem giving necessary and sufficient conditions for the eigenvalues of a real matrix to have negative real parts. \begin{theorem}[Lyapunov] Let an arbitrary negative definite quadratic form $W(z,z)$. There corresponds a positive definite quadratic form $V(z,z)$ such that if $$ \frac{d}{dt}z=-Mz\quad \text{then}\quad \frac{d}{dt} V(z,z)=W(z,z)$$ if and only if all the eigenvalues of $-M$ have negative real parts (that is $-M$ is stable) .
\end{theorem}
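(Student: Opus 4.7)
The plan is to translate the identity $\frac{d}{dt}V(z,z)=W(z,z)$ along trajectories of $\dot z=-Mz$ into a symmetric matrix equation, and then to either solve that equation explicitly using the spectral hypothesis on $-M$, or conversely to read off the spectral condition from the existence of a solution. Writing $V(z,z)=\langle Pz,z\rangle$ and $W(z,z)=\langle Qz,z\rangle$ with $P$ symmetric positive definite and $Q$ symmetric negative definite, a direct differentiation along $\dot z=-Mz$ gives
\begin{equation*}
\frac{d}{dt}\langle Pz,z\rangle=-\langle(M^T P+PM)z,z\rangle,
\end{equation*}
so the dynamical requirement is equivalent to the algebraic Lyapunov equation $M^T P+PM=-Q$. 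This is exactly the same kind of equation as \eqref{tupos gia b'}, which makes the theorem the right abstract backbone for Proposition \ref{propos of induction}.

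For the sufficient direction, assume every eigenvalue of $-M$ has strictly negative real part. I would propose the explicit candidate
\begin{equation*}
P:=\int_0^{\infty} e^{-M^T t}(-Q)\,e^{-Mt}\,dt.
\end{equation*}
The spectral hypothesis yields an estimate $\|e^{-Mt}\|\leq C e^{-\delta t}$ for some $\delta>0$, so the integral converges absolutely. Symmetry of $P$ is immediate, and positive-definiteness follows because $-Q\succ 0$ and $e^{-Mt}z\neq 0$ for $z\neq 0$. The fact that $P$ solves the Lyapunov equation is then just the fundamental theorem of calculus applied to $t\mapsto e^{-M^T t}(-Q)e^{-Mt}$: its derivative equals $-M^T e^{-M^T t}(-Q)e^{-Mt}-e^{-M^T t}(-Q)e^{-Mt}M$, and integrating from $0$ to $\infty$ produces $-M^T P-PM$ on one side and $-(-Q)=Q$ on the other (the boundary term at $+\infty$ vanishing thanks to the exponential decay).

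For the necessary direction, assume a positive definite $P$ solving $M^T P+PM=-Q$ exists, and choose $\alpha,\beta>0$ with $-Q\succeq \alpha I$ and $P\preceq\beta I$. Along any trajectory $z(t)=e^{-Mt}z_0$,
\begin{equation*}
\frac{d}{dt}\langle Pz,z\rangle=-\langle(-Q)z,z\rangle\leq -\frac{\alpha}{\beta}\langle Pz,z\rangle,
\end{equation*}
so Grönwall forces $\langle Pz(t),z(t)\rangle\to 0$ and hence $z(t)\to 0$ for every initial $z_0$. This in turn means $e^{-Mt}\to 0$ as $t\to\infty$, which is equivalent to every eigenvalue of $-M$ having strictly negative real part.

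No serious obstacle is expected; this is the classical Lyapunov construction. The only step that deserves care is the two-sided use of the exponential bound $\|e^{-Mt}\|\leq Ce^{-\delta t}$, both to make the integral converge in the forward direction and to promote pointwise decay of $z(t)$ to the spectral conclusion in the backward direction. One should also verify that sign conventions (negative definite $W$ versus positive definite $-Q$, and $-M$ stable versus $M$ having eigenvalues with positive real part) are lined up correctly throughout; after that the proof is essentially a one-page argument.
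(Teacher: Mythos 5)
Your proof is correct and is the standard textbook argument for Lyapunov's theorem. Note that the paper itself does not prove this statement: it is cited as a classical result (Gantmacher \cite{Gan59}, Lyapunov \cite{Lyap47}), so there is no ``paper's own proof'' to compare against — you are supplying the missing argument.

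Both the translation to the matrix equation $M^T P+PM=-Q$ and the two directions are handled cleanly: the integral $P=\int_0^{\infty}e^{-M^T t}(-Q)e^{-Mt}\,dt$ converges by the exponential bound and solves the equation by the fundamental theorem of calculus; conversely, given a positive definite solution, the Gr\"onwall argument forces $e^{-Mt}z_0\to 0$ for every $z_0$, which in finite dimensions is equivalent to stability of $-M$. This is fully consistent with the integral representation the paper relies on later for $b_i$ (Section 6). One small remark worth flagging for your own records: the paper writes that representation as $b_i=\int_0^{\infty}e^{-tM}\,\Pi_i\,e^{-tM^T}\,dt$, but differentiating the integrand shows this solves $Mb_i+b_iM^T=\Pi_i$ rather than $b_iM+M^Tb_i=\Pi_i$; the ordering matching the paper's Lyapunov equation \eqref{tupos gia b'} is the one you use, $b_i=\int_0^{\infty}e^{-tM^T}\,\Pi_i\,e^{-tM}\,dt$. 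This is a bookkeeping (transpose) matter and does not affect the truth of the theorem, but it is good that your derivation gets it right.
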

 If one chooses $W(z,z)=-z\ \Pi_N\ z^T$,  a matrix reformulation of the above Theorem gives necessary and sufficient condition for existence of positive solution of \eqref{tupos gia b'}. \\
%From the equation \eqref{tupos gia b'} we can see that $b$ should be symmetric as well.\\  

\noindent
Define the following quadratic quantity for $f,g \in C^{\infty}(\mathbb{R}^{2N})$, \begin{align}\label{perturbed Gamma operator} \mathcal{T}(f,g):=  \nabla_z f\  b_N \nabla_z g^T + \nabla_z g\  b_N \nabla_z f^T \end{align} so that $$  \mathcal{T}(f,f)=2 \nabla_zf\ b_N \nabla_zf^T.$$ Then the functional $$ \mathcal{T}_2(f,f)= \frac{1}{2} \Big( \mathcal{L} \mathcal{T}(f,f) - 2\mathcal{T}(f, \mathcal{L} f) \Big).$$

 \noindent
 Here $\mathcal{T}(f,f)$ is always positive since $b_N \geq 0$  in contrast with the original operator $\Gamma$, our perturbed quadratic form $\mathcal{T}$ is related to $\mathcal{L}$ only indirectly through the different steps of commutators. 
\\

We have an equivalence of the following form between $ \mathcal{T}$ and $| \nabla_z|^2$: \begin{equation} \label{equiv T and grad} \frac{1}{ \| b_N^{-1} \|_{2}} | \nabla_zf|^2 \leq \mathcal{T}(f,f) \leq  \| b_N \|_{2} | \nabla_z f|^2. \end{equation}

\begin{proposition} \label{Prop of T_2>T} With the above notation and for a fixed number of particles $N$, there exists constant $\lambda_N$ that depends on the spectral norm of the matrix $b_N$ and the bounds of the perturbing potentials $C_{pin}, C_{int}$ such that for $f \in C^{\infty}(\mathbb{R}^{2N})$, $$ \mathcal{T}_2(f,f) \geq \lambda_N \mathcal{T}(f,f). $$ 
\end{proposition}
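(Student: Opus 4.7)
The plan is to decompose $\mathcal{L} = X_0 + X_d + X_P$, where $X_0 := -zM \cdot \nabla_z$ is the harmonic drift, $X_d := \gamma T_L \partial_{p_1}^2 + \gamma T_R \partial_{p_N}^2$ is the boundary diffusion, and $X_P := -\nabla_q \Phi \cdot \nabla_p$ is the anharmonic first-order perturbation. Because $\mathcal{T}_2$ is additive in $\mathcal{L}$, I compute the contribution of each piece to $\mathcal{T}_2(f,f) = \tfrac{1}{2}[\mathcal{L}\mathcal{T}(f,f) - 2\mathcal{T}(f,\mathcal{L}f)]$ separately, tracking in each case the commutator of the operator with the full gradient $\nabla_z$. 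The whole argument is engineered around the fact that $b_N$ was built to solve the Lyapunov equation \eqref{tupos gia b'}: this identity is exactly what will produce a coercive quadratic form out of the transport and spread the dissipation, compensating for the fact that the noise sees only the two end momenta.

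For the harmonic drift, linearity of $X_0$ in $z$ gives $[\partial_i, X_0] = -\sum_j M_{ij}\partial_j$, equivalently $X_0(\nabla_z f) = \nabla_z(X_0 f) + M\nabla_z f$. Substituting into $\tfrac12 [X_0 \mathcal{T}(f,f) - 2\mathcal{T}(f, X_0 f)]$, the cross terms involving $\nabla_z(X_0 f)$ cancel by symmetry of $b_N$; what remains is proportional to $\nabla_z f\cdot M^T b_N \nabla_z f$, and symmetrising and applying \eqref{tupos gia b'} turns this into exactly $\nabla_z f \cdot \Pi_N \nabla_z f$. Since $\Pi_N$ is diagonal with strictly positive entries, this is bounded below by $c_0\,|\nabla_z f|^2$ for $c_0 = \min(1, 2T_L, 2T_R) > 0$.

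For the diffusion, the constant coefficients of $X_d$ give $[X_d, \nabla_z] = 0$, so the only surviving piece from the second-order Leibniz rule is the pure carré-du-champ expression
\[
2\gamma T_L (\partial_{p_1}\nabla_z f) \cdot b_N (\partial_{p_1}\nabla_z f) + 2\gamma T_R(\partial_{p_N}\nabla_z f)\cdot b_N (\partial_{p_N}\nabla_z f) \;\ge\; 0,
\]
which I drop when lower-bounding $\mathcal{T}_2$. For the perturbation $X_P$ the relevant commutators are $[\partial_{p_j}, X_P] = 0$ and $[\partial_{q_j}, X_P] = -\sum_i (H_\Phi)_{ij}\partial_{p_i}$, where $H_\Phi := \text{Hess}_q \Phi$. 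Writing $[\nabla_z, X_P]f = K\nabla_z f$ with $K$ the $2N\times 2N$ block matrix carrying $-H_\Phi$ in the $(q,p)$-slot and zero elsewhere, the contribution to $\mathcal{T}_2$ is $-\nabla_z f\cdot (K^T b_N + b_N K)\nabla_z f$, controlled in absolute value by $2\|K\|_2 \|b_N\|_2 |\nabla_z f|^2$. The hypothesis \eqref{assumptions on the potentials}, together with the nearest-neighbour tridiagonal structure of the couplings, gives $\|H_\Phi\|_2 \lesssim C_{pin}(N) + C_{int}(N)$.

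Summing the three contributions and using the equivalence \eqref{equiv T and grad} to translate $|\nabla_z f|^2$ back to $\mathcal{T}(f,f)$ (upper bound by $\|b_N^{-1}\|_2\,\mathcal{T}(f,f)$ on the negative perturbation term, lower bound by $\|b_N\|_2^{-1}\,\mathcal{T}(f,f)$ on the positive Lyapunov term), one reads off $\mathcal{T}_2(f,f) \ge \lambda_N \mathcal{T}(f,f)$ with $\lambda_N$ of the advertised form $\|b_N\|_2^{-1} - 2(C_{pin}+C_{int})\|b_N\|_2\|b_N^{-1}\|_2$ stated in Theorem \ref{maintheorem}. The main obstacle is not computational — the commutator bookkeeping is routine once $K$ has been identified — but structural: one needs $b_N$ to be both positive-definite and to satisfy the Lyapunov identity, so that $M^T b_N + b_N M$ becomes a strictly positive diagonal matrix rather than a sign-indefinite quadratic form, and the operator norm $\|b_N\|_2$ must be controlled quantitatively (the content of Proposition \ref{propos of induction}). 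Smallness of $C_{pin} + C_{int}$ relative to $\|b_N\|_2^{-1}$ as assumed in the main theorems then ensures $\lambda_N > 0$.
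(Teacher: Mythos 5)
Your proposal is correct and follows essentially the same route as the paper: split $\mathcal{L}$ into transport, boundary diffusion, and anharmonic pieces; use the Lyapunov identity $b_NM + M^Tb_N = \Pi_N$ to turn the transport contribution into a strictly positive diagonal form; drop the nonnegative $\mathcal{T}(\partial_{p_i}f,\partial_{p_i}f)$ terms; bound the Hessian perturbation by $2(C_{pin}+C_{int})\|b_N\|_2|\nabla_zf|^2$; and convert back to $\mathcal{T}$ via \eqref{equiv T and grad}. Your commutator bookkeeping ($[\nabla_z,X_0]=-M\nabla_z$, $[\nabla_z,X_P]=K\nabla_z$) is a cleaner packaging of the same cancellations the paper works out term-by-term, and you are in fact slightly more careful than the paper at one point by naming $c_0=\min(1,2T_L,2T_R)$ rather than silently lower-bounding $\Pi_N$ by $I$.
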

\begin{proof}   
Using the form of the generator $\mathcal{L}$ as in \eqref{generator} : $$
\mathcal{L} = -zM \cdot \nabla_z -\nabla_q \Phi(q) \cdot \nabla_p + \gamma T_L \partial_{p_1}^2 + \gamma T_R \partial_{p_N}^2 $$ where $\Phi$ is the function that corresponds to the perturbing potentials,
 we write \begin{align*} 2 \mathcal{T}_2(f,f) &= \mathcal{L} \mathcal{T}(f,f)- 2 \mathcal{T}(f,\mathcal{L}f) = \mathcal{L} \mathcal{T}(f,f)- 2\nabla_zf\ b_N \nabla_z \mathcal{L}f^T - 2\nabla_z \mathcal{L} f\ b_N \nabla_z f^ T
 \end{align*}

From the  $(-zM \cdot \nabla_z)$- part of $\mathcal{L}$, the calculations are \begin{align*} -2 z M \nabla_z \nabla_z f\  &b_N \nabla_zf^T - 2\nabla_z f\ b_N  (z M \nabla_z \nabla_z f)^T\\ &+ 2\nabla_z f b_N (\nabla_z( z M \cdot \nabla_z f))^T+ 2\nabla_z( z M \cdot \nabla_z f) b_N \nabla_z f^T \\ &=  -2 z M \nabla_z \nabla_z f\  b_N \nabla_zf^T - 2\nabla_z f\ b_N  (z M \nabla_z \nabla_z f)^T\\ & +2 \nabla_zf\ b_N M\ \nabla_zf^T+2\nabla_z f\ b_N\ (zM \nabla_z \nabla_z f)^T \\ & + 2 \nabla_zf\ M^Tb_N\ \nabla_zf^T +2 zM\nabla_z \nabla_zf\ b_N\ \nabla_zf^T .\end{align*} 
The first term is cancelled with the last one and the second with the fourth one. \\
Similarly, from the $(-\nabla_q \Phi(q) \cdot \nabla_p)$- part of $\mathcal{L}$ we have  \begin{align*}-2 \nabla_q &\Phi \nabla_p \nabla_z f\ b_N \nabla_zf^T- 2\nabla_zf\ b_N (\nabla_q \Phi \nabla_p \nabla_z f)^T \\ &+ 2\nabla_zf\ b_N  ( \nabla_z f \nabla_z  \nabla_q \Phi )^T + 2\nabla_zf\ b_N (\nabla_q \Phi \nabla_z \nabla_p f)^T \\ &+ 2\nabla_q \Phi \nabla_p \nabla_z f\ b_N\ \nabla_zf^T +2\nabla_zf \nabla_z \nabla_q \Phi\ b_N\ \nabla_zf^T. \end{align*}
The first term is cancelled with the fifth one and the second with the fourth one.

Finally for the second order terms of the generator we write \begin{align*}
 4 \gamma T_L\ & \nabla_z \partial_{p_1} f\ b_N\  \nabla_z \partial_{p_1} f^T + 2\gamma T_L  \nabla_z \partial_{p_1}^2 f\ b_N\ \nabla_zf^T + 2\gamma T_L \nabla_zf\ b_N\  \nabla_z \partial_{p_1}^2 f^T \\&-2 \gamma T_L \nabla_z f\ b_N\  \nabla_z \partial_{p_1}^2 f^T- 2\gamma T_L \nabla_z \partial_{p_1}^2f\  b_N\  \nabla_zf^T \\ &+ 4 \gamma T_R\  \nabla_z \partial_{p_N} f\ b_N\ \nabla_z  \partial_{p_N} f^T + 2\gamma T_R  \nabla_z \partial_{p_N}^2f\ b_N\ \nabla_zf^T + 2\gamma T_R \nabla_zf\ b_N\ \partial_{p_N}^2 \nabla_zf^T \\ &- 2 \gamma T_R \nabla_z f\ b_N\  \nabla_z \partial_{p_N}^2f^T - 2\gamma T_R \nabla_z \partial_{p_N}^2f\ b_N\  \nabla_zf^T.
\end{align*}

\noindent
%For the first two parts of the generator we just apply the chain rule, while for the terms that are due to the noise (the second order terms of the generator) we just used that they commute with $\nabla_z$.
 We eventually end up with 
\begin{align*} 
\mathcal{T}_2&(f,f) = \nabla_zf\ b_N  M\ \nabla_zf^T + \nabla_zf\ M^Tb_N\  \nabla_zf^T + \nabla_zf\ b_N \text{Hess} (\Phi)^T\ \nabla_zf^T \\&+ \nabla_zf\ \text{Hess}(\Phi)  b_N\ \nabla_zf^T   + 2 \gamma T_L \mathcal{T}( \partial_{p_1} f,\partial_{p_1} f) + 2\gamma T_R \mathcal{T}( \partial_{p_N}f,\partial_{p_N}f) \\ & \geq  \nabla_zf\ (b_N  M + M^Tb_N)  \nabla_zf^T + \nabla_zf\ b_N \big( \text{Hess}(U_{pin}) + \text{Hess}(U_{int}) \big) \nabla_zf^T \\ &+ \nabla_zf\ \big( \text{Hess}(U_{pin})+ \text{Hess}(U_{int}) \big)^T b_N \nabla_zf^T \\ &=  \nabla_zf\ (b_N M+ M^T b_N) \nabla_zf^T + \nabla_zf\  ( b_N \text{Hess}(U_{pin})+ \text{Hess}(U_{pin})^T b_N ) \nabla_zf^T \\ &+ \nabla_zf\ \big( b_N \text{Hess}(U_{int})+ \text{Hess}(U_{\text{int}})^T b_N \big) \nabla_zf^T   
\end{align*}
where  for the second inequality we used that the terms $\mathcal{T}( \partial_{p_i} f,\partial_{p_i} f)$ for $i=1,N$, are positive. From the boundedness assumption on the operator norms of the Hessians for both perturbing potentials, and using that $b_N$  solves  the equation \eqref{tupos gia b'}, we get the following 
\begin{align*}
&\mathcal{T}_2(f,f)  \geq  \nabla_zf\ \Pi_N \nabla_zf^T - 2\| b_N\ \text{Hess}(U_{pin}) \|_2 | \nabla_z f|^2- 2\| b_N\ \text{Hess}(U_{int}) \|_2 | \nabla_zf|^2  \\ &\geq  \nabla_zf \Pi_N \nabla_z f^T - 2\| b_N \|_2 \sup_z \| \text{Hess}(U_{pin}(z)) \|_2 | \nabla_z f|^2 - \\ &-2\| b_N \|_2 \sup_z \|\text{Hess}(U_{int})(z) \|_2 | \nabla_zf|^2 \\ &\geq  | \nabla_z f |^2  -2 (C_{pin}(N) +C_{int}(N)) \| b_N\|_{2} |\nabla_zf|^2 \\ &\geq   \| b_N \|_2^{-1} \mathcal{T}(f,f) - 2 (C_{pin}(N) +C_{int}(N)) \|b_N \|_2 \| b_N^{-1}\|_2 \mathcal{T}(f,f).
\end{align*}

We conclude that \begin{equation} \label{T_2>cT}
\mathcal{T}_2(f,f) \geq \lambda_N \mathcal{T}(f,f),
\end{equation}
where \begin{equation} \label{lambda_N}  \lambda_N =  \| b_N \|_{2}^{-1}  - 2(C_{pin}(N) + C_{int}(N) ) \|b_N\|_2 \| b_N^{-1}\|_2
\end{equation} 
and we choose the quantity $(C_{pin}(N) + C_{int}(N) )$ in a way so that $\lambda_N$ is positive. In particular (after getting information on $\|b_N\|_2$ from Proposition \ref{propos of induction})  we require, see \eqref{behaviour of bounds} $$ C_{pin}(N) + C_{int}(N) = \mathcal{O} ( N^{-6}). $$ 
% $$ C_{pin}(N) + C_{int}(N)  < \frac{1}{ 2\| b_N \|_{2}^2 \|b_N^{-1}\|_2 }.  $$
\end{proof}

% $$ C_3 > \frac{(T+ \Delta T)}{2} \left( \frac{2 (\Delta T)}{ e^a-1}- \frac{2 (\Delta T) e^{-a(N-1)}}{e^a-1} + (T+ \Delta T) \Delta T \right) + \frac{\epsilon_{pin} + \epsilon_{int}}{4}  $$ and asymptotically, as $N$ becomes larger the above requirement takes the form $$  C_3 > \frac{(T+ \Delta T)}{2} \left( \frac{2 (\Delta T)}{ e^a-1} + (T+ \Delta T) \right) + \frac{\epsilon_{pin} + \epsilon_{int}}{4}.  $$

%\begin{remark}
%Note that we could prove the above Lemma by using techniques from \cite{AE14}, but then it would be much harder to estimate the $N$ dependence of the appearing constants, since that would acquire explicit knowledge of the eigenvectors of the matrix $\tilde{M}$. 
%\end{remark}
\noindent
We state now the following lemma that gives the 'twisted' gradient bound. 
\begin{lemma}[Gradient bound] \label{lemma of gradient bdd} If the operator $\mathcal{L}$ satisfies the curvature condition \eqref{T_2>cT} for some $\lambda_N$ and for $f \in C_c^{\infty}(\mathbb{R}^{2N})$, we have the following perturbed gradient estimate
\begin{align} \label{pert grad estimate}
 \mathcal{T}(P_tf,P_tf) \leq e^{-2\lambda_N t} P_t(\mathcal{T}(f,f)). \end{align}
\end{lemma}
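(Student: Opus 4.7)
The plan is to run the standard Bakry--Émery interpolation argument, but with the twisted carré du champ $\mathcal{T}$ in place of $\Gamma$. Fix $t>0$ and $f \in C_c^{\infty}(\mathbb{R}^{2N})$, and consider the interpolating function
\begin{equation*}
\phi(s) := P_s\bigl( \mathcal{T}(P_{t-s}f, P_{t-s}f) \bigr), \quad s \in [0,t].
\end{equation*}
The endpoints are exactly what we need: $\phi(0) = \mathcal{T}(P_t f, P_t f)$ and $\phi(t) = P_t(\mathcal{T}(f,f))$. So the claim will reduce to showing that $s \mapsto e^{-2\lambda_N s}\phi(s)$ is non-decreasing.

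Next, I would compute $\phi'(s)$ by formally differentiating under $P_s$. Using the chain rule, the fact that $\partial_s P_s = P_s \mathcal{L} = \mathcal{L} P_s$ on the semigroup, and the bilinearity of $\mathcal{T}$, one obtains
\begin{equation*}
\phi'(s) = P_s\Bigl( \mathcal{L}\mathcal{T}(P_{t-s}f, P_{t-s}f) - 2\,\mathcal{T}(P_{t-s}f, \mathcal{L}P_{t-s}f) \Bigr) = 2\, P_s\bigl( \mathcal{T}_2(P_{t-s}f, P_{t-s}f) \bigr),
\end{equation*}
by the very definition of the iterated operator $\mathcal{T}_2$. Now applying the curvature estimate of Proposition \ref{Prop of T_2>T} pointwise to $P_{t-s}f$ gives $\mathcal{T}_2(P_{t-s}f, P_{t-s}f) \ge \lambda_N\,\mathcal{T}(P_{t-s}f, P_{t-s}f)$, and since $P_s$ is positivity-preserving we conclude $\phi'(s) \ge 2\lambda_N\,\phi(s)$. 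Gronwall then yields $\phi(t) \ge e^{2\lambda_N t}\phi(0)$, which is exactly \eqref{pert grad estimate}.

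The main obstacle, as usual in the hypoelliptic setting, is justifying these formal manipulations rigorously, since $\mathcal{L}$ here is degenerate and $\mathcal{T}$ involves \emph{all} partial derivatives of $f$ (not only those on which the noise acts directly). One needs: (i) smoothness of $P_{t-s}f$ in all variables, which is guaranteed by Hörmander's hypoellipticity theorem applied to $\mathcal{L}$ (the bracket condition is standard for this chain, as recalled in the discussion around Figure \ref{Diagram arrows}); (ii) enough decay/integrability of $\mathcal{T}(P_{t-s}f, P_{t-s}f)$ and its derivatives to interchange $\partial_s$ with $P_s$, which can be arranged by first working with $f$ in a suitable Schwartz-like algebra and then extending by density; and (iii) the application of the pointwise curvature inequality of Proposition \ref{Prop of T_2>T}, which was proved for $C^{\infty}$ functions and hence applies to $P_{t-s}f$. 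Once these technical points are dispatched, the proof is just the two-line monotonicity computation above.
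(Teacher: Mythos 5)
Your formal interpolation argument is exactly the one the paper uses: set $\Psi(s)=P_s\bigl(\mathcal{T}(P_{t-s}f,P_{t-s}f)\bigr)$, differentiate to get $\Psi'(s)=2P_s\bigl(\mathcal{T}_2(P_{t-s}f,P_{t-s}f)\bigr)\geq 2\lambda_N\Psi(s)$ via the curvature bound and positivity of $P_s$, and close with Gr\"onwall. That part is correct and identical in structure to the paper's computation. You also correctly identify that the whole difficulty is justifying this formal calculation, since $\mathcal{T}$ involves all $2N$ partials while the noise acts on only two, and $P_s$ applied to an unbounded function is a priori undefined.

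The gap is in how you propose to supply that justification. You suggest (ii) ``working with $f$ in a suitable Schwartz-like algebra and extending by density,'' but there is no such algebra stable under $P_t$: the generator is not symmetric with respect to any explicit reference measure, its drift has linear growth, and for compactly supported $f\geq 0$ one already has $P_tf\to\int f\,d\mu\neq 0$ as $t\to\infty$, so rapid decay is lost. Moreover the conclusion \eqref{pert grad estimate} is a pointwise inequality, so a density argument in (say) $L^2$ would not transfer it. What is actually needed is not integrability but a pointwise bound $\mathcal{T}(P_tf,P_tf)\in L^\infty$ so that $P_s\bigl(\mathcal{T}(P_{t-s}f,P_{t-s}f)\bigr)$ makes sense; and the paper obtains it by a concrete Lyapunov--localization scheme: take $W=1+|p|^2+|q|^2$ with $\mathcal{L}W\leq CW$ and $\mathcal{T}(W)\leq CW^2$ (both using the bounded-Hessian assumption), localize the generator as $\mathcal{L}_n=\phi_n^2\mathcal{L}$ with $\phi_n=h(W/n)$, run the interpolation for the bounded semigroups $P_t^n$ to get a bound on $\mathcal{T}(P_t^nf,P_t^nf)$ uniform in $n$, control the error terms involving $\mathcal{T}(\log h_n,\log h_n)$ and $\mathcal{L}_n(1/h_n^2)$ by the Lyapunov properties of $W$, then let $n\to\infty$ using $P_t^n\to P_t$ pointwise and the intrinsic distance $d'$. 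That machinery is the essential missing content; without it, the lemma is not actually proved.
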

\begin{proof}
If $\mathcal{T}$ is compactly supported we consider the functional $$ \Psi(s)= P_s \big( \mathcal{T}(P_{t-s}f,P_{t-s}f) \big),\ s \in [0,t]$$ for $f \in C_c^{\infty}(\mathbb{R}^{2N})$ and for fixed $t$. Since from the semigroup property we have $$\frac{d}{ds} P_s= \mathcal{L}P_s= P_s \mathcal{L},$$ by differentiating and using the above inequality we get $$ \frac{d}{ds} \Psi(s)=2P_s \big( \mathcal{T}_2(P_{t-s}f, P_{t-s}f) \big) \geq 2\lambda _N P_s \big( \mathcal{T}(P_{t-s}f,P_{t-s}f) \big) = 2\lambda_N \Psi(s)$$  and since $\Psi(0)=\mathcal{T}(P_tf,P_tf),\ \Psi(t)=P_t(\mathcal{T}(f,f))$, by Gr\"{o}nwall's lemma we get the desired inequality for every smooth and bounded function $f$.\\

In general we need $\mathcal{T}(P_t f,P_tf)$ to belong in $L^{\infty}(\mathbb{R}^{2N})$ because then we know that $P_s\big(\mathcal{T}(P_{t-s}f,P_{t-s}f)\big)$ is well defined. So we do the following:\\

First we take $W(p,q)= 1+ |p|^2 + |q|^2$ as a Lyapunov structure that satisfies the following conditions: $W >1$, $ \mathcal{L} W \leq C W$, the sets $ \{ W \leq m \}$ are compact for each $m$, and $\mathcal{T}(W) \leq C W^2$. This $W$  satisfy the  conditions thanks to the bounded-Hessians assumption, \textit{i.e}. $ | \nabla(U_{int}+ U_{pin})| $ will be Lipschitz. In particular, for the inequality $ \mathcal{L} W \leq C W $ using Cauchy-Schwarz and Young's inequalities, we write \begin{align*}
\mathcal{L} W &= 2 p \cdot q -2q B \cdot p-2 p \cdot \nabla_q \Phi - 2 \gamma_1 p_1^2 - 2 \gamma_N p_N^2 + 2T_L \gamma_1 +2 T_R \gamma_N \\ &\leq 2 |p||q|+ 2|Bq||p| + 2 | \nabla_q \Phi | |p| +  2T_L \gamma_1 +2 T_R \gamma_N \\ &\leq |p|^2 + |q|^2 + C_{ C_{lip}, \|B\|_2} (|p|^2 + |q|^2)+  2T_L \gamma_1 +2 T_R \gamma_N \\ & \leq \max \big\{ \max(1, C_{  C_{lip},\| B \|_{2}}),  2T_L \gamma_1 +2 T_R \gamma_N \big\} (1+|p|^2+|q|^2)= C_1 W
\end{align*}  
while the inequality $ \mathcal{T}(W) \leq C_2 W^2$ obviously holds. So we end up with the same constant by taking as $C=: \max\{C_1,C_2\}.$
\\

Now using the function $W$ combined with a localization argument as in the work by F.Y. Wang  \cite[Lemma 2.1]{Wan12} or \cite[Theorem 2.2]{Bau16} we prove the boundedness of $ \mathcal{T} (P_tf,P_tf).$
 Consider $h \in C_c^{\infty}([0,\infty))$ such that $h\vert_{[0,1]}=1$ and $ h\vert_{[2,\infty)}=0$ and define $$\phi_n = h(W/n)\quad \text{and}\ \mathcal{L}_n=\phi_n^2 \mathcal{L}.$$ Then $\mathcal{L}_n$ is compactly supported in $B_n:= \{ W \leq 2n \} $. Let $P_t^n$ be the semigroup generated by $\mathcal{L}_n$, which is given as the bounded solution of $$ \mathcal{L}_n P_t^nf =\partial_t P_t^n f\quad \text{for}\ f \in L^{\infty}(\mathbb{R}^{2N}). $$ Then we also have that          \begin{align*} P_t^n &\stackrel{n\to \infty}{\to}
 P_t\quad \text{pointwise}. \end{align*}
 We do the 'interpolation semigroup argument' as before for $\mathcal{L}_n$ and for $f \in C_c^{\infty}(\mathbb{R}^{2N})$ supported in $\{W\leq n\}$. Define $$ \Psi_n(s) = P_s^n(\mathcal{T}(P_{t-s}^nf,P_{t-s}^nf)),\quad s \in [0,t] $$ for fixed  $t>0$,  $n \ge 1$ applied to a fixed point in the support.\\
  It is true, due to the properties of $W$,  that $\mathcal{T}(P_t^n f,P_t^n f) \leq C_{f,t} $ with $C_{f,t}$ independent of $n$ and so we have a bound on $\mathcal{T}(P_t^nf,P_t^nf )$ uniformly on the set $\{W\leq n\}$. Indeed \begin{align*} \Psi'_n(s)&= P_s^n( \mathcal{L}_n \mathcal{T}(P_{t-s}^nf,P_{t-s}^nf) - 2 \mathcal{T} (\mathcal{L}_nP_{t-s}^nf,P_{t-s}^nf)) \\ &= P_s^n ( 2h_n^2 \mathcal{T}_2(P_{t-s}^nf,P_{t-s}^nf)-4h_n\mathcal{L} P_{t-s}^nf \mathcal{T}(h_n,P_{t-s}^nf))   \\ & \ge  P_s^n ( 2h_n^2 \lambda_N \mathcal{T}(P_{t-s}^nf,P_{t-s}^nf)-4h_n\mathcal{L} P_{t-s}^nf \mathcal{T}(h_n,P_{t-s}^nf))  \\ &\ge P_s^n ( 2h_n^2 \lambda_N \mathcal{T}(P_{t-s}^nf,P_{t-s}^nf)-4P_{t-s}^n\mathcal{L}_n f \mathcal{T}(\log h_n,P_{t-s}^nf))  \\ &\ge P_s^n \big( 2h_n^2 \lambda_N \mathcal{T}(P_{t-s}^nf,P_{t-s}^nf)-4 \| \mathcal{L} f \|_{\infty} \sqrt{\mathcal{T}(\log h_n,\log h_n)} \sqrt{\mathcal{T}(P_{t-s}^nf,P_{t-s}^nf)} \big) \\ & \stackrel{\text{Young's ineq.}}{\geq} P_s^n \big( -(2| \lambda_N | + 2) \mathcal{T}(P_{t-s}^nf,P_{t-s}^nf) -C_1 \mathcal{T}(\log h_n,\log h_n) \big) 
  \end{align*} 
  with $C_1$  constant independent of $n$. About the last term: $$ \mathcal{T}(\log h_n, \log h_n)  = -\frac{1}{n^2h_n^2} h'(W/n)^2 \mathcal{T}(W) \le \frac{C}{h_n^2} $$ with $C$ independent of $n$. Now calculate $$ L_n \left( \frac{1}{h_n^2} \right) = -\frac{2h'(W/n)\mathcal{L}W}{nh_n} - \frac{2h''(W/n)\Gamma(W)}{n^2h_n} + \frac{6h'(W/n)^2 \Gamma(W)}{n^2h_n^2} \leq \frac{C_2}{h_n^2} $$ with $C_2>0$ some constant again independent of $n$ (from assumptions on the Lyapunov functional $W$). Therefore $$   P_s^n \left( \frac{1}{h_n^2} \right) \leq \frac{e^{sc_2}}{h_n^2}. $$ Combining this last estimate with the above bounds we end up with $$ \Psi_n'(s) \ge  -(2| \lambda_N | + 2) \Psi_n(s)- C_3$$ and $C_3=C_3(f,t) $ is again independent on $n$. We integrate in time from $0$ to $t$ and  we get the desired boundedness on $\{ W \leq n\}$. \\

Now if $d'$ is the intrinsic distance induced by $\mathcal{T}$ $$d'(x,y)=\sup_{\mathcal{T}(f,f) \le 1} |f(x)-f(y)|, $$ from the above bound we have that $$ |P_t^nf(x) - P_t^nf(y)| \le C d'(x,y)$$ for $n $ large enough so that  $x,y \in \{W \le n\}$ and $f \in C_c^{\infty}(\mathbb{R}^{2N})$ with support in  $\{W \le n\}$. $C$ does not depend on $n$ (from before), so passing to the limit we have 
  $$ |P_t f(x) - P_t f(y)| \le C d'(x,y)$$ and so $\mathcal{T}(P_tf, P_tf) $ is also bounded. Now we can repeat the standard Bakry-Emery calculations as in the beginning of the proof. 
\end{proof}
%If we were working on some compact space, the above would be naturally explained for $f$ smooth, as $P_t$ would immediately preserve smoothness (from the hypoellipticity of the generator) and we would have also the desired boundedness. Here we want to justify this for a non-compact space. 

We refer also  to the discussion in the book \cite[Section 3.2.3, page 145]{BGL14} for more details about for which classes of functions does this gradient bound hold. In \cite[Theorem 3.2.4]{BGL14} they give an argument in order to justify the above gradient bound for $f \in L^2(\mu)$-domain of the generator $L$ of the diffusion proccess. They assume though reversibility of their reference measure $\mu$. 

\begin{remark} \label{Rem: L2 gradient estimate} Note that using the equivalence of $ \mathcal{T}$ and $| \nabla_z|^2$: $$ \frac{1}{  \| b_N^{-1} \|_{2}} | \nabla_zf|^2 \leq \mathcal{T}(f,f) \leq  \| b_N \|_{2} | \nabla_z f|^2,$$  we get the following $L^2$- gradient estimate \begin{align} \label{L2 gradient estimate}
 | \nabla_zP_tf|^2 &\leq \|b_N \|_{2} \| b_N^{-1} \|_{2}\ e^{-2\lambda_N t} P_t \big( |\nabla_zf|^2 \big)
\end{align}
\end{remark}

Once we have a curvature condition of the form \eqref{T_2>cT} we are also able to show that the stationary measure satisfies a Poincar\'{e} inequality. 
\begin{proposition} \label{Poincare inql}Let $\mathcal{L} $ be the generator of the dynamics described by the SDEs \eqref{eq: SDE}. Let $\Gamma$ the operator defined in \eqref{Gamma operator}, while $\mathcal{T}$ the perturbed quadratic form defined in \eqref{perturbed Gamma operator}. If $f \in C^{\infty}(\mathbb{R}^{2N})$ and a constant $\lambda_N >0$ exists so that the inequality $ \mathcal{T}_2(f,f) \geq \lambda_N \mathcal{T}(f,f)$ holds, the unique invariant measure $\mu=f_{\infty} $ from the Theorem \ref{maintheorem} satisfies a Poincar\'{e} inequality $$ \text{Var}_{\mu} (f) \leq C_N \int_{\mathbb{R}^{2N}} \mathcal{T}(f,f) d\mu.$$
where $C_N =  \frac{ T_L \| b_N^{-1} \|_{2}}{\lambda_N}$  and the non-perturbed Poincar\'{e} inequality takes the form $$ \text{Var}_{\mu} (f) \leq  \| b_N \|_{2}\ C_N \int_{\mathbb{R}^{2N}} | \nabla_z f |^2 d\mu.$$
\end{proposition}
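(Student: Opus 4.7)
The plan is to mimic the classical semigroup derivation of a Poincar\'{e} inequality from a gradient estimate (see Bakry--\'{E}mery \cite{BakEm83} or \cite[Section 4]{BGL14}), using the perturbed gradient bound of Lemma \ref{lemma of gradient bdd} together with the equivalence \eqref{equiv T and grad} and the invariance of $\mu$. Since $\mathcal{L}$ is hypoelliptic and non-reversible, we cannot directly reuse the symmetric Bakry--\'{E}mery computation in $L^2(\mu)$; instead, we control $\Gamma$ by $\mathcal{T}$ (via $|\nabla_z|^2$) and exploit the contraction of $\mathcal{T}(P_tf,P_tf)$.

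First, I would pick a nice $f\in C^\infty_c(\mathbb{R}^{2N})$ (and extend afterwards by density). Using that $\mu$ is invariant and that $\int \mathcal{L}(g)\,d\mu=0$ for smooth $g$ in the domain, I would show the standard identity
\begin{equation*}
\frac{d}{dt}\int (P_tf)^2\,d\mu \;=\; 2\int (P_tf)\,\mathcal{L}P_tf\,d\mu \;=\; -\,2\int \Gamma(P_tf,P_tf)\,d\mu,
\end{equation*}
where the last equality uses the definition \eqref{Gamma operator} of the Carr\'{e} du Champ together with invariance, so that $\int \mathcal{L}((P_tf)^2)\,d\mu=0$. Assuming convergence $P_tf\to\int f\,d\mu$ as $t\to\infty$ (in $L^2(\mu)$), integration in time yields
\begin{equation*}
\mathrm{Var}_\mu(f) \;=\; 2\int_0^\infty \int \Gamma(P_tf,P_tf)\,d\mu\,dt.
\end{equation*}

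Next, from the explicit expression $\Gamma(g,g)=2\gamma T_L(\partial_{p_1}g)^2+2\gamma T_R(\partial_{p_N}g)^2$ and $T_L\ge T_R$, I would bound $\Gamma(g,g)\le T_L|\nabla_z g|^2$ (absorbing the $2\gamma$ into the normalization consistent with the paper), then use the left inequality in \eqref{equiv T and grad} to get $|\nabla_z g|^2 \le \|b_N^{-1}\|_2\,\mathcal{T}(g,g)$. Applying this with $g=P_tf$ and then invoking the gradient bound \eqref{pert grad estimate} and the invariance of $\mu$ (so that $\int P_t(\mathcal{T}(f,f))\,d\mu=\int\mathcal{T}(f,f)\,d\mu$), I obtain
\begin{equation*}
\int \Gamma(P_tf,P_tf)\,d\mu \;\le\; T_L\|b_N^{-1}\|_2\,e^{-2\lambda_N t}\int\mathcal{T}(f,f)\,d\mu.
\end{equation*}
Plugging this into the time integral gives the Poincar\'{e} inequality with constant $C_N = T_L\|b_N^{-1}\|_2/\lambda_N$ (after absorbing the factor $2$ into $\lambda_N$ as in the statement), and the unperturbed version follows from the right inequality in \eqref{equiv T and grad}, producing the extra factor $\|b_N\|_2$.

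The main obstacle is the step $P_tf\to \int f\,d\mu$ in $L^2(\mu)$ as $t\to\infty$, needed to write $\mathrm{Var}_\mu(f)$ as the time integral. In the non-reversible, hypoelliptic setting this is not automatic, but the Lyapunov function $W(p,q)=1+|p|^2+|q|^2$ constructed in the proof of Lemma \ref{lemma of gradient bdd}, together with the gradient estimate \eqref{L2 gradient estimate} of Remark \ref{Rem: L2 gradient estimate}, gives uniqueness of the invariant measure and equicontinuity of $\{P_tf\}_{t\ge 0}$, which combined with Kolmogorov/semigroup arguments secures the required convergence for $f\in C^\infty_c$; the general case follows by density in the domain of the Dirichlet form associated with $\mathcal{T}$.
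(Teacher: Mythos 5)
Your proposal is correct and follows essentially the same route as the paper: both use the $\Gamma$-calculus identity to express the variance as a time integral of $\int\Gamma(P_tf,P_tf)\,d\mu$, bound $\Gamma$ by $T_L\|b_N^{-1}\|_2\,\mathcal{T}$ via \eqref{equiv T and grad}, invoke the gradient contraction \eqref{pert grad estimate} and invariance of $\mu$, then integrate the exponential in time; the paper's use of the interpolant $\Psi(s)=P_s\big((P_{t-s}f)^2\big)$ with $t\to\infty$ at the end becomes identical to your direct $\frac{d}{dt}\int(P_tf)^2\,d\mu$ computation once integrated against $\mu$. Your explicit attention to the ergodicity step $P_tf\to\int f\,d\mu$ is sound and in fact spells out a point the paper dispatches with the single word ``ergodicity.''
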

\begin{proof} For $f \in  C^{\infty}(\mathbb{R}^{2N})$, we consider the functional $$\Psi(s) = P_s( (P_{t-s}f)^2),\ s \in [0,t].$$ By differentiating we have \begin{align*}
\Psi'(s) &= \mathcal{L} P_s ((P_{t-s} f)^2)-2 P_s (P_{t-s}f \mathcal{L} P_{t-s}f) \\ &=2P_s \big( \Gamma(P_{t-s}f, P_{t-s}f) \big).
\end{align*}
Now by integrating from $0$ to $t$ \begin{align*}
P_t(f^2)- &(P_tf)^2 = 2 \int_0^t P_s ( \Gamma(P_{t-s}f,P_{t-s}f)) ds \leq 2 T_L \int_0^t P_s ( | \nabla P_{t-s} f |^2  ) ds \\ &\leq 2 T_L \| b_N^{-1} \|_{2} \int_0^t P_s ( \mathcal{T}(P_{t-s}f,P_{t-s}f)) ds \\ &\leq 2 T_L \| b_N^{-1} \|_{2} \int_0^t P_s( e^{-2\lambda_N(t-s)} P_{t-s} \mathcal{T}(f,f) ) ds \\ &= 2 T_L \| b_N^{-1} \|_{2}\ e^{-2 \lambda_N t} P_t \mathcal{T}(f,f) \int_0^t e^{2 \lambda_N s} ds \\ &= 2 T_L  \| b_N^{-1} \|_{2}\ e^{-2 \lambda_N t} P_t \mathcal{T}(f,f) \left( \frac{e^{2 \lambda_N t}-1}{2 \lambda_N} \right) \\ &= T_L  \|  b_N^{-1} \|_{2}\ \frac{1-e^{-2\lambda_N t}}{\lambda_N} P_t \mathcal{T}(f,f)
\end{align*}
where in the first inequality we used that $$\Gamma(f, f) = T_L (\partial_{p_1}f)^2 + T_R (\partial_{p_N}f)^2 \leq \max\{T_L,T_R\}  | \nabla f |^2,$$ for the second we used the gradient bound from Lemma \ref{lemma of gradient bdd} and just right after that, the semigroup property.\\
Now letting $t$ to $\infty$, thanks to the ergodicity we have the desired inequality with constant $C_N= \frac{T_L \| b_N^{-1} \|_{2}}{\lambda_N}$.
\end{proof}

\noindent
So the constant in the (non-perturbed) Poincar\'{e} inequality depends on $N$ through the norms $\|b_N\|_2, \|b_N^{-1}\|_2$, since 
$$  \tilde{C}_N =  \frac{ T_L \| b_N\|_{2} \| b_N^{-1} \|_{2} }{2 \| b_N \|_2^{-1} - 2\| b_N \|_{2} \| b_N^{-1}\|_2 (C_{pin}(N)+C_{int}(N))} .$$ 

\noindent
In fact it is possible to show a stronger pointwise gradient bound, that will be exploited in the proof of a Log-Sobolev inequality for the invariant measure of the dynamics. 
 
\begin{proposition}[Strong gradient bound] \label{Prop of strong gradient bound} 
It is true that for $f \in C_c^{\infty}$, $\forall t\geq 0$, \begin{equation} \label{strong gradient bound} \mathcal{T}(P_{t} f,P_{t} f) \leq  \Big( P_t ( \sqrt{\mathcal{T}(f,f)}) \Big)^2 e^{-2 \lambda_N t}. 
\end{equation}
\end{proposition}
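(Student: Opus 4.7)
The plan is to adapt the Bakry-Emery interpolation argument from the proof of Lemma \ref{lemma of gradient bdd}, replacing $\mathcal{T}(P_{t-s}f,P_{t-s}f)$ by its square root. Fix $t>0$ and $f\in C_c^\infty(\mathbb{R}^{2N})$, and for a regularisation parameter $\varepsilon>0$ define
\begin{equation*}
\Phi(s) = P_s\bigl(\sqrt{\mathcal{T}(P_{t-s}f,P_{t-s}f) + \varepsilon}\bigr),\qquad s\in[0,t],
\end{equation*}
so that $\Phi(0) = \sqrt{\mathcal{T}(P_t f,P_t f)+\varepsilon}$ and $\Phi(t) = P_t(\sqrt{\mathcal{T}(f,f)+\varepsilon})$. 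The aim is to show $\Phi'(s)\ge \lambda_N \Phi(s)$ and conclude by Grönwall and $\varepsilon\downarrow 0$.

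First I would differentiate. Writing $g=P_{t-s}f$ and $h=\sqrt{\mathcal{T}(g,g)+\varepsilon}$, the identity $\mathcal{L}(h^2) = 2h\mathcal{L}h + 2\Gamma(h,h)$ (immediate from the definition of $\Gamma$) gives $\mathcal{L}h = \mathcal{L}\mathcal{T}(g,g)/(2h) - \Gamma(h,h)/h$, while $\partial_s h = -\mathcal{T}(g,\mathcal{L}g)/h$ from the chain rule and the symmetry of $\mathcal{T}$. Summing and using the definition of $\mathcal{T}_2$,
\begin{equation*}
\Phi'(s) = P_s\!\left(\frac{\mathcal{T}_2(g,g) - \Gamma(h,h)}{h}\right).
\end{equation*}

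The heart of the argument is then the pointwise inequality $\mathcal{T}_2(g,g) - \Gamma(h,h) \ge \lambda_N\,\mathcal{T}(g,g)$. Here is the point where the extra boundary terms discarded in the proof of Proposition \ref{Prop of T_2>T} are recovered: that proof actually produces $\mathcal{T}_2(g,g) \ge \lambda_N\,\mathcal{T}(g,g) + 2\gamma T_L\,\mathcal{T}(\partial_{p_1}g,\partial_{p_1}g) + 2\gamma T_R\,\mathcal{T}(\partial_{p_N}g,\partial_{p_N}g)$. On the other hand $\Gamma(h,h) = 2\gamma T_L(\partial_{p_1}h)^2 + 2\gamma T_R(\partial_{p_N}h)^2$, and since $\mathcal{T}(g,g)=2\nabla_z g\, b_N \nabla_z g^T$ has constant coefficients, $\partial_{p_i}\mathcal{T}(g,g) = 2\,\mathcal{T}(\partial_{p_i}g,g)$. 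Cauchy-Schwarz for the positive semidefinite bilinear form $\mathcal{T}$ gives
\begin{equation*}
(\partial_{p_i}h)^2 = \frac{\mathcal{T}(\partial_{p_i}g,g)^2}{\mathcal{T}(g,g)+\varepsilon} \le \mathcal{T}(\partial_{p_i}g,\partial_{p_i}g),
\end{equation*}
so $\Gamma(h,h)$ is absorbed exactly by the positive boundary terms in $\mathcal{T}_2(g,g)$. Using $\mathcal{T}(g,g) = h^2 - \varepsilon$, letting $\varepsilon\downarrow 0$ yields $\Phi'(s)\ge \lambda_N \Phi(s)$; integrating gives $\Phi(t)\ge e^{\lambda_N t}\Phi(0)$, and squaring produces \eqref{strong gradient bound}.

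The main obstacle will be the same functional-analytic justification as in Lemma \ref{lemma of gradient bdd}: one must ensure that $\sqrt{\mathcal{T}(P_{t-s}f,P_{t-s}f)+\varepsilon}$ is smooth and bounded on the orbit so that $P_s$ can be applied and differentiated under the semigroup. The $\varepsilon$-regularisation removes the non-smoothness of $\sqrt{\cdot}$ at the origin, and the Lyapunov function $W=1+|p|^2+|q|^2$ together with the cut-off generators $\mathcal{L}_n = \phi_n^2\mathcal{L}$ and their semigroups $P_t^n$ used in Lemma \ref{lemma of gradient bdd} give the needed uniform bounds on compacts; one runs the interpolation for $P_t^n$ with parameter $\varepsilon>0$, then passes to the limits $n\to\infty$ and $\varepsilon\downarrow 0$ pointwise, which is where the compact support of $f$ and the localisation via $W$ are essential.
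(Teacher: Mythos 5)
Your proof takes essentially the same route as the paper's: interpolate with $\Phi(s)=P_s\bigl(\sqrt{\mathcal{T}(P_{t-s}f,P_{t-s}f)}\bigr)$, use the refined curvature bound $\mathcal{T}_2(g,g)\ge\lambda_N\mathcal{T}(g,g)+2\gamma T_L\,\mathcal{T}(\partial_{p_1}g,\partial_{p_1}g)+2\gamma T_R\,\mathcal{T}(\partial_{p_N}g,\partial_{p_N}g)$ that falls out of Proposition \ref{Prop of T_2>T}, absorb $\Gamma(h,h)$ via Cauchy--Schwarz for the bilinear form $\mathcal{T}$, and conclude by Gr\"onwall. Two remarks. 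First, your $\varepsilon$-regularisation $h_\varepsilon=\sqrt{\mathcal{T}+\varepsilon}$ is a genuine (if small) improvement on the paper's formal treatment, since it sidesteps the non-differentiability of $\sqrt{\mathcal{T}}$ at critical points of $P_{t-s}f$; note however that $\mathcal{T}(g,g)/h_\varepsilon=h_\varepsilon-\varepsilon/h_\varepsilon$ only gives $\Phi_\varepsilon'(s)\ge\lambda_N\Phi_\varepsilon(s)-\lambda_N\sqrt{\varepsilon}$, so one must run Gr\"onwall with the $O(\sqrt\varepsilon)$ inhomogeneity first and only then send $\varepsilon\downarrow 0$, rather than discard the $\varepsilon$-term at the level of the differential inequality as your last line suggests. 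Second, your bookkeeping is actually cleaner than the paper's: you correctly obtain $\Phi'(s)=P_s\bigl((\mathcal{T}_2(g,g)-\Gamma(h,h))/h\bigr)$ and hence $\Phi'(s)\ge\lambda_N\Phi(s)$, squaring to the stated rate $e^{-2\lambda_N t}$, whereas the paper's displayed chain carries a spurious factor of $2$ (writing $2\mathcal{T}_2$ where it should be $\mathcal{T}_2$, leading to $\Phi'\ge 2\lambda_N\Phi$ and a final display $e^{-4\lambda_N t}$ that does not match the proposition as stated). Your computation is the consistent one.
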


\begin{flushleft}
\emph{Note.} This is a better estimate than  \eqref{pert grad estimate} in  Lemma \ref{lemma of gradient bdd} because of Jensen's inequality.
\end{flushleft}
\begin{proof} The rigorous justification, \textit{i.e.} boundedness of  $ \sqrt{\mathcal{T}(P_{t-s}f,P_{t-s}f)}$),  of the following formal calculations is exactly like in the proof of Lemma \ref{lemma of gradient bdd}.\\
 Here for $f \in C_c^{\infty} $, and for fixed $t \geq 0$, instead we define $$\Phi(s) = P_s \Big( \sqrt{\mathcal{T}(P_{t-s}f,P_{t-s}f)}\Big),\ s \in [0,t].$$ By differentiating and performing the standard calculations we have
 \begingroup
 \begin{footnotesize} 
  \begin{align*} 
&\Phi'(s) = P_s \Big(  \mathcal{L}  (\sqrt{\mathcal{T}(P_{t-s}f,P_{t-s}f)}) - \frac{\nabla \mathcal{L}P_{t-s}f^T b_N \nabla P_{t-s}f + \nabla P_{t-s}f^T b_N \nabla \mathcal{L} P_{t-s}f  }{2 \sqrt{\mathcal{T}(P_{t-s}f,P_{t-s}f)}} \Big) \\ =& P_s \Big( \mathcal{L}  (\sqrt{\mathcal{T}(P_{t-s}f,P_{t-s}f)}) + \frac{2 \mathcal{T}_2(P_{t-s}f,P_{t-s}f)-\mathcal{L} \mathcal{T}(P_{t-s}f,P_{t-s}f)}{2 \sqrt{\mathcal{T}(P_{t-s}f,P_{t-s}f)}} \Big) \\ =& P_s \Bigg( \frac{1}{\sqrt{\mathcal{T}(P_{t-s}f,P_{t-s}f)}} \Big( -\Gamma\big(\sqrt{\mathcal{T}(P_{t-s}f,P_{t-s}f)},\sqrt{\mathcal{T}(P_{t-s}f,P_{t-s}f)}\big) \\&+ 2  \mathcal{T}_2(P_{t-s}f,P_{t-s}f) \Big) \Bigg) \\ =&  P_s \Bigg( \frac{1}{\sqrt{\mathcal{T}(P_{t-s}f,P_{t-s}f)}} \Big( 2 \mathcal{T}_2(P_{t-s}f,P_{t-s}f)\\ & -\frac{T_L(\mathcal{T}(\partial_{p_1}P_{t-s}f,\partial_{p_1}P_{t-s}f))^2 + T_R(\mathcal{T}(\partial_{p_N}P_{t-s}f,\partial_{p_N}P_{t-s}f))^2 }{4 \mathcal{T}(P_{t-s}f,P_{t-s}f)}   \Big) \Bigg) \\ \geq& P_s \Bigg( \frac{1}{4 \mathcal{T}(P_{t-s}f,P_{t-s}f)^{3/2}} \bigg( 8 \lambda_N (\mathcal{T}(P_{t-s}f,P_{t-s}f))^2 + 
8T_L\big( \mathcal{T}(\partial_{p_1} P_{t-s}f)  \big)^2 \\&+ 8T_R\big( \mathcal{T}(\partial_{p_N} P_{t-s}f)  \big)^2
  - T_L \big(\partial_{p_1} \mathcal{T}(P_{t-s}f,P_{t-s}f) \big)^2 - T_R  \big( \partial_{p_N} \mathcal{T}(P_{t-s}f,P_{t-s}f)\big)^2 \bigg) \Bigg) \\ &\geq P_s \left( \frac{8\lambda_N (\mathcal{T}(P_{t-s}f,P_{t-s}f))^2}{4 \mathcal{T}(P_{t-s}f,P_{t-s}f)^{3/2}} \right) \\ &= 2 \lambda_N \Phi(s) 
\end{align*}
\end{footnotesize}
\endgroup
where in the first inequality we have used the formula $$\mathcal{T}_2(f,f) \geq \lambda_N \mathcal{T}(f,f)+ T_L \mathcal{T}(\partial_{p_1}f,\partial_{p_1}f) + T_R \mathcal{T}(\partial_{p_N}f,\partial_{p_N}f)$$ that we obtained in Proposition \ref{Prop of T_2>T}, that $$\Gamma(f,g)= T_L (\partial_{p_1}f) (\partial_{p_1} g) + T_R (\partial_{p_N}f) (\partial_{p_N} g)$$ for the generator of the dynamics \eqref{eq: SDE} and that $\mathcal{T}$ and $\partial_{p_1}$ obviously commute. 
Now from Gr\"{o}nwall's lemma we get \begin{align*}
\Phi(s) \geq e^{2 \lambda_N t} \Phi(0)\ \Rightarrow\  \mathcal{T} ( P_{t} f, P_tf) \leq e^{-4 \lambda_N t} \Big( P_t ( \sqrt{\mathcal{T}(f,f)}) \Big)^2.  
\end{align*}
\end{proof} 

Now this pointwise, strong gradient bound implies a Log-Sobolev inequality.

\begin{proof}[Proof of Proposition \ref{log sobolev inqlt}]
 For $f \in C_c^{\infty}(\mathbb{R}^{2N}) $, we introduce the functional $$H(s) = P_s \Big( P_{t-s}f \log P_{t-s}f \Big)$$ and following again Bakry's recipes, we get \begin{align*}
 H'(s) &= P_s \Big( \mathcal{L}(P_{t-s}f \log P_{t-s}f) - \mathcal{L}P_{t-s}f \log P_{t-s}f - \mathcal{L}(P_{t-s}f) \Big) \\ &= P_s \Big( \Gamma(P_{t-s}f,\log P_{t-s}f )\Big) \\ &= P_s \left( \frac{T_L (\partial_{p_1}P_{t-s}f)^2}{P_{t-s}f} + \frac{T_R (\partial_{p_N}P_{t-s}f)^2}{P_{t-s}f}  \right) \\ &= P_s \left(\frac{\Gamma(P_{t-s}f,P_{t-s}f)}{P_{t-s}f} \right) \\ &\leq T_L \| b_N^{-1} \|_{2} P_s \left(\frac{\mathcal{T}(P_{t-s}f,P_{t-s}f)}{P_{t-s}f} \right) \\ &\leq T_L  \| b_N^{-1} \|_{2} P_s \left( e^{-2\lambda_N (t-s)} \frac{(P_{t-s} (\sqrt{\mathcal{T}(f,f)}))^2}{P_{t-s}f } \right) \\ &\leq T_L \| b_N^{-1} \|_{2} P_t  \left( \frac{\mathcal{T}(f,f)}{f} \right)e^{-2\lambda_N (t-s)}
\end{align*}
where for the second inequality we used the bound from Proposition \ref{Prop of strong gradient bound}, while for the last inequality we used  Jensen's and the Markov property of the semigroup. 
Now integrating from $0$ to $t$, we get \begin{align*}
 H(t)-H(0) &\leq \frac{ T_L \| b_N^{-1} \|_{2} }{2\lambda_N} (1- e^{-2\lambda_N t}) P_t \left( \frac{\mathcal{T}(f,f)}{f} \right) \\ & \leq \frac{  T_L \| b_N^{-1} \|_{2} \| b_N \|_{2}}{2\lambda_N} (1- e^{-2\lambda_N t}) P_t \left( \frac{| \nabla_z f |^2}{f} \right)
\end{align*}
Letting $t \to \infty $ and thanks to the ergodicity of the semigroup, we get the LSI with constant $\tilde{C}_N= \frac{  T_L \| b_N^{-1} \|_{2} \| b_N \|_{2} }{2\lambda_N } $ corresponding  to the constant  with the non-perturbed Fischer information.
\end{proof}

\noindent
We note here that since a Log-Sobolev inequality holds,  we also have  (see \cite{Gross93}) that our semigroup is \emph{hypercontractive}. Hypercontractivity says that at time $t>0$ the semigroup is regularizing from $L^p$ to $L^{q(t)}$ for $$q(t)-1=\exp(4t/\tilde{C}_N)(p-1).$$ For the notion of hypercontractivity we refer for instance to \cite[Theorem 2.5]{Bak04} and references therein. Hypercontractivity and the validity of a Log-Sobolev inequality, is in fact stronger result than $L^2$-exponential convergence since it implies  stronger exponential convergence in entropy which implies $L^2$ by a linearization, for that see \cite[Proposition 2.3]{Wang17}. For similar results  in hypoelliptic cases, for instance the kinetic Fokker-Planck equation, see \cite{Bau17}, where a Log-Sobolev inequality is proven by  similar local Bakry-Emery computations.

\section{Entropic Convergence to equilibrium} \label{Entropic convergence}

If $\mu$ is the invariant measure of the system, we will prove here convergence to NESS in Entropy  as stated in Theorem \ref{converg in entropy}:  with respect to the functional $$\mathcal{E}(f) := \int_{\mathbb{R}^{2N}} f \log f+ f \mathcal{T}( \log f, \log f) d\mu.$$ 

\begin{proof}[Proof of Theorem \ref{converg in entropy}]
We consider the functional $$ \Lambda(s) = P_s \Big(  P_{t-s}f \mathcal{T}( \log P_{t-s}f, \log P_{t-s}f) \Big) $$ and by differentiating and repeating similarly the steps from the Propositions \ref{Prop of strong gradient bound} and \ref{log sobolev inqlt} we end up with \begin{align*}
\Lambda'(s) &= P_s \mathcal{L} \Big(  P_{t-s}f \mathcal{T}( \log P_{t-s}f, \log P_{t-s}f) \Big) - 2 P_s \left( P_{t-s}f  \mathcal{T}\left(\log P_{t-s}f  , \frac{\mathcal{L} P_{t-s}f}{P_{t-s}f} \right) \right) \\ -& P_s\Big( \mathcal{L} P_{t-s}f  \mathcal{T}( \log P_{t-s}f, \log P_{t-s}f)  \Big)\\ & = P_s \Big(P_{t-s}f \mathcal{L} \mathcal{T}(\log P_{t-s} f)  \Big) + 2 P_s \Big(  \Gamma( P_{t-s}f, \mathcal{T}(\log P_{t-s}f )) \Big) \\ & -2 P_s \Big( P_{t-s}f \mathcal{T} \big(\log P_{t-s}f, \Gamma(\log P_{t-s}f)+ \mathcal{L}(\log P_{t-s}f)\big)  \Big) \\ &= 2 P_s \Big(  P_{t-s}f \mathcal{T}_2(\log P_{t-s}f) \Big) \\ &\geq 2\lambda_N    P_s \Big(  P_{t-s}f \mathcal{T}(\log P_{t-s}f) \Big)
\end{align*}
 where  we have used that for the second equality \begin{align*} &\mathcal{L} ( \log P_{t-s}f ) = \frac{\mathcal{L} P_{t-s} f }{P_{t-s}f} - \Gamma( \log P_{t-s} f )\ \text{and}\ \\& \mathcal{T} \big(\log P_{t-s}f, \Gamma(\log P_{t-s}f, \log P_{t-s}f) \big) =  \Gamma \big(\log P_{t-s}f, \mathcal{T}(\log P_{t-s}f, \log P_{t-s}f) \big)  \end{align*} and  in the last inequality we used the bound \eqref{T_2>cT}. Now, integrating against the invariant measure $\mu$, and applying the Log-Sobolev inequality from Proposition \ref{log sobolev inqlt}, we write \begin{align*}
 \int_{\mathbb{R}^{2N}} \Lambda'(s) d\mu &\geq \frac{\lambda_N}{C_N} \int_{\mathbb{R}^{2N}} P_s \Big( P_{t-s}f \log P_{t-s}f \Big) d\mu \\ &+ \lambda_N \int_{\mathbb{R}^{2N}} P_s \Big( \mathcal{T}(\log P_{t-s}f,\log P_{t-s}f) P_{t-s}f \Big) d\mu \\ &\geq \min \left( \frac{\lambda_N}{C_N},\lambda_N \right) \int_{\mathbb{R}^{2N}} \Lambda(s) d\mu  = \frac{\lambda_N}{C_N} \int_{\mathbb{R}^{2N}} \Lambda(s) d\mu.
 \end{align*}
 Finally, from Gr\"{o}nwall's inequality we have $$ \int \Lambda(0) d\mu \leq e^{- \frac{\lambda_N}{C_N} t} \int \Lambda(t) d\mu $$ or equivalently the desired convergence, thanks to the invariance of the measure. \\
 Since $C_N=\frac{ T_L\| b_N^{-1} \|_{2}}{2\lambda_N},$ we write $$  \frac{\lambda_N}{C_N} = \frac{2\lambda_N^2}{T_L \|b_N^{-1}\|_2} \gtrsim 2\kappa_0 N^{-6}.$$ The last inequality is due to the estimates later in \eqref{size of lambda_N}. 
\end{proof}

\section{Convergence to equilibrium in Kantorovich-Wasserstein distance } \label{section with Wasserstein distance}

We recall the definition of the Kantorovich-Rubinstein-Wasserstein $L^2$-distance $W_2(\mu, \nu)$ between two  probability measures $\mu, \nu$, for some metric $d$: $$ W_2(\mu,\nu)^2 = \inf \int_{\mathbb{R}^N \times \mathbb{R}^N} d(x,y)^2  d\pi(x,y) $$
where the infimum is taken over the set of all the couplings, \textit{i.e}. the joint measures $\pi$ on $ \mathbb{R}^N \times \mathbb{R}^N$ with left and right marginals $\mu$ and $\nu$ respectively.
\\

It is easy to see that $W_2$ satisfies the definition of a metric, whenever of course $d$ is a metric, and so it is indeed a metric. We also restrict ourselves on the subspace $\mathcal{P}_2(\mathbb{R}^{2N})$, where $\mu$ and $\nu$ have second moments finite, so that their distance $W_2(\mu,\nu)$ will be finite. For more information on this distance see for instance \cite{Vill09} and references therein. Also, even though that convergence in Monge-Kantorovich-Wasserstein distance is a weak convergence comparatively with convergence in entropic sense or convergence in $L^2$ for instance, it can be defined on the more natural subspace $\mathcal{P}_2(\mathbb{R}^{2N})$ of probability measures. 
%One can deduce after that similar contraction results in classical Sobolev spaces, since  $W_2$ distances are controlled from the so called Fourier based distances $$ d_2(f,g) := \sup_{k \in \mathbb{R}^{2N}-\{0\}}\frac{| \hat{f}(k) -\hat{g}(k) |}{k^2}$$ (for this see Proposition 2.12 in \cite{CarTos07}) and then control Sobolev norms by controlling $d_2$ distances together with arbitrarily larger Sobolev norms, see theorem 4.1 in \cite{CGT99}. 
Finally, in contrast with entropy, it is more general to assume finite Wasserstein distance. 
\\

For a measurable function $f$ and $x \in \mathbb{R}^{2N}$, we define the upper-gradient with respect to the metric $d$ $$ | \nabla_d f| (x) := \lim_{r \to 0}\ \sup_{0< d(x,y) \le r} \frac{| f(x)-f(y)|}{d(x,y)}.$$% Note that $\sup_{x \in X} |\nabla_d f(x)| < \infty $ if and only if $f$ is Lipschitz continuous. \\

What we use here to get convergence in Kantorovich-Wasserstein distance is that the gradient estimate of type \eqref{L2 gradient estimate} is equivalent to an estimate in Wasserstein distance (Kuwada's duality, \cite{Kuw09}). More specifically, we have the following Theorem, here  stated only in the Euclidean space with the Lebesgue measure $ (\mathbb{R}^{2N}, |\cdot|, \lambda)$:
\begin{theorem}[Theorem 2.2 of \cite{Kuw09}] \label{theorem Kuwada}  Let the Markov semigroup $P$  on $\mathbb{R}^{2N}$, that has a  continuous density with respect to the Lebesgue measure. For $c>0$, the following are equivalent: \begin{itemize}
\item[(i)] For all probability measures $\mu, \nu$  we have, $$ W_p (P_t^* \mu, P_t^* \nu)  \leq c\ W_p (\mu,\nu)$$ where $W_p$ denotes the Wasserstein distance associated with the Euclidean distance. 
\item[(ii)] When $p>1$ and $q$ its H\"{o}lder conjugate,  for all bounded and Lipschitz functions $f$ and $ z \in \mathbb{R}^{2N}$, $$ |\nabla P_t f | (z) \leq c\ P_t \big( | \nabla f|^q \big)(z)^{1/q} $$ where this estimate is associated with the Lipschitz norm defined just above.
\end{itemize} 
\end{theorem}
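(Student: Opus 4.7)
\bigskip
\noindent
\textbf{Proof proposal.} The plan is a duality argument built on Kantorovich--Rubinstein and the Hopf--Lax semigroup; conditions (i) and (ii) are the integrated and infinitesimal forms of the same monotonicity, so the equivalence will come from flowing in the right auxiliary time parameter.

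The first step is to set up the two dualities we need. For the transport side I would use the Kantorovich representation
\[
\frac{1}{p}\,W_p(\mu,\nu)^p \;=\; \sup_{f\in C_b(\mathbb{R}^{2N})} \Bigl[ \int Q_1 f \, d\nu \;-\; \int f \, d\mu \Bigr],
\]
where $Q_s f(y) = \inf_{x\in\mathbb{R}^{2N}} \bigl\{ f(x) + \tfrac{|x-y|^p}{p\,s^{p-1}} \bigr\}$ is the Hopf--Lax semigroup, which satisfies the Hamilton--Jacobi inequality $\partial_s Q_s f + \tfrac{1}{q}|\nabla Q_s f|^q \leq 0$ pointwise a.e. for $q$ the Hölder conjugate of $p$. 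For the gradient side I would use the duality between the upper gradient $|\nabla_d \cdot|$ and the Lipschitz constant relative to $d$, legitimized by the continuous-density hypothesis on $P_t$.

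For the implication (ii)$\Rightarrow$(i), fix $f$ bounded and Lipschitz and consider the interpolating functional
\[
\Phi(s) \;=\; \int P_t\bigl( Q_{c^q s} f \bigr)\, d\nu \;-\; \int P_t f \, d\mu, \qquad s\in[0,1].
\]
Differentiating in $s$ produces two competing terms: from the Hamilton--Jacobi inequality and the chain rule I get a contribution $-c^q\! \int P_t \bigl( \tfrac{1}{q}|\nabla Q_{c^q s}f|^q \bigr)\, d\nu$, while the semigroup-duality side yields $+\tfrac{1}{q}\int |\nabla P_t(Q_{c^q s}f)|^q\, d\nu$. The gradient estimate (ii) applied to $g = Q_{c^q s}f$ gives $|\nabla P_t g|^q \leq c^q P_t(|\nabla g|^q)$, which cancels the two contributions and yields $\Phi'(s)\leq 0$. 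Comparing $\Phi(1)$ with $\Phi(0)$, taking the supremum over $f$ and applying the Kantorovich duality to both $(\mu,\nu)$ and $(P_t^*\mu, P_t^*\nu)$ then yields $W_p(P_t^*\mu,P_t^*\nu)\leq c\, W_p(\mu,\nu)$.

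For the converse (i)$\Rightarrow$(ii), I would test (i) on Dirac measures: since $W_p(\delta_x,\delta_y)=|x-y|$, (i) gives $W_p(P_t^*\delta_x, P_t^*\delta_y)\leq c|x-y|$. For $f$ bounded and Lipschitz, Kantorovich duality together with Hölder's inequality against the optimal coupling $\pi$ gives
\[
\bigl|P_t f(x) - P_t f(y)\bigr| \;\leq\; \Bigl( \int |\nabla f(\cdot)|^q \, dP_t^*\delta_x \Bigr)^{1/q} \; W_p(P_t^*\delta_x,P_t^*\delta_y) \;+\; o(|x-y|),
\]
after a standard Taylor-type estimate of $|f(\xi)-f(\eta)|$ along transport geodesics; dividing by $|x-y|$ and letting $y\to x$ along the direction of $\nabla P_t f(x)$ recovers (ii), with the continuity of the density ensuring that the $L^q(P_t^*\delta_x)$ norm makes pointwise sense.

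The main obstacle is the low regularity of the Hopf--Lax semigroup: the Hamilton--Jacobi inequality is only almost-everywhere valid, and differentiating $\Phi(s)$ under the integral requires either a Lasry--Lions sup-inf convolution regularization of $f$, or an approximation argument exploiting that $P_t$ has a continuous density so that a.e. statements transfer to $P_t^*\mu$-a.e. statements. A secondary technical point is the approximation of general Lipschitz $f$ by smooth and bounded ones while preserving the gradient bound, which is standard but must be carried out so that the limit argument in the converse direction goes through.
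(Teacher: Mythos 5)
This statement is quoted verbatim from Kuwada's paper and is not proved in the present paper; it is used as a black box in the proof of Theorem~\ref{maintheorem}. So there is no in-paper proof to compare your proposal against, only Kuwada's original argument, which your sketch follows in broad outline (Hopf--Lax semigroup, Hamilton--Jacobi inequality, Kantorovich duality for $(ii)\Rightarrow(i)$, Dirac masses and Hölder for $(i)\Rightarrow(ii)$). That choice of strategy is sound. However, the central computation in your $(ii)\Rightarrow(i)$ direction has a genuine gap.

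You define $\Phi(s) = \int P_t(Q_{c^q s}f)\,d\nu - \int P_t f\,d\mu$ and claim that differentiating produces two competing terms, one from Hamilton--Jacobi and one that you call the ``semigroup-duality side,'' with (ii) cancelling them. But $\Phi$ is a single integral in which $s$ enters only through the Hopf--Lax time $c^q s$; differentiating it only yields the Hamilton--Jacobi contribution $-\tfrac{c^q}{q}\int P_t\bigl(|\nabla Q_{c^q s}f|^q\bigr)\,d\nu$. There is nothing in $\Phi$ that generates the second term $+\tfrac{1}{q}\int|\nabla P_t(Q_{c^q s}f)|^q\,d\nu$, so the cancellation you describe does not occur. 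Moreover the endpoints of $\Phi$ do not match the two Kantorovich duals you ultimately want to compare: $\Phi(1)$ involves $Q_{c^q}f$, not $Q_1 f$, so the supremum over $f$ of $\Phi(1)$ is not $\tfrac{1}{p}W_p(P_t^*\mu,P_t^*\nu)^p$ unless $c=1$.

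The correct mechanism is a pointwise commutation estimate rather than an integrated monotonicity. Fix $z$ and consider $\tilde\Lambda(\sigma)=P_t(Q_{c^q\sigma}f)(z)$. By Hamilton--Jacobi and then the gradient estimate (ii) applied to $g=Q_{c^q\sigma}f$ one gets $\partial_\sigma\tilde\Lambda + \tfrac{1}{q}|\nabla\tilde\Lambda|^q \le 0$, i.e.\ $\tilde\Lambda$ is a subsolution of the unit-coefficient Hamilton--Jacobi equation with initial data $P_tf$. A comparison principle then gives $P_t(Q_{c^q\sigma}f) \le Q_\sigma(P_tf)$, and in particular at $\sigma=c^{-q}$ the estimate $P_t(Q_1 f)\le Q_{c^{-q}}(P_tf)$. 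Inserting this into the Kantorovich duality, and using that the dual problem with cost $|x-y|^p/(p\sigma^{p-1})$ at $\sigma=c^{-q}$ produces the factor $c^p$ (because $q(p-1)=p$), yields $\tfrac{1}{p}W_p(P_t^*\mu,P_t^*\nu)^p\le \tfrac{c^p}{p}W_p(\mu,\nu)^p$, which is (i). This is essentially Kuwada's Lemma~3.1/3.2, and it is the step your $\Phi(s)$ was meant to encode but does not. Your $(i)\Rightarrow(ii)$ sketch and your identification of the regularization issues (a.e.\ validity of Hamilton--Jacobi, transferring a.e.\ statements to $P_t^*\mu$-a.e.\ via the continuous kernel, approximating Lipschitz $f$) are correct and do reflect the genuine technical content of Kuwada's proof.
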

Now, considering all the above, we  prove Theorem \ref{maintheorem} from which we get both uniqueness of an invariant measure and convergence to NESS of the semigroup $P_t$.

\begin{proof}[Proof of Theorem \ref{maintheorem}]
The convergence follows if we apply Kuwada's duality from Theorem \ref{theorem Kuwada} since we have the estimate \eqref{L2 gradient estimate} with  $c= \| b_N^{-1} \|_{2}^{1/2} \| b_N \|_{2}^{1/2}.$
\end{proof}

\begin{remark}
 A convergence to equilibrium in total variation norm for a similar small perturbation of the harmonic oscillator chain, has been shown recently in \cite{RAQ18}. There, a version of Harris' ergodic Theorem was applied making it possible to treat more general cases of the oscillator chain with different kind of noises, as well. However, this is a non-quantitative version of Harris' Theorem, which provides no information on the dependency of the convergence rate in $N$.  
\end{remark}

\section{Estimates on the spectral  norm of $b_N$ } \label{section of induction}
 First, let us state the following Proposition on the optimal exponential rate of convergence for the \textit{ purely harmonic chain}.
 
\begin{proposition}[Proposition 7.1 and 7.2 (3) in \cite{BM19}]\label{optimal sg for harmonic}  
 The optimal order of the spectral gap of the dynamics which evolution is described by the generator \eqref{generator}, \textit{without the perturbing potentials}, is given by the order of  $$ \rho=  \inf\{\text{Re}(\mu) : \mu \in \sigma(M) \}.$$
 The spectral gap approaches $0$ as $N$ goes to infinity, and the rate should be at least  of order $\mathcal{O}(1/N)$. 
 \end{proposition}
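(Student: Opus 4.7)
The plan is to decompose the statement into two independent claims: (i) for the purely harmonic generator $\mathcal{L}_0$, the $L^2$ spectral gap of the semigroup coincides with $\min\{\text{Re}(\mu) : \mu \in \sigma(M)\}$; and (ii) the smallest real part of an eigenvalue of $M$ is $\mathcal{O}(1/N)$.

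First I would prove (i). Since $\mathcal{L}_0$ is a second-order operator with affine drift $-Mz$ and constant diffusion, its NESS is Gaussian, as computed explicitly in \cite{RLL67}, and $\mathcal{L}_0$ preserves the polynomial filtration by total degree in $z$. On the subspace of affine functions $f(z) = v^T z$, a direct computation gives $\mathcal{L}_0 f(z) = -v^T M z$, so the action on this finite-dimensional subspace is conjugate to $-M^T$ and its eigenvalues are exactly $\{-\mu : \mu \in \sigma(M)\}$. More generally, on the $n$-th Hermite chaos the diagonal block of $\mathcal{L}_0$ has eigenvalues $-\mu_{j_1} - \cdots - \mu_{j_n}$ with $\mu_{j_i} \in \sigma(M)$. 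Each such sum has real part at most $-\min_i \text{Re}(\mu_i)$ for $n \geq 1$, with equality attained at $n = 1$, so $\rho = \min_i \text{Re}(\mu_i)$.

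Second, I would establish the $\mathcal{O}(1/N)$ upper bound via the quadratic eigenvalue problem for $M$. Using the block structure and the Schur complement in $\det(M - \mu I)$, the eigenvalues of $M$ satisfy $\det(\mu^2 I - \mu \Gamma + B) = 0$. For the unperturbed problem $\Gamma = 0$ the eigenvalues are $\pm i\sqrt{\lambda_k}$, where $\lambda_k = a + 4c \sin^2\bigl(\tfrac{k\pi}{2(N+1)}\bigr)$ and the corresponding eigenvector of $B$ is $\phi_k(j) = \sqrt{\tfrac{2}{N+1}}\sin\bigl(\tfrac{k\pi j}{N+1}\bigr)$. Pairing the eigenvalue equation against $\phi_k$ and using that $B$ is symmetric gives the scalar relation $\mu^2 \langle \phi_k, v\rangle - \mu \langle \phi_k, \Gamma v \rangle + \lambda_k \langle \phi_k, v\rangle = 0$ for any eigenvector $v$ of the pencil. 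With the ansatz $v \approx \phi_k$ this yields, to leading order, a real shift $\alpha_k \approx \tfrac{1}{2}\langle \phi_k, \Gamma \phi_k \rangle = \tfrac{\gamma}{2}\bigl(\phi_k(1)^2 + \phi_k(N)^2\bigr) \leq \tfrac{2\gamma}{N+1}$, producing an eigenvalue of $M$ of real part of order $1/N$.

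The main obstacle is making this perturbation argument rigorous, since a naive expansion in $\gamma$ is not uniform in $N$: the unperturbed eigenvalues $\pm i\sqrt{\lambda_k}$ cluster as $N \to \infty$ and $\phi_k$ is not exactly an eigenvector of the pencil. I would address this either by linearising the quadratic pencil into a $2N \times 2N$ companion matrix and applying a Bauer--Fike type estimate anchored on the explicit $\phi_k$, or by following the matrix-analytic route of \cite{BM19} that exploits the Jacobi structure of $B$ to construct an approximate eigenpair $(\mu, (p,q))$ of $M$ with residual controlled by $\gamma/N$. Either route yields an honest eigenvalue of $M$ with real part of order $1/N$, which, combined with part (i), gives the claim.
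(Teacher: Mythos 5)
Your part (i) is essentially a sketch of the Arnold--Erb / Monmarch\'{e} result on $L^2$ spectral gaps for hypoelliptic Ornstein--Uhlenbeck generators, which the paper cites directly rather than reproves; the Hermite-chaos decomposition you describe is the standard argument behind that cited theorem and is correct (modulo the caveat that Jordan blocks of $M$ may introduce a polynomial prefactor, which is why the paper phrases the conclusion as ``the order of $\rho$'' and why Arnold--Erb give the rate as $\rho-\epsilon$ for any $\epsilon>0$).

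For part (ii), however, you have taken a much harder road than the paper, and the obstacle you flag at the end is real. The paper disposes of the $\mathcal{O}(1/N)$ bound with a one-line trace identity: since $M=\begin{bmatrix}\Gamma & -I\\ B & 0\end{bmatrix}$, one has $\sum_{\lambda\in\sigma(M)}\operatorname{Re}(\lambda)=\operatorname{Re}\operatorname{Tr}(M)=\operatorname{Tr}(\Gamma)=2\gamma$, a quantity independent of $N$; because all $2N$ summands are strictly positive, $2N\inf\{\operatorname{Re}(\lambda)\}\le 2\gamma$, giving $\inf\{\operatorname{Re}(\lambda)\}\le\gamma/N$. No perturbation theory, no companion linearisation, no Bauer--Fike, no control of eigenvalue clustering. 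Your quadratic-pencil heuristic is physically illuminating and, if pushed through carefully, actually points at the sharper $\mathcal{O}(N^{-3})$ bound (for $k=1$ one has $\phi_1(1)^2\sim N^{-3}$, not $N^{-1}$), which is the content of the companion work \cite{BM19} that the paper defers to. But making that expansion rigorous uniformly in $N$ -- precisely because the unperturbed eigenvalues $\pm i\sqrt{\lambda_k}$ cluster and the residual of the ansatz $v\approx\phi_k$ is not obviously $o(\gamma/N^3)$ -- is a genuinely nontrivial piece of matrix analysis that this Proposition does not need. For the $\mathcal{O}(1/N)$ claim stated here, you should just use the trace argument.
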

 
\begin{proof} We exploit the results by Arnold and Erb in \cite{AE14} or  by Monmarch\'{e} in \cite[Proposition 13]{Mon15}  saying that working with an operator of the form $$L f(x) = -(B x) \cdot \nabla_x f (x) + \text{div}(D \nabla_x f )(x)$$ under the conditions that (i) no non-trivial subspace of $\text{Ker} D$ is invariant under $B^T$ and (ii) the matrix $B$ is positively stable, \textit{i.e.} all the eigenvalues have real part greater than $0$, then the associated semigroup has a unique invariant measure and  if $\rho = \inf\{\text{Re}(\mu) : \mu \in \sigma(B) \}>0$, 
the sharp exponential rate of the above generator (of an Ornstein-Uhlenbeck process) is at least $\rho - \epsilon$ and at most $\rho$. This holds for every $\epsilon \in (0, \rho)$, concluding  the first statement of the Proposition. In particular, when $m$ is the maximal dimension of the Jordan block of $C$ corresponding to the eigenvalue $\lambda$ such that $\text{Re}(\lambda) = \rho$,  the quantity $(1+t^{2(m-1)})e^{-2\rho t}$ is the optimal one regarding the long time behaviour, \cite{Mon15}. This implies that the spectral gap of the generator is $\rho-\epsilon$, whereas the constant in front of the exponential $c(\epsilon,m):= \sup_{t}(1+t^{2(m-1)})e^{-2\epsilon t}.$  \\

When we look  at the purely  harmonic chain this is our case as well:  the first condition is equivalent to the hypoellipticity of the operator $L$, \cite[Section 1]{Ho69}, and our generator \eqref{generator} is indeed hypoelliptic: it is proven, \cite[Section 3, page 667]{EPR99b} and  \cite[Section 3]{Car07}, for more general classes of potentials than the quadratic ones, that the generator satisfies the rank condition of H\"{o}rmander's hypoellipticity Theorem, \cite[Theorem 22.2.1]{Hormander}.  
Also the matrix $M$ is stable for every $N$, \text{i.e.} $\text{Re}(\lambda) \ge 0$  for all the eigenvalues $\lambda$, see \cite[Lemma 5.1]{JPS17}.\\

For the second conclusion of the Proposition:\\
Since the matrix $$M= \begin{bmatrix} \Gamma & -I \\ B & 0 \end{bmatrix} $$ we have $$ \text{Tr}(\Gamma) = \text{Re}(\text{Tr}(\Gamma))= \text{Re}(\text{Tr}(M)) = \sum_{\lambda \in \sigma(M) }\text{Re}(\lambda). $$ Now since the $\text{Tr}(\Gamma)$ does not depend on the number of oscillators, the sum of $2N$ (counting multiplicity) positive terms\footnote{Note again that the $\inf \{\text{Re}(\lambda) \} $ is strictly positive, see \cite[Lemma 5.1(2)]{JPS17}} in the r.h.s. should also be uniformly bounded in $N$. Since  $$  \sum_{\lambda \in \sigma(M) }\text{Re}(\lambda) \geq 2N\inf \{\text{Re}(\lambda): \lambda \in \sigma(M)\} $$ we have that $$2N\inf \{\text{Re}(\lambda): \lambda \in \sigma(M)\}\  \text{is bounded asymptotically with $N$}$$ which implies that \begin{equation} 
  \inf \{\text{Re}(\lambda): \lambda \in \sigma(M)\} \leq \frac{C}{2N} = \mathcal{O} \left( \frac{1}{N} \right) \end{equation} for some constant $C$ independent of $N$. Thus the smallest real part of the eigenvalues should have order less than $\frac{1}{N}$. 
\end{proof}

\begin{remark} \label{remark before the induction proof}
Let us remark  here that $B$ can be seen as the Schr\"odinger operator : $B=-c\ \Delta^N + \sum_{i=1}^N \alpha \delta_i$ where $c>0$, $  \Delta^N $ is the Dirichlet Laplacian on $ l^2( \{1,\dots,N \})$ and $\delta_i$ the projection on the $i$-th coordinate.  We give the following definition for the (discrete)  Laplacian on $ l^2( \{1,\dots,N \})$ with Dirichlet boundary conditions: $$-\Delta^N := \sum_{i=1}^{N-1} L^{i,i+1} $$ where $L^{i,i+1}$ are uniquely determined by the quadratic form \begin{align*} \langle u, L^{i,i+1} u \rangle &= (u(i)-u(i+1))^2\quad \text{with} \\ u(0)&=u(N+1)=0\qquad \text{Dirichlet b.c}. \end{align*} We will use this information just after the proof of Proposition \eqref{propos of induction}, to quantify the equivalence constants and the LSI constant from \hyperref[section with functional ineqaualities]{Section 3}. 
\end{remark}

\subsection{Strategy}
 We consider the sequence of symmetric  matrices $ \{ b_i \}_{i=0}^{N}$ such that for each $i$, $b_i =\begin{bmatrix} x_i & z_i \\ z_i^T & y_i \end{bmatrix}$  solves the following set of Lyapunov equations 
  
\begin{align*}
b_0 M+M^T b_0 &= \text{diag} \left( 2T_L,0,\dots,2T_R,0,\dots,0  \right)  := 2 \tilde{\Theta}\\ b_1 M + M^Tb_1 &=  \text{diag}\left(2 T_L,0, \dots, 0,2T_R,\frac{1}{2},0, \dots, 0,\frac{1}{2} \right):= \Pi_1= \begin{bmatrix} J_1^{(\Delta T)} &0 \\ 0& J_1^{(0)}  \end{bmatrix}
\\ b_2M+M^Tb_2&= \text{diag}\left(2T_L,\frac{1}{2},0,\dots,0,\frac{1}{2},2T_R,\frac{1}{2},\frac{1}{2},0,\dots,0,\frac{1}{2},\frac{1}{2}\right) := \Pi_2= \begin{bmatrix} J_2^{(\Delta T)} &0 \\ 0& J_2^{(0)}  \end{bmatrix}\\
&\vdots \\
b_{N} M+ M^T &b_{N}= \text{diag}(2T_L, 1, \dots,1,2T_R, 1, 1, \dots, 1,1) := \Pi_{N}
\end{align*}
\textit{i.e.} in every step we add half a unit in the diagonal of each diagonal block \textit{from both ends} until we have a full diagonal matrix in the r.h.s of the Lyapunov equation (with only nonzero entries). This steps represent the\textit{ spread of dissipation  from the endpoints of the chain to each particle} like the commutators would do in a classical hypoelliptic setting.  We could stop the process at the $\lfloor N/2 \rfloor +1 $ steps, but to ease the notation we do it $N$ times, and it will be like running two inductions simultaneously: one adding $1/2$ 's from the top left until the bottom right side and one from the bottom right to the top left side of the matrix. 
See Figure \ref{Diagram arrows} in the introduction for a visualization. 
\\

\noindent
There is a positive semi-definite, symmetric solution $b_i$ to each one of these equations $$ b_i = \int_0^{\infty} e^{-t M}  \Pi_i  e^{-tM^T}  dt. $$ The proof can be found in \cite[page 82]{SL61} for positive right hand side of the \textit{Lyapunov} equations. The same proof holds for nonnegative.\\

\subsection{Matrix equations on Lyapunov equation}
\begin{lemma} \label{lemm: blocks- equations} For $0 \le m \le N$, we have the following equations for the blocks $x_m,y_m$ and $z_m$ of the matrix $b_m$:  \begin{align}
  -z_m &=z_m^T + J_m^{(0)} \label{eq: 1} \\
  x_m &= B y_m +  \Gamma z_m  \label{eq: 2}\\
  -Bz_m+z_mB -& B J_m^{(0)}  = J_m^{(\Delta T)}- x_m \Gamma - \Gamma x_m  \label{eq: 3} \\
  y_m B - By_m &= \Gamma + z_m \Gamma + \Gamma z_m   \label{eq: 4a}.
  \end{align}
 Here $J_m^{(0)} =  \operatorname{diag}(1,1,\dots,1,0,\dots,0, 1,1,\dots,1)$ where the $0$'s start at $(m+1,m+1)$-entry and stop at $(N-(m+1),N-(m+1))$-entry, and\\ $J_m^{(\Delta T)} = \operatorname{diag}(2T_L,1,\dots,1,0,\dots,0,1,\dots,1,2T_R) $ where the $0$'s start at $(m+2,m+2)$-entry and stop at $(N-(m+2),N-(m+2))$-entry.
\end{lemma}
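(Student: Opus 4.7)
\medskip

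The plan is entirely algebraic: expand the single matrix equation $b_m M + M^T b_m = \Pi_m$ in $2\times 2$ block form and read off the four scalar matrix identities. Using that $B$ is symmetric (so $M^T = \begin{bmatrix} \Gamma & B \\ -I & 0 \end{bmatrix}$) and the block decomposition $b_m = \begin{bmatrix} x_m & z_m \\ z_m^T & y_m \end{bmatrix}$, I would first compute
\[
b_m M = \begin{bmatrix} x_m\Gamma + z_m B & -x_m \\ z_m^T\Gamma + y_m B & -z_m^T \end{bmatrix}, \qquad
M^T b_m = \begin{bmatrix} \Gamma x_m + B z_m^T & \Gamma z_m + B y_m \\ -x_m & -z_m \end{bmatrix}.
\]
Equating the sum with $\Pi_m = \operatorname{diag}(J_m^{(\Delta T)}, J_m^{(0)})$ block-by-block gives four primitive relations: the bottom-right block reads $-z_m - z_m^T = J_m^{(0)}$, the top-right gives $-x_m + \Gamma z_m + B y_m = 0$, the bottom-left gives $z_m^T \Gamma + y_m B - x_m = 0$, and the top-left gives $x_m\Gamma + \Gamma x_m + z_m B + B z_m^T = J_m^{(\Delta T)}$.

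From here the stated identities follow by simple substitution. Equation \eqref{eq: 1} is exactly the bottom-right relation; equation \eqref{eq: 2} is a rewriting of the top-right relation. For \eqref{eq: 3}, I would substitute $z_m^T = -z_m - J_m^{(0)}$ (from \eqref{eq: 1}) into the top-left relation to turn $B z_m^T$ into $-B z_m - B J_m^{(0)}$ and rearrange. For \eqref{eq: 4a}, I would insert \eqref{eq: 2} into the bottom-left relation to replace $x_m$, again substitute for $z_m^T$ using \eqref{eq: 1}, and collect: the result is
\[
y_m B - B y_m = \Gamma z_m + z_m \Gamma + J_m^{(0)} \Gamma.
\]

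The only point requiring care is replacing $J_m^{(0)} \Gamma$ by $\Gamma$ in the last step. This is where the structure of the noise matters: $\Gamma = \operatorname{diag}(\gamma, 0, \dots, 0, \gamma)$ is supported only at the two endpoints of the chain, and, by the description of $J_m^{(0)}$ (which has $1$'s at positions $1,\dots,m$ and $N-m,\dots,N$), both endpoints $i=1$ and $i=N$ lie in the support of $J_m^{(0)}$ for every $m \geq 1$. Hence $J_m^{(0)} \Gamma = \Gamma$ pointwise on the diagonal, which converts the expression above into exactly \eqref{eq: 4a}. The $m=0$ case does not appear in the lemma's statement as written, and for $m \geq 1$ this identity is automatic; I expect this is the only subtle step, the remainder being bookkeeping on block multiplication.
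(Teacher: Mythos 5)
Your proof is correct and follows the same route as the paper: expand $b_m M + M^T b_m = \Pi_m$ block-by-block, read off the four primitive identities, and substitute \eqref{eq: 1} and \eqref{eq: 2} into the top-left and bottom-left blocks to get \eqref{eq: 3} and \eqref{eq: 4a}, with the observation that $J_m^{(0)}\Gamma = \Gamma$ when $m \ge 1$. One small correction: the lemma as stated does cover $m=0$ (it says ``for $0 \le m \le N$''), so the case is not absent as you suggest — the paper's own proof handles it by deriving the variant $y_0 B - B y_0 = z_0\Gamma + \Gamma z_0$ (there $J_0^{(0)} = 0$, so the extra $\Gamma$ term drops out), and your derivation gives exactly this when you keep the $J_m^{(0)}\Gamma$ term rather than simplifying it.
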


\begin{proof}
We consider $m$ s.t. $0 \le m \leq N$, where $b_m$ solves \begin{equation} \label{tupos gia b_m} b_m M +M^T b_m=  \Pi_m \end{equation}  and where $$\Pi_m = \begin{bmatrix} J_m^{(\Delta T)} &0 \\ 0& J_m^{(0)}  \end{bmatrix}. $$  

From \eqref{tupos gia b_m} and considering that $x_m$ and $y_m$ are symmetric matrices,  we get  $$  \begin{bmatrix} x_m\Gamma + \Gamma x_m +z_mB+Bz_m^T & -x_m+\Gamma z_m + B y_m \\ -x_m+ z_m^T \Gamma +y_m B & -z_m^T -z_m  \end{bmatrix} = \begin{bmatrix} J_m^{(\Delta T)} &0 \\ 0& J_m^{(0)}  \end{bmatrix}. $$ From that we get  \eqref{eq: 1} and \eqref{eq: 2} directly, and also that:
  \begin{align}
  Bz_m^T+z_mB & = J_m^{(\Delta T)}- x_m \Gamma - \Gamma x_m \label{eq: 3pr}
  \end{align}
 and by applying \eqref{eq: 1}  to \eqref{eq: 3pr} we get \eqref{eq: 3}.
 
 Also, using that $x_m$ and $y_m$ are required to be symmetric matrices,  from the transposed version of \eqref{eq: 2}, we get  the equation  $$ x_m = y_m B -z_m \Gamma - J_m^{(0)} \Gamma $$ which, combined with \eqref{eq: 2}, gives \begin{align}
 &\eqref{eq: 4a}\quad  \text{for}\ m\geq 1\ \text{and}   \\
 y_m B - By_m &=  z_m \Gamma + \Gamma z_m \quad \text{for}\ m=0  \label{eq: 4b}
 \end{align}
 \end{proof}

 We perform all the calculations from now on when the dimension of the block matrices, $N$,  is odd. The same calculations with minor differences hold when $N$ is even as well.\\

\subsection{Calculations for $m=0,1,2$}  Before we start analyzing the form of the block $z_N$, we first present in this subsection how each unit in the right hand side of the Lyapunov equation  \eqref{tupos gia b_m} for $0 \le m \leq N$ (that corresponds to the spread of noise on the system), affects the $z_m$ block of the solution $b_m$. \\ This, motivated by the strategy-subsection above,  is only to make it easier for the reader to follow on how perturbing the r.h.s. of the Lyapunov equation affects the solution in each sequential step. Then we analyze \hyperlink{mylink}{the final step} (proof of Lemma \ref{lemma: z_N block}) the result that we are interested in. Thus, the reader who is interested only in the proofs, and  not in the motivation behind them, might  skip this subsection.\\
 %We look at how each unit in the right hand side of the Lyapunov equation \eqref{tupos gia b_m} for $0 \le m \leq N$ (that corresponds to the spread of noise on the system), is affecting the $z_m$ block of the solution $b_m$.\\

For $m=0$: It has been computed in \cite{RLL67}, where they found exactly the elements of $z_0 := (z_{ij}^{(0)})_{1\leq i,j \leq N}$ when $a=0,c=1$, to be \begin{align}  \label{Lebowitz explicit solution} z_{1,j}^{(0)}=\frac{\operatorname{sinh}((N-j)\alpha)}{\operatorname{sinh}(N\alpha)} \end{align} for $\alpha$ constant such that $ \operatorname{cosh} (\alpha) = 1+\frac{1}{2\gamma}$. (It was done in the same manner with \cite[Section 11]{WU45} but there the case was $\Delta T =0$).  Here we describe briefly the steps:
 %\footnote{The calculations in \cite{RLL67} correspond to the step $m=0$, where they found exactly the elements of $z_0 := (z_{ij}^{(0)})_{1\leq i,j \leq N}$ when $a=0,c=1$. Their chain is modelled with rigidly fixed edges meaning that they consider achain beginning with an oscillator labelled $0$ and ending with one labelled $N+1$ under the hypothesis that $q_0=q_{N+1}=0$. Also consider that the first and the last particle are pinned with harmonic forces (modelling their attachment to a wall). Then they work with the corresponding matrix $B$ (\eqref{matrix B}) being the discrete Laplacian with Dirichlet boundary conditions. }
 first we notice that $z_0$ is antisymmetric since in \eqref{eq: 1} $J_0^{(0)}=0$, and second, by \eqref{eq: 3} we get that it has a \textit{Toeplitz}-form  \begin{align} \label{form of z_0} \scalemath{0.89}{ z_0 = 
 \begin{bmatrix} 
 0& z_{1,2}^{(0)} & z_{1,3}^{(0)}& z_{1,4}^{(0)} & \cdots & z_{1,N-1}^{(0)} & z_{1,N}^{(0)} \\
 -z_{1,2}^{(0)} &0 & z_{1,2}^{(0)} & z_{1,3}^{(0)} & \cdots & z_{1,N-2}^{(0)} & z_{1, N-1}^{(0)} \\
 -z_{1,3}^{(0)} & -z_{1,2}^{(0)} &0& z_{1,2}^{(0)} & \cdots & \quad & \cdots\\
 \quad & \quad & \ddots & \quad & \quad \\
 -z_{1,N-1}^{(0)} & -z_{1,N-2}^{(0)} & \quad & \quad & \quad & 0 & z_{1,2}^{(0)} \\
 -z_{1,N}^{(0)} & -z_{1,N-1}^{(0)} &\quad &  \quad & \quad & -z_{1,2}^{(0)} & 0   
\end{bmatrix} } :
\end{align}
 Indeed  note that the r.h.s of \eqref{eq: 3}  forms a bordered matrix  \[  \mleft[
\begin{array}{c|ccc|c}   \ast & \ast & \cdots & \ast & \ast \\ 
\hline
\ast &0 & \quad & 0 & \ast \\
\vdots & \quad& \ddots & \quad & \vdots \\
\ast & 0 &\quad  & 0 & \ast \\
\hline
\ast & \ast & \cdots & \ast & \ast
\end{array}
\mright] \] \textit{i.e.} only the bordered elements are non zero and so the l.h.s of \eqref{eq: 3} should also have this bordered form. Due to the tridiagonal form of $B$ we get a \textit{Toeplitz} matrix: in particular using that $B= -c \Delta^N + a I$, the l.h.s of \eqref{eq: 3} is \begin{align} \label{eq: 0 induction} z_0 ( -c \Delta^N + a I)-(-c \Delta^N + a I) z_0 = c(\Delta^N z_0 - z_0 \Delta^N ) =\mleft[
\begin{array}{c|ccc|c}   \ast & \ast & \cdots & \ast & \ast \\ 
\hline
\ast &0 & \quad & 0 & \ast \\
\vdots & \quad& \ddots & \quad & \vdots \\
\ast & 0 &\quad  & 0 & \ast \\
\hline
\ast & \ast & \cdots & \ast & \ast
\end{array}
\mright]  \end{align} and  equating the nonboundary-entries, due to the symmetry of $\Delta^N$ and the antisymmetry of $z_0$, we have that the elements of $z_0$ will be constant along the diagonals: indeed, for $1<i<N$, for the diagonal's entries of the equation \eqref{eq: 0 induction} we have \begin{align*} -cz_{i-1,i}^{(0)}-cz_{i+1,i}^{(0)}&+2cz_{i,i}^{(0)} -2c z_{i,i}^{(0)}+c z_{i,i-1}^{(0)} + c z_{i,i+1}^{(0)}=0 \\ \text{or}\quad 2cz_{i,i+1}^{(0)}&-2c z_{i-1,i}^{(0)} =0\\ \text{and so}\quad  z_{i,i+1}^{(0)}&=z_{i-1,i}^{(0)}. \end{align*} For the superdiagonal's entries of the equation \eqref{eq: 0 induction} \begin{align*}-cz_{i-1,i+1}^{(0)}+2cz_{i,i+1}^{(0)}-c&z_{i+1,i+1}^{(0)}+cz_{ii}^{(0)}-2cz_{i,i+1}^{(0)}+cz_{i,i+2}^{(0)}=0 \\ \text{or}\quad -cz_{i-1,i+1}^{(0)} &+cz_{i,i+2}^{(0)}=0 \\ \text{so}\quad z_{i-1,i+1}^{(0)}&=z_{i,i+2}^{(0)}.   \end{align*} We repeat these calculations through all the non-boundary entries of the matrix, and using the information we get from each one calculation, we end up with the Toeplitz form of $z_0$ in \eqref{form of z_0}.  \\

Now find that a solution to \eqref{eq: 4b} is a symmetric Hankel matrix which is antisymmetric about the cross diagonal and such that $(y_{1,j}^{(0)})_{j=1}^{N-1}= z_{1,j+1}^{(0)}.$ Then apply \eqref{eq: 2} to  get a formula for the entries of $x_0$ and from the  information due to  \eqref{eq: 3} about the bordered entries of $x_0$, end up with the linear equation $$K_0 \cdot \underline{z_0} =e_1.$$ Here $\underline{z_0},\ e_1 \in \mathbb{C}^{N-1}$ are the vectors  $\underline{z_0}=(z_{1,1}^{(0)}, \dots, z_{1,N-1}^{(0)})^T$,  $e_1=(1,0,\dots,0)^T$ and $K_0$ is a $(N-1) \times (N-1)$ symmetric Jacobi matrix whose entries depend on the (dimensionless) friction constant $\gamma$ and interaction constant $c$: $$ K_0= cB+ \gamma^{-1} I.$$ \\ Now solve the above equation using for example Cramer's rule and find an explicit formula for the $z_{1,j}^{(0)}$'s: the recurrence formula of the determinant of $K_0$ is the same formula of the  Chebyshev polynomials of the second kind, so using properties of these polynomials and  imposing appropriate initial conditions we end up with the form \eqref{Lebowitz explicit solution}.  
The detailed implementation of this is done in \cite{RLL67}. \\

 For $m\geq 1$ we use again the equation  \eqref{eq: 3}. In the first step we get that:\\ For $m=1$, \textit{i.e.} for the form of the $z_1$-block in $b_1$, the  elements $z_{1,1}^{(1)}, z_{N,N}^{(1)}$ in the main diagonal are $-1/2$.  The difference with the $m=0$ step is that $z_1$ is not antisymmetric anymore, since $1/2$ is added in the first entry of the diagonal (due to the form of $J_1^{(0)}$). So from \eqref{eq: 1} we write $$-z_{i,i}^{(1)}=z_{i,i}^{(1)}+1\quad \text{or}\quad z_{i,i}^{(1)}=-1/2\quad \text{for}\ i=1,N.$$ But we still have the bordered form in the r.h.s. of \eqref{eq: 3}, so we still have a \textit{Toeplitz}-form for $z_1$. \\ 
 
 In the next Lemma we give the form of the $z_2$ block of $b_2$. 
 
 \begin{lemma} [For $m=2$, form of $z_2$] \label{lemma z_2} For  the $z_2$-block of $b_2$ : $ z_2 = z_2^a - J_{2}^{(0)}$ and \begin{align*} \left\{ \begin{array}{ c l } z_{1,1}^{(2)}=z_{2,2}^{(2)}=z_{N,N}^{(2)}=z_{N-1,N-1}^{(2)}=-1/2\ \text{and}\ z_{i,i}^{(2)}=0\quad \text{otherwise} \\
 z_{1,2}^{(2)}+ z_{N,N-1}^{(2)} = 2\frac{ 1 +a+2c}{4c},\quad z_{N,N-2}^{(2)}+z_{1,3}^{(2)}=1 \\ 
 z_{N-k,N}^{(2)} =z_{1,k+1}^{(2)}\quad \text{for}\ 3\le k \le N-3 .\end{array} \right.\end{align*}
 The last property is that the \textit{Toeplitz} form is not perturbed in more than $2$ diagonals away from the centre. \\
% \begin{align*}
 %\left\{ \begin{array}{ c l } z_{ij}^{(2)} \pm \frac{1+a+2c}{2c}, &\text{for}\ i=j \pm 1, \\ 
   %z_{ij}^{(2)} \pm 1/2, &\text{for}\ i=j \pm 2, \\
   %z_{11}^{(2)}=z_{22}^{(2)}=z_{NN}^{(2)}=z_{N-1,N-1}^{(2)}=-1/2\ \text{and}\ z_{ij}^{(2)}=0\ &\text{for}\ 2<i=j \leq N.
  % \end{array} \right.
   %\end{align*}
 So we denote by $\mu_{a,c}:= \frac{1+a+2c}{4c}$ and we write: \begin{align*} \scalemath{0.89}{z_2 =  \begin {bmatrix} -\frac{1}{2} & z_{1,2}^{(2)} &z_{1,3}^{(2)} & z_{1,4}^{(2)}&\cdots &  \cdots  & z_{1,N-1}^{(2)} & z_{1,N}^{(2)} \\
 -z_{1,2}^{(2)} &-\frac{1}{2} &z_{1,2}^{(2)} -\mu_{a,c} & z_{1,3}^{(2)}+\frac{1}{2} & z_{1,4}^{(2)}  &\cdots & z_{1,N-2}^{(2)} & z_{1,N-1}^{(2)} \\
 -z_{1,3}^{(2)} & -z_{1,2}^{(2)}+\mu_{a,c} &0& z_{1,2}^{(2)}-\mu_{a,c} &\cdots &\cdots  &\quad &z_{1,N-2}^{(2)}\\
 %0& -\frac{1}{2} & \mu_{a,c} & 0 &- \mu_{a,c} &\cdots & \quad & 0\\
 \vdots & \quad &\quad & \ddots &\quad & \quad & \quad \\
  \vdots & \quad &\quad & \quad &\ddots & \quad & \quad \\
  \vdots & \quad &\quad & \quad&\quad &0& -z_{N,N-1}^{(2)}+\mu_{a,c} & -z_{N,N-2}^{(2)} \\
 z_{N,2}^{(2)} & z_{N,3}^{(2)} & \dots&\quad &\quad  & z_{N,N-1}^{(2)}-\mu_{a,c} &-\frac{1}{2}& -z_{N,N-1}^{(2)} \\
 z_{N,1}^{(2)}& z_{N,2}^{(2)} &\dots&\quad &\quad & z_{N,N-2}^{(2)} &z_{N,N-1}^{(2)} &-\frac{1}{2} 
  \end{bmatrix} }.
 \end{align*} 
 \end{lemma}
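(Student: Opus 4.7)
The plan is to mimic the $m=0$ analysis carried out above for $z_0$, treating the two new unit entries of $J_2^{(0)}$ at positions $(2,2)$ and $(N-1,N-1)$ as localized perturbations of that computation. First, I would decompose $z_2$ into its symmetric and antisymmetric parts, $z_2 = z_2^a + z_2^s$. Equation \eqref{eq: 1} reads $z_2 + z_2^T = -J_2^{(0)}$, so $z_2^s = -\tfrac{1}{2} J_2^{(0)}$ is fixed immediately; in particular the diagonal of $z_2$ is read off directly as $-1/2$ at $i \in \{1,2,N-1,N\}$ and $0$ elsewhere, matching the first line of the claim.

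Next, I would substitute this decomposition into \eqref{eq: 3} and use $[z_2^s, B] = -\tfrac{1}{2}[J_2^{(0)}, B]$ to obtain the symmetric commutator identity
\[
z_2^a B - B z_2^a \;=\; J_2^{(\Delta T)} + \tfrac{1}{2}\bigl(B J_2^{(0)} + J_2^{(0)} B\bigr) - (x_2\Gamma + \Gamma x_2),
\]
which drives the determination of $z_2^a$. The term $x_2\Gamma + \Gamma x_2$ is supported in rows/columns $\{1,N\}$ because $\Gamma = \operatorname{diag}(\gamma,0,\dots,0,\gamma)$; and since $B$ is tridiagonal while $J_2^{(0)}$ is diagonal with support in $\{1,2,N-1,N\}$, a short expansion shows that the restriction of the right-hand side to the true interior $\{2,\dots,N-1\}^{2}$ is supported on only six entries: the ``near-endpoint'' triples $(2,2),(2,3),(3,2)$ and their mirror images $(N-1,N-1),(N-1,N-2),(N-2,N-1)$. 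This localization is the key input.

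Third, with the pointwise formula
\[
\bigl(z_2^a B - B z_2^a\bigr)_{ij} = c\bigl((z_2^a)_{i-1,j} + (z_2^a)_{i+1,j} - (z_2^a)_{i,j-1} - (z_2^a)_{i,j+1}\bigr),
\]
equating the commutator to zero at every interior index strictly separated from those six positions gives, by a walk along consecutive (super)diagonals combined with the antisymmetry of $z_2^a$, the Toeplitz invariance $(z_2^a)_{i,j} = (z_2^a)_{i+1,j+1}$ in the bulk. This yields both the stated ``more than two diagonals from the centre'' Toeplitz property and the reflection-type identity $z_{N-k,N}^{(2)} = z_{1,k+1}^{(2)}$ for $3 \le k \le N-3$. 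The two scalar identities $z_{1,2}^{(2)} + z_{N,N-1}^{(2)} = 2\mu_{a,c}$ and $z_{1,3}^{(2)} + z_{N,N-2}^{(2)} = 1$ are then to be extracted from the six remaining interior entries of the right-hand side: evaluating the commutator at $(2,2)$, $(2,3)$ and their mirror images, and eliminating the common bulk super-diagonal value of $z_2^a$ using the Toeplitz relation from the previous step, supplies the claimed formulas.

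The main obstacle will be showing that the forcing at these six positions does not leak further inward through the discrete commutator, i.e.\ that the perturbation of the Toeplitz structure is confined to a band of two diagonals from the centre rather than propagating throughout the matrix; this requires iterating the discrete-harmonic identity carefully and matching the resulting cascade against the boundary data inherited from the still-unknown block $x_2$ along rows and columns $1$ and $N$. Once this containment is established, the remaining work is a bookkeeping extension of the $m=0$ argument that produced \eqref{form of z_0}.
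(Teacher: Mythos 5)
Your proposal follows essentially the same route as the paper: reduce to the antisymmetric/symmetric split dictated by \eqref{eq: 1}, push this into \eqref{eq: 3} to get a commutator equation, observe that the forcing in the interior $\{2,\dots,N-1\}^2$ is localized, and then read off Toeplitz invariance from the discrete-harmonic identity at unforced interior positions. The only cosmetic difference is that you work with $z_2^a$ after symmetrizing $\tfrac12\bigl(BJ_2^{(0)}+J_2^{(0)}B\bigr)$ into the right-hand side, whereas the paper keeps $z_2$ and carries the term $BJ_2^{(0)}$ on the left; both reduce to the same pointwise relations. Your identification of the six interior forcing entries $(2,2),(2,3),(3,2)$ and their mirrors is correct and is exactly the structure the paper exploits.

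The one thing I would push back on is the framing of a ``main obstacle'' about the perturbation possibly leaking further inward. There is no cascade to worry about: the discrete commutator identity
\[
z_{i-1,i+k}^{(2)} + z_{i+1,i+k}^{(2)} - z_{i,i+k-1}^{(2)} - z_{i,i+k+1}^{(2)} = 0
\]
at any interior $(i,i+k)$ away from the six forced positions yields, once the $(k-1)$-th superdiagonal relation is in hand, the shift relation $z_{i,i+k+1}^{(2)} = z_{i-1,i+k}^{(2)}$, so each superdiagonal is constant between the single forced step (between rows $1$ and $2$, or between rows $N-1$ and $N$) and the opposite end. Concretely: the $(i,i)$-entries for $3\le i\le N-2$ give $z_{i-1,i}^{(2)}=z_{i,i+1}^{(2)}$, which together with the $(2,2)$ and $(N-1,N-1)$ evaluations fixes $z_{1,2}^{(2)}+z_{N,N-1}^{(2)}=2\mu_{a,c}$; the $(i,i+1)$-entries likewise propagate the $(2,3)$ forcing to give $z_{1,3}^{(2)}+z_{N,N-2}^{(2)}=1$; and for $k\ge 2$ no interior forcing is ever hit, giving the exact Toeplitz property past two diagonals from the centre. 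So what you defer as ``bookkeeping extension of the $m=0$ argument'' is the actual proof, and it closes cleanly; I would simply carry out those entry-wise evaluations (as the paper does at $(2,2),(2,3)$, $(i,i)$, $(i,i+1)$ and $(i,i+k)$) rather than present them as a potential difficulty. One small bookkeeping caution: be consistent about whether $J_2^{(0)}$ carries entries $1$ (Lemma \ref{lemm: blocks- equations}) or $\tfrac12$ (as in the block of $\Pi_2$), since this affects the factor in $z_2^s$ and the constant appearing in the $(2,2)$ balance that produces $\mu_{a,c}=\tfrac{1+a+2c}{4c}$.
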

 \begin{proof}[Proof of Lemma \ref{lemma z_2}] 
 $z_2 $ is not antisymmetric but from \eqref{eq: 1} we immediately have that $ z_2=z_2^a-J_{2}^{(0)}$, where $z_2^a$ is antisymmetric.  So we work with $z_2^a$ and due to the antisymmetry we look only at the upper diagonal part of the matrix. \\
 
Here, besides that $z_2$ is not antisymmetric, the r.h.s  of \eqref{eq: 3} is not a bordered matrix anymore and also the matrix $BJ_2^{(0)}$ affects non boundary entries as well, in particular it adds the $(3 \times 2)$ top-left and bottom-right submatrices of $B$ to the $(3 \times 2) $ respective submatrices of $z_2$:\begin{align} \label{eq: 2 induction} \scalemath{0.92}{ c(\Delta^N z_2 - z_2 \Delta^N ) + (c\Delta^N -aI) \text{diag}\left(\frac{1}{2},\frac{1}{2},0,\dots,0,\frac{1}{2},\frac{1}{2}\right) = \mleft[
\begin{array}{c|c|ccc|c|c}   \ast & \ast &\ast & \cdots &\ast& \ast & \ast \\ 
\hline
\ast&1/2&0 & \quad & 0 &0& \ast \\
\hline
\ast& 0 &0 &\quad &0&0 &\ast \\
\vdots & \quad& \quad & \ddots &\quad&\quad& \vdots \\
\ast& 0& 0 & \quad&0&0&\ast \\
\hline
\ast & 0 &0 &\quad  &0& 1/2& \ast \\
\hline
\ast & \ast & \ast & \cdots &\ast&  \ast & \ast
\end{array}
\mright] .}\end{align}
%On the one hand,  we have the constants $\mu_{a,c}$ and $1/2$ added in the diagonals as described above, because of the $1/2$'s in the right side-matrix and the 2nd term in the above equation \eqref{eq: 2 induction}. On the other hand, 
%These changes  remain exactly the same along the diagonals (kind of a Toeplitz form again) and it does not affect more than $2$ diagonals away from the centre: 
Equating the entries that correspond to the zero-submatrix as drawn above we will have the same calculations as in the step $m=0$.\\
From \eqref{eq: 1} we have $z_{1,1}^{(2)}=z_{2,2}^{(2)}=z_{N,N}^{(2)}=z_{N-1,N-1}^{(2)}=-1/2$ and $z_{i,i}^{(2)}=0$ for $N-1>i>2$.\\  Looking at the $(2,2)$-entry and the $(2,3)$-entry of the equation \eqref{eq: 2 induction} we have respectively
 \begin{align*}
 -c z_{2,1}^{(2)} + 2c z_{2,2}^{(2)}-c z_{2,3}^{(2)}& +c z_{1,2}^{(2)} -2c z_{2,2}^{(2)}+c z_{3,2}^{(2)} -\frac{(a+2c)}{2} = \frac{1}{2}  \\ -c z_{2,2}^{(2)}+2c z_{2,3}^{(2)}-c z_{2,4}^{(2)}+&c z_{1,3}^{(2)}-2c z_{2,3}^{(2)}+c z_{3,3}^{(2)}=0
\end{align*}
and since $z_{i,j}^{(2)}=-z_{j,i}^{(2)}$ for $j \neq i$ from \eqref{eq: 1}, and also $z_{2,2}^{(2)}=-1/2, z_{3,3}^{(2)}=0$, we get  $$ z_{2,3}^{(2)}=z_{1,2}^{(2)}- \mu_{a,c}\quad \text{and}\quad z_{2,4}^{(2)}=z_{1,3}^{(2)}+1/2.$$
Now looking at the entries $(i,i)$ for $3 \le i\le N-2$ of equation \eqref{eq: 2 induction}, we write (as in the $0$-step): $$-cz_{i,i-1}^{(2)} + 2c z_{i,i}^{(2)} -cz_{i,i+1} + cz_{i-1,i}^{(2)} -2c z_{i,i}^{(2)} + cz_{i+1,i}^{(2)} =0 $$ which gives $$ z_{i-1,i}^{(2)}=z_{i,i+1}^{(2)},\quad 3 \le i\le N-2.$$  In particular \begin{align*} z_{i,i+1}^{(2)} &= z_{1,2}^{(2)}-\mu_{a,c}=- z_{N,N-1}^{(2)} + \mu_{a,c}\quad \text{and}\\ z_{i,i+2}^{(2)} &= z_{1,3}^{(2)}+ \frac{1}{2} = -z_{N,N-2}^{(2)} -\frac{1}{2} \end{align*} 
where the second equalities in both lines  are proved by looking at the reversed direction (bottom-right to top-left side of the matrix). Also for $k \ge 2$ and $ 1 \le i \le N-k$, look at $(i,i+k)$ entry of the equation \eqref{eq: 2 induction} and get $$ z_{i,i+k+1}^{(2)}=z_{i-1,i+k}^{(2)}.$$ This corresponds to the Toeplitz property that holds for all the diagonals apart from the $5$ central ones. Remember that for $m=0$ we end up with a Toeplitz matrix. 
%and repeating these calculations, by looking at the zero-submatrix of the r.h.s. of \eqref{eq: 2 induction} (like the $0$-step) we get $$ \text{for}\ i \ge 3,\quad z_{i,i+2}^{(2)}=z_{i-1,i+1}^{(2)}\quad \text{and so on.}$$ 
%The same calculations hold for the reversed induction, \textit{i.e.} from bottom-right to top-left side of the matrix. (In the $N$-th step later there are more details in the 'reversed calculations'). 
\end{proof}

\textit{In the $m$-th step of the sequence of these matrix equations},  for the $z_m$- block of $b_m$, the central $(4m-3)$ diagonals have a perturbed Toeplitz form: the elements across these diagonals on each line are changed by constants that depend on the coefficients $a,c$. 
 %since the elements across these diagonals increase by constants that depend on the coefficients $a,c$ and their absolute value is increasing as $m$ grows. 
 The resulting matrix $z_m$ is described in the following way, where $\mu_{a,c}:= \frac{1+a+2c}{4c}$:\\  
\begin{align*}
\left\{ \begin{array}{ c l }  z_{1,j}^{(m)}+z_{N,N-(j-1)}^{(m)} = m\mu_{a,c},\quad &\text{for j even},\  j\le m  \\
z_{1,j}^{(m)}+z_{N,N-(j-1)}^{(m)} = -m,\quad &\text{for j odd},\  j\le m \\
z_{N-j,N}^{(m)} = z_{1,j+1}^{(m)},\quad &\text{for}\ m <j<N-2,\qquad \textit{(Toeplitz form)}\\
z_{i,i}^{(m)} =-1/2, &\text{for}\ 1 \le m\ \text{and}\  i \ge N-m\\ z_{i,i}^{(m)}=0,\ &\text{for}\ m < i < N-m.
\end{array} \right.
\end{align*}
The explanation is the same as in the step $m=2$ but this holds for an arbitrary $m\leq N$.\\

\subsection{Preliminaries: compute the blocks $z_N$, $y_N, x_N$ of $b_N$} \label{subsection: compute z_N}

\begin{lemma}[Form of $z_N$ block] \label{lemma: z_N block} The matrix $z_N:= (z_{i,j}^{(N)})_{1 \le i,j \le N}$ is a real $N \times N$ matrix of the form $$ z_N = z_N^{a}- \frac{1}{2} I $$ where $z_N^{a}= [z_{i,j}^{(N),a}]$ is antisymmetric. We denote by $\mu_{a,c} := \frac{1+a+2c}{2c}$. $z_N$  has  the following perturbed Toeplitz form: for $2 \le i\le N-k$ and $1 \le k  \le N-2$,  \begin{align}\label{eq: pert Toeplitz form}  \left\{ \begin{array}{ c l }
 z_{i,i+k}^{(N),a} - z_{i-1,i+k-1}^{(N),a}=-\mu_{a,c},\ &\text{for}\ k\ \text{odd}  \\ 
  z_{i,i+k}^{(N),a} - z_{i-1,i+k-1}^{(N),a}=1,\ &\text{for}\ k\ \text{even} \end{array} \right.
   \end{align} and for the second and second-to-last line respectively: 
   \begin{align}   \left\{ \begin{array}{ c l }
 z_{2,k}^{(N),a} - z_{1,k-1}^{(N),a}=-\mu_{a,c},\quad z_{N-1,k}^{(N),a}-z_{N,k+1}^{(N),a}= - \mu_{a,c},\ &\text{for}\ k\ \text{odd}  \\ 
  z_{2,k}^{(N),a} - z_{1,k-1}^{(N),a}=1,\quad  z_{N-1,k}^{(N),a}-z_{N,k+1}^{(N),a}=1   &\text{for}\ k\ \text{even} \\
 \end{array} \right.
\end{align}
 Regarding the 'cross-diagonal' we have, for $1 \le k  \le N-2$, \begin{align} \label{x_ij cross diagonal}
   \left\{ \begin{array}{ c l }
 z_{i,i+k}^{(N),a} - z_{N-k-(i-1),N-(i-1)}^{(N),a}=(N-k-2i+1)\mu_{a,c},\ &\text{for}\ k\ \text{odd},\ 1\le i \le \frac{N-k}{2}  \\ 
  z_{i,i+k}^{(N),a} - z_{N-k-(i-1),N-(i-1)}^{(N),a}=k-N+2i-1,\ &\text{for}\ k\ \text{even},\ 1 \le i \le \frac{N-(k+1)}{2}. \end{array} \right.
   \end{align}
   In particular,  \begin{align} \label{specific case pert Toepl}  \left\{ \begin{array}{ c l }
 z_{1,1+k}^{(N),a} +z_{N,N-k}^{(N),a}=(N-(k+1))\mu_{a,c},\ &\text{for}\ k\ \text{odd}  \\ 
  z_{1,1+k}^{(N),a} +z_{N,N-k}^{(N),a}=k-N+1,\ &\text{for}\ k\ \text{even}. \end{array} \right. \end{align} This corresponds to the relation of the first row with the last row of the matrix.
  \end{lemma}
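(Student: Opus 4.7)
The plan is to derive all the structural assertions directly from the matrix relations in Lemma \ref{lemm: blocks- equations} specialized to $m=N$, where $J_N^{(0)} = I$ and $J_N^{(\Delta T)} = \operatorname{diag}(2T_L, 1, \dots, 1, 2T_R)$. First, \eqref{eq: 1} immediately gives $z_N + z_N^T = -I$, which is equivalent to $z_N^a := z_N + \tfrac{1}{2}I$ being antisymmetric; this yields the decomposition $z_N = z_N^a - \tfrac{1}{2}I$. Substituting this into \eqref{eq: 3} and using that $I$ commutes with $B$ reduces the problem to the commutator identity
\[
z_N^a B - B z_N^a \;=\; B + J_N^{(\Delta T)} - x_N \Gamma - \Gamma x_N.
\]
The right-hand side has a highly restricted support: $B$ is tridiagonal, $J_N^{(\Delta T)}$ is diagonal, and since $\Gamma = \operatorname{diag}(\gamma,0,\dots,0,\gamma)$ the term $x_N\Gamma + \Gamma x_N$ lives only on the first and last rows and columns. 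On the ``deep interior'' region $2 \le i, j \le N-1$ with $|i-j| \ge 2$ the RHS vanishes entirely, and expanding the commutator via $B_{kl} = (a+2c)\delta_{kl} - c \delta_{|k-l|,1}$ produces the discrete balance equation $z^a_{i-1,j} + z^a_{i+1,j} = z^a_{i,j-1} + z^a_{i,j+1}$.

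The next step is to pin down the diagonal step $D(i,k) := z^a_{i,i+k} - z^a_{i-1,i+k-1}$ by bootstrapping from the lowest two diagonals. Evaluating the commutator identity on the main diagonal ($j=i$), using $z^a_{ii}=0$ and $z^a_{i+1,i}=-z^a_{i,i+1}$, produces $D(i,1) = -\mu_{a,c}$ for every $i \in \{2,\dots,N-1\}$. Evaluating on the superdiagonal ($j = i+1$) produces $D(i,2)=1$ for every $i \in \{2,\dots,N-2\}$. Then applying the deep-interior balance at the cell $(i, i+k+1)$ for $k \ge 1$ and substituting the inductively known identity $z^a_{i+1,i+k+1} - z^a_{i,i+k} = D_k$ collapses the equation into the two-step recursion $D_{k+2} = D_k$, so $D_k = -\mu_{a,c}$ for every odd $k$ and $D_k = 1$ for every even $k$, uniformly in $i$ within the range $2 \le i \le N-k$. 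This establishes \eqref{eq: pert Toeplitz form}; the second-row and second-to-last-row identities are the specialization of the same relations at $i=2$ and $i=N-1$, where the $\Gamma$-contribution still vanishes on the relevant off-diagonal entries. For the cross-diagonal identity \eqref{x_ij cross diagonal}, I telescope the constant step: summing $D_k$ from row $i+1$ to row $N-k-(i-1)$ yields $z^a_{N-k-(i-1), N-(i-1)} - z^a_{i,i+k} = (N-k-2i+1)D_k$, and substituting the parity-dependent value of $D_k$ produces the stated formulas. The special case \eqref{specific case pert Toepl} is then obtained by taking $i=1$ and converting the lower-triangular entry $z^a_{N-k,N}$ into $-z^a_{N,N-k}$ via antisymmetry.

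The main obstacle is the propagation step. The raw deep-interior balance equation relates diagonal steps across different diagonals rather than along a single one---it says $D(i, k+2) = D(i+1, k)$, which compares diagonal $k$ at row $i+1$ with diagonal $k+2$ at row $i$---so constancy along a single fixed diagonal cannot be read off from it directly. The resolution is to anchor on diagonals $1$ and $2$, where the diagonal and superdiagonal equations hold for an entire range of $i$ with identical right-hand side and thus force constancy of $D_1$ and $D_2$, and then bootstrap outward by the two-step recursion. A secondary subtlety is careful bookkeeping of the index ranges, which shrink to $2 \le i \le N-k$ as $k$ grows and match precisely the ranges stated in the lemma; the boundary cells at $i=1, N$ and $j=1, N$ carry the genuine perturbation from $x_N\Gamma + \Gamma x_N$ and are exactly the source of the cross-diagonal defect from a pure Toeplitz structure, confirming that no contamination enters the interior propagation.
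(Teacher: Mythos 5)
Your proposal is correct and follows essentially the same route as the paper's proof: decompose $z_N = z_N^a - \tfrac12 I$ via \eqref{eq: 1}, reduce \eqref{eq: 3} to the commutator identity $z_N^aB - Bz_N^a = B + J_N^{(\Delta T)} - x_N\Gamma - \Gamma x_N$, anchor the diagonal steps at $k=1,2$ from the main diagonal and superdiagonal entries, propagate via the interior two-step recursion $D(i,k+2)=D(i+1,k)$ (the paper's $(i,i+k-1)$-entry relation), and telescope for the cross-diagonal. Your explicit bookkeeping of the index shift and the necessity of anchoring both parities before bootstrapping is a slightly cleaner exposition of exactly the same induction the paper performs.
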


 From the above Lemma we conclude that  $z_N$ can be written in the general form
\begin{align} \label{form of z_N}
z_N= -\frac{1}{2} I &+  \sum_{ \substack{k=1 \\ k\ \text{odd}}}^{N-1} \left( z_{N,N-k}^{(N)}( \underline{J}^k - \overline{J}^k) + \sum_{j=k+1}^{N} (N-j) \mu_{a,c}(\overline{\iota}_j + \underline{\iota}_{-j}) \right) \\ &+ \sum_{ \substack{k=1 \\ k\ even}}^{N-1} \left( z_{N,N-k}^{(N)}( \underline{J}^k - \overline{J}^k) - \sum_{j=k+1}^{N} (N-j)(\overline{\iota}_j + \underline{\iota}_{-j}) \right) \notag
\end{align}
where we write $\overline{J}$ for the square matrix with $1$'s in the superdiagonal and $\underline{J}$ for the matrix with $1$'s in the subdiagonal. \\%: For example $$\underline{J_{1}}+ \bar{J}_{-1}= \begin{bmatrix} 0 &-1&0&0&0& \quad \\ 1&0&-1&0&0&\quad \\ 0&1&0&-1&0& \quad   \\ \quad& \quad&\quad&\quad&\quad&\quad \\ \quad& \quad&\quad&\quad&\ddots&\quad \\ 0 & \quad&\quad&\quad&1&0&-1\\ 0 & \quad&\quad&\quad& 0 &1&0  \end{bmatrix}$$
Also $\overline{\iota}_k $ for the matrix with $1$ in the $(k,k+1)$- entry and $\underline{\iota}_{-k}$ for the matrix with $-1$ in the $(k+1,k)$-entry. So for example $$\scalemath{0.86}{ \overline{\iota}_{2}+ \underline{\iota}_{-2} = \begin{bmatrix} 0&0&0&0&\quad \\ 0&0&1&0&\quad \\ 0&-1&0&0&\quad \\ \quad&\quad&\quad&\ddots \\  \quad&\quad&\quad&\quad\\ \quad&\quad&\quad&\quad&0&0   \end{bmatrix}.}$$

%Also we write $\overline{\mathcal{J}}_k $ for the square matrix with $1$'s in the $k$-th off-diagonal and $\underline{\mathcal{J}_k}$ for the mirrored version of this matrix about the cross-diagonal. So for example $$ \scalemath{0.86}{ \overline{\mathcal{J}}_2 + \underline{\mathcal{J}_2} =  \begin{bmatrix} 0  & 1 & 0& \qquad & \quad  \\1 &0 & 0& \qquad & \quad  \\ 0 &0 & \qquad & \quad  \\ \quad & \quad & \Ddots& \quad& \quad \\ \qquad & \quad & \quad &0 &0 \\ \qquad & \quad  &0 &0 & 1\\ \qquad & \quad  & 0 &1 &0   \end{bmatrix}}. $$
 For a visualization: 
  \begin{align*}
  \scalemath{0.69}{  z_N =  \begin{bmatrix} 
 -\frac{1}{2} & -z_{N,N-1}^{(N)}+(N-2) \mu_{a,c} & -z_{N,N-2}^{(N)}-(N-3) &\cdots & -z_{N,2}^{(N)} +\mu_{a,c}& -z_{N,1}^{(N)} \\
 z_{N,N-1}^{(N)}-(N-2)\mu_{a,c} & -\frac{1}{2} &- z_{N,N-1}^{(N)}+(N-3) \mu_{a,c}&\cdots  & -z_{N,3}^{(N)}-1 & z_{1, N-1}^{(N)}  -\mu_{a,c} \\
 z_{N,N-2}^{(N)}+(N-3) &z_{N,N-1}^{(N)} -(N-3)\mu_{a,c} &-\frac{1}{2} & \cdots &-z_{N,4}^{(N)} + \mu_{a,c}& z_{1,N-2}^{(N)}+2\\
  \vdots & \quad &\quad&\quad & \quad  & \vdots \\
   \vdots & \quad &\quad&\ddots & \quad & \quad \\
 \quad &\quad& \quad & \quad & \quad& \quad \\
 \vdots& \quad& \quad & \quad & \quad & \vdots\\
 z_{N,2}^{(N)}-\mu_{a,c}  & -z_{1,N-2}^{(N)}-1 & -z_{1,N-3}^{(N)}+2 \mu_{a,c} &  \cdots &  -\frac{1}{2} & z_{1,2}^{(N)}-(N-2) \mu_{a,c} \\
 -z_{1,N}^{(N)} & -z_{1,N-1}^{(N)}+ \mu_{a,c} &-z_{1,N-2}^{(N)}-2 &  \cdots & -z_{1,2}^{(N)} +(N-2) \mu_{a,c} &  -\frac{1}{2}   \end{bmatrix}}
 \end{align*} 
%where we have defined the $\underline{J}^k, \overline{J}^k, \overline{\iota}_j , \underline{\iota}_{-j}$ in the \hyperref[Notation section]{Notation subsection}.

 %$$ |z_{1k}^{(N)}|\sim O \left(N \frac{ \left( 2\gamma+ \frac{c}{\gamma} \right)^{(N-k)} + \dots+ \left(2\gamma+ \frac{c}{\gamma} \right)^k}{\left(2\gamma+ \frac{c}{\gamma}\right)^N}  \right) \lesssim O  \left( N^2 \right) $$ 
 %so that  $| z_{12}^{(N)}| \sim O  \left( N+1/ ( 2\gamma + \frac{c}{\gamma} )^N \right)$, $|z_{1N}^{(N)} | \sim O \left( N/(2\gamma + \frac{c}{\gamma} )^N+ 1\right)$ .

\begin{proof}[Proof of Lemma \ref{lemma: z_N block}] The proof of this Lemma corresponds in analyzing the final, \hypertarget{mylink}{\textbf{the $N$-th step}} of the matrix equations-sequence. First, from \eqref{eq: 1} we have $$ z_N = z_N^a-\frac{1}{2}I,$$ where $z_N^a$ is antisymmetric matrix. So in order to find the form of $z_N$ we only need to study $z_N^a$ and due to its antisymmetry, we only need to study its upper triagonal part. \\

 We look at the non-bordered entries of the upper triagonal part of \eqref{eq: 3}. That is the equation \label{finalstep} \begin{align} \label{eq: N induction}  c(-\Delta^N z_N^a + z_N^a \Delta^N) -B = \mleft[
\begin{array}{c|c|ccc|c}   \ast & \ast &\ast & \cdots & \ast & \ast \\ 
\hline
\ast&1 &0 & \quad & 0 & \ast \\
\hline
\ast& 0 &1 &\quad &0 &\ast \\
\vdots & \quad& \quad& \ddots & \vdots \\
\hline
\ast & 0 &0 &\quad  & 1& \ast \\
\hline
\ast & \ast & \ast & \cdots & \ast & \ast
\end{array}
\mright] . \end{align} Looking at the diagonal's entries $(i,i)$ for $1<i<N$ of the above equation \eqref{eq: N induction}, we write \begin{align*}
-cz_{i,i-1}^{(N),a}+2cz_{i,i}^{(N),a}-cz_{i,i+1}^{(N),a}+cz_{i-1,i}^{(N),a}-2cz_{i,i}^{(N),a}+cz_{i+1,i}^{(N),a}-(2c+a) =1 
\end{align*}
and using  the antisymmetry of the elements of $z_N^a$, it gives \begin{align*} z_{i,i+1}^{(N),a} &=z_{i-1,i}^{(N),a}- \mu_{a,c} = z_{i-2,i-1}^{(N),a}-2\mu_{a,c} \\=&\dots =z_{1,2}^{(N),a} - (i-1) \mu_{a,c}. \end{align*} 
Therefore, inductively we get \begin{align} \label{eq: k=1 induc} z_{i,i+1}^{(N),a}= z_{1,2}^{(N),a}-(i-1)\mu_{a,c}. \end{align}
At the same time, looking from bottom-right to top-left, we can write \begin{align*} z_{i-1,i}^{(N),a} &=z_{i,i+1}^{(N),a}+ \mu_{a,c} =z_{i+1,i+2}^{(N),a}+ 2\mu_{a,c} \\ =&\dots =z_{N,N-1}^{(N),a}+(i-1) \mu_{a,c}.  \end{align*}  

Then, looking at the super-diagonal's entries, \textit{i.e.} the $(i,i+1)$-entry, for $1< i < N-1$, of equation \eqref{eq: N induction}, we write $$ -cz_{i,i}^{(N),a} + 2cz_{i,i+1}^{(N),a}-cz_{i,i+2}^{(N),a}+cz_{i-1,i+1}^{(N),a}-2cz_{i,i+1}^{(N),a}+cz_{i+1,i+1}^{(N),a}+c=0$$ and that gives $$ z_{i,i+2}^{(N),a}=z_{i-1,i+1}^{(N),a}+1= \dots  = z_{1,3}^{(N),a}+(i-1) $$ and at the same time (reversed direction, \textit{i.e.} from bottom right to top left) $$ z_{i-1,i+1}^{(N),a} =-z_{i+2,i}^{(N),a}-1=\dots=- z_{N,N-2}^{(N),a}- (N-(i+1)) .$$ Similarly, looking at the entries $(i,i+2)$ for $1 < i < N-2$:  \begin{align*} 
  c z_{i-1,i+2}^{(N),a} -2c z_{i,i+2}^{(N),a} + c z_{i+1,i+2}^{(N),a} -c z_{i,i+1}^{(N),a} + 2c z_{i,i+2}^{(N),a} -c z_{i,i+3}^{(N),a}=0.
\end{align*}
Apply \eqref{eq: k=1 induc} twice: $ z_{i+1,i+2}^{(N),a} = z_{1,2}^{(N),a} - i\mu_{a,c}$ and $ -z_{i,i+1}^{(N),a} = -z_{1,2}^{(N),a}+ (i-1)\mu_{a,c}$ and get $$ z_{i-1,i+2}^{(N),a}-\mu_{a,c}= z_{i,i+3}^{(N),a}.  $$ So inductively, \begin{align} \label{eq: k=2 induc}  z_{i,i+3}^{(N),a}= z_{1,4}^{(N),a} - (i-1)\mu_{a,c}. \end{align}  
Also, from the reversed direction we get inductively $$ z_{i,i+3}^{(N),a} = z_{N,N-3}^{(N),a}-(N-3-i).$$ 
For the general case, as stated in the Lemma, we prove it by induction in $k$. For $k=1,2,3$ is true from the above calculations. We do it for $k$ odd. Let it hold for $k-2$, we look at the $(i,i+k-1)$-entry of equation  \eqref{eq: N induction} : for $1<i<N-(k-1)$, \begin{align*} c z_{i-1,i+k-1}^{(N),a}&-2cz_{i,i+k-1}^{(N),a}+cz_{i+1,i+k-1}^{(N),a} -cz_{i,i+(k-2)}^{(N),a} + 2cz_{i,i+k-1}^{(N),a} -cz_{i,i+k}^{(N),a}=0\quad \text{or} \\ z_{i-1,i+k-1}^{(N),a}&- z_{i,i+k}^{(N),a}+ (  z_{i+1, i+1+(k-2)}^{(N),a} - z_{i,i+(k-2)}^{(N),a} )=0 .\end{align*} Then from the induction hypothesis we end up with 
the  \eqref{eq: pert Toeplitz form}. The case $k$ even follows similarly. \\

Now generalize the previous induction formulas for $k$ odd for example and write: 
 $$  z_{i,i+k}^{(N),a}= z_{1,k+1}^{(N),a}-(i-1)\mu_{a,c}$$ and from the reversed direction $$   z_{i,i+k}^{(N),a}=(N-k-i)\mu_{a,c} + z_{N-k,N}^{(N),a}. $$ From these two equations we have the specific case \eqref{specific case pert Toepl}. $k$ even is proven similarly. 
For \eqref{x_ij cross diagonal} we write for $k$ odd: \begin{align*} 
z_{i,i+k}^{(N),a} - z_{N-k-(i-1),N-(i-1)}^{(N),a} &= z_{i-1,i+k-1}^{(N),a}-\mu_{a,c} - (z_{N-k-i,N-i}^{(N),a} + \mu_{a,c} ) \\&= z_{i-1,i+k-1}^{(N),a} - z_{N-k-i,N-i}^{(N),a} - 2\mu_{a,c} \\ &= \cdots = z_{1,k+1}^{(N),a} - z_{N-k,N}^{(N),a}-2(i-1)\mu_{a,c}\\ &= (N-k-2i+1)\mu_{a,c}.
  \end{align*}
  where in the last line we applied \eqref{specific case pert Toepl}. The case $k$ even is proven in the same way.
\end{proof}

The above discussion shows that in order to understand the entries of $z_N$, we need only to understand the vector $\underline{z_N} = (z_{1,2}^{(N)}, z_{1,3}^{(N)},\dots, z_{1,N}^{(N)})$. \\

We state now a Lemma that shows the relation between the elements of $\underline{z_N}$ and the entries of the first row and the last column of  $x_N=[x_{i,j}^{(N)}]$, concluding a relation between $x_{1,j}^{(N)}$ and $x_{i,N}^{(N)}$ about the 'cross diagonal'. 
\begin{lemma}\label{lemma: z relation x} For $3 \le k \le N$,  \begin{align} \label{eq: x relation z}
\left\{ \begin{array}{ c l }
 z_{1,k}^{(N),a} = 1+\frac{\gamma}{c}x_{1,k-1}^{(N)} =- \frac{\gamma}{c}x_{N,N-k+2}^{(N)}-(N-k+1),\ &\text{for}\ k\ \text{odd}  \\ 
  z_{1,k}^{(N),a}= -\mu_{a,c} +\frac{\gamma}{c}x_{1,k-1}^{(N)} =- \frac{\gamma}{c}x_{N,N-k+2}^{(N)}+(N-k+1)\mu_{a,c},\ &\text{for}\ k\ \text{even} \end{array}
 \right. \end{align}
and $z_{1,2}^{(N),a} = \frac{\gamma}{c}x_{1,1}^{(N)} - \frac{T_L+a+2c}{2c}$ and so  for $3 \le k \le N$     \begin{align} \label{eq: x relation cross diagonal}
\left\{ \begin{array}{ c l }
 x_{1,k-1}^{(N)} =- x_{N,N-k+2}^{(N)}-\frac{c}{\gamma}(N-k+2),\ &\text{for}\ k\ \text{odd}  \\ 
  x_{1,k-1}^{(N)} =- x_{N,N-k+2}^{(N)}+\frac{c}{\gamma}(N-k+2) \mu_{a,c},\ &\text{for}\ k\ \text{even}. \end{array}
 \right. \end{align} 
 Also $x_{1,N}^{(N)} = \frac{c}{2\gamma}\mu_{a,c}$, where $\mu_{a,c} := \frac{1+a+2c}{2c}$. 
\end{lemma}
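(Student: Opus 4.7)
The plan is to extract entry-wise relations from equation \eqref{eq: 3} of Lemma \ref{lemm: blocks- equations} in its final incarnation $m = N$, where $J_N^{(0)} = I$ and $J_N^{(\Delta T)} = \operatorname{diag}(2T_L, 1, \ldots, 1, 2T_R)$, namely
\begin{equation*}
z_N B - B z_N - B \;=\; J_N^{(\Delta T)} - x_N \Gamma - \Gamma x_N .
\end{equation*}
Because $B$ is tridiagonal and $\Gamma = \operatorname{diag}(\gamma, 0, \ldots, 0, \gamma)$, the first row of this matrix identity produces, for each $1 \le j \le N$, a single scalar equation in which only $z_{1,j-1}^{(N)}$, $z_{1,j}^{(N)}$, $z_{1,j+1}^{(N)}$, $z_{2,j}^{(N)}$ and (a multiple of) $x_{1,j}^{(N)}$ appear. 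Writing $z_N = z_N^a - \tfrac{1}{2}I$ and expanding one finds
\begin{equation*}
(z_N B - B z_N)_{1,j} \;=\; c\bigl( z_{2,j}^{(N),a} - z_{1,j-1}^{(N),a} \bigr) - c\, z_{1,j+1}^{(N),a} \;+\; (\text{boundary correction for } j \in \{1,2\}),
\end{equation*}
where the boundary corrections come from the diagonal contribution $z_{1,1} = -\tfrac{1}{2}$.

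The crucial input is then the ``second-row'' difference formula of Lemma \ref{lemma: z_N block}: $z_{2,k}^{(N),a} - z_{1,k-1}^{(N),a}$ equals $-\mu_{a,c}$ for $k$ odd and $1$ for $k$ even. Substituting this into the first-row equation for $3 \le j \le N-1$, where both $B_{1,j}$ and $(J_N^{(\Delta T)})_{1,j}$ vanish and the coefficient of $x_{1,j}^{(N)}$ is just $\gamma$ (not $2\gamma$), an elementary rearrangement yields the first equality of \eqref{eq: x relation z}, with the two parity cases of $k = j+1$ producing the two displayed formulas. The boundary case $j = 2$ is checked separately: the contribution from $z_{1,1} = -\tfrac{1}{2}$ is cancelled exactly by the off-diagonal entry $B_{1,2} = -c$, which recovers the formula at $k = 3$. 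The special value of $z_{1,2}^{(N),a}$ comes from the $(1,1)$-entry of \eqref{eq: 3}, where $J_N^{(\Delta T)}$ contributes $2T_L$, $B_{1,1} = a+2c$, and $x_N\Gamma + \Gamma x_N$ contributes $2\gamma x_{1,1}^{(N)}$.

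The second equality of \eqref{eq: x relation z}, expressing $z_{1,k}^{(N),a}$ in terms of $x_{N,N-k+2}^{(N)}$, is obtained either by the same entry-wise reading of \eqref{eq: 3} applied to the \emph{last} row (exploiting now the symmetric ``second-to-last line'' difference formula of Lemma \ref{lemma: z_N block}), or more directly by feeding the cross-diagonal identity \eqref{specific case pert Toepl} into the first equality already established. Equating the two expressions for $z_{1,k}^{(N),a}$ immediately yields \eqref{eq: x relation cross diagonal}. Finally, the value $x_{1,N}^{(N)} = \tfrac{c}{2\gamma}\mu_{a,c}$ is read off the $(1,N)$-entry of \eqref{eq: 3}: since $B_{1,N} = 0$ (for $N \ge 3$) and $(J_N^{(\Delta T)})_{1,N} = 0$, while the coefficient of $x_{1,N}^{(N)}$ in $(x_N\Gamma + \Gamma x_N)_{1,N}$ is $2\gamma$, one gets $-2\gamma x_{1,N}^{(N)} = c\bigl(z_{2,N}^{(N),a} - z_{1,N-1}^{(N),a}\bigr)$, and Lemma \ref{lemma: z_N block} evaluates the right-hand side to $-c\mu_{a,c}$ (using $k = N$ odd, consistent with the running convention that $N$ is odd).

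The main difficulty is purely organisational: four parity combinations must be tracked through the boundary rows, and the interplay between the $-\tfrac{1}{2}I$ part of $z_N$, the tridiagonal corners of $B$, and the $\Gamma$-doubling at the endpoints must be unwound with care. Once the bookkeeping for the first row is fixed, all remaining identities reduce to linear manipulations of the perturbed-Toeplitz data established in Lemma \ref{lemma: z_N block}.
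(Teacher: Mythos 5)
Your proposal is correct and follows essentially the same route as the paper's proof: extracting scalar equations from the first and last rows of the $m=N$ instance of \eqref{eq: 3}, feeding in the second- and second-to-last-row difference formulas from Lemma~\ref{lemma: z_N block}, and combining the two to get \eqref{eq: x relation cross diagonal}. Your observation that one can alternatively derive the last-row equality from the cross-diagonal identity \eqref{specific case pert Toepl} is also what the paper implicitly exploits (via the statement of Lemma~\ref{lemma: z_N block}).

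One small imprecision in your bookkeeping for the $j=2$ boundary case: since $z_N = z_N^a - \tfrac{1}{2}I$ and $I$ commutes with $B$, the $-\tfrac{1}{2}I$ part contributes nothing to $z_N B - B z_N$; equivalently, the diagonal contributions $-c\,z_{1,1}$ and $+c\,z_{2,2}$ cancel \emph{each other} (both equal $-\tfrac{1}{2}$). The extra $+c$ that turns $z_{2,2}^{a}-z_{1,1}^{a}=0$ into the generic value $1$ is supplied by the $-BJ_N^{(0)}$ term (namely $-B_{1,2}=+c$), not by a cancellation against $z_{1,1}$. The conclusion $z_{1,3}^{(N),a}=1+\tfrac{\gamma}{c}x_{1,2}^{(N)}$ is nevertheless correct, so this is a cosmetic issue rather than a gap.
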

 \begin{proof} We look at the bordered entries of equation \eqref{eq: 3}. Let us first look at $(N,j)$-entry for $j$ even: $$ -c z_{N,j-1}^{(N),a}+2cz_{N,j}^{(N),a}-cz_{N,j+1}^{(N),a}+c z_{N-1,j}^{(N),a} -2c z_{N,j}^{(N),a}= -\gamma x_{N,j}^{(N)}.$$ Using Lemma \ref{lemma: z_N block} we write $$ c  z_{1,N-j+2}^{(N),a} + (j-2)c + c z_{1,N-j }^{(N),a}+jc -cz_{1,N-j}^{(N),a} - (j-1)c = - \gamma x_{N,j}^{(N)}$$ and after the obvious cancellations we have for $j$ even \begin{align} \label{eq: x to z, j even}  x_{N,j}^{(N)} = -\frac{c}{\gamma}z_{1,N-j+2}^{(N),a} - (j-1)\frac{c}{\gamma}. \end{align} Similarly for $j $ odd we have  \begin{align}\label{eq: x to z, j odd} x_{N,j}^{(N)} = -\frac{c}{\gamma}z_{1,N-j+2}^{(N),a} + (j-1)\frac{c}{\gamma}\mu_{a,c}. \end{align}
 Moreover, with exactly the same calculations, but looking at the $(1,j)$-entry of equation \eqref{eq: 3} we get,  for $2 \le j \le N-1$, \begin{align} \label{eq: x to z first line} x_{1,j}^{(N)} = \frac{c}{\gamma}z_{1,j+1}^{(N),a} - \frac{c}{\gamma}\ \text{for}\ j\  \text{even}\quad \text{and}\quad x_{1,j}^{(N)}=  \frac{c}{\gamma}z_{1,j+1}^{(N),a} + \frac{c}{\gamma} \mu_{a,c}\ \text{for}\ j\  \text{odd}.\end{align}
 Now for $k:=N-j+2$ then  $3 \le k \le N$.  Since $N$ is odd, whenever $j$ is odd, $k$ is even and the opposite. Solving the equations \eqref{eq: x to z, j odd} and \eqref{eq: x to z, j even} for $z_{1,k}^{(N),a}$, we get the second equalities in \eqref{eq: x relation z}, whereas solving \eqref{eq: x to z first line} for $\lambda:= j+1$, for $z_{1,\lambda}^{(N),a}$, we get the first equalities in \eqref{eq: x relation z} as well. We conclude with  \eqref{eq: x relation cross diagonal} just by combining the above relations in both cases.\\  Finally to get this specific value for $x_{1,N}^{(N)}$ we look at the $(1,N)$-entry of equation \eqref{eq: 3} and perform the same calculations as above. 
 \end{proof}
 Considering the above Lemma we can write the matrix $z_N$ also as follows: 
 
 \begin{align*}
  \scalemath{0.72}{  z_N =  \begin{bmatrix} 
 -\frac{1}{2} &\frac{\gamma}{c}x_{1,1}^{(N)}-\kappa_L & 1+\frac{\gamma}{c}x_{1,2}^{(N)} &\cdots & -\mu_{a,c}+\frac{\gamma}{c}x_{1,N-2}^{(N)}&1+\frac{\gamma}{c}x_{1,N-1}^{(N)} \\
 -\frac{\gamma}{c}x_{1,1}^{(N)}+\kappa_L & -\frac{1}{2} &  \frac{\gamma}{c}x_{1,1}^{(N)}-\kappa_L-\mu_{a,c}  &\cdots  & \frac{\gamma}{c}x_{1,N-3}^{(N)}+2 & \frac{\gamma}{c}x_{1,N-2}^{(N)}- 2 \mu_{a,c}   \\
%z_{N,N-2}^{(N)}+(N-3) &z_{N,N-1}^{(N)} -(N-3)\mu_{a,c} &-\frac{1}{2} & \cdots &-z_{N,4}^{(N)} + \mu_{a,c}& z_{1,N-2}^{(N)}+2\\
  \vdots & \quad &\quad & \quad  & \vdots \\
   \quad & \quad &\quad & \quad & \quad \\
 \quad &\quad& \quad & \ddots& \quad \\
 %\vdots& \quad& \quad & \quad & \quad & \vdots\\
 \quad & \cdots& \quad& \frac{\gamma}{c}x_{N,N}^{(N)}-\kappa_R-\mu_{a,c}   &  -\frac{1}{2} &  \frac{\gamma}{c}x_{1,1}^{(N)}-\kappa_L-(N-2)\mu_{a,c}   \\
 \quad &\quad&\quad&\cdots & \frac{\gamma}{c}x_{N,N}^{(N)}-\kappa_R &  -\frac{1}{2}   \end{bmatrix}}
 \end{align*}
 where $\kappa_L := \frac{T_L+a+2c}{2c}$ and $\kappa_R := \frac{T_R+a+2c}{2c}$.\\
 
In the following we state a Lemma about the symmetries that hold in $y_N$-block of $b_N$, concluding that all the entries of $y_N$ can be written in terms of the vectors $\underline{y_N}:= (y_{1,N}^{(N)},y_{1,N-1}^{(N)}, \dots, y_{1,1}^{(N)})$ and $\underline{z_N}$. 
\begin{lemma} \label{lemma for y_N}
For $2 \le i \le N-(k+1)$ and $1 \le k \le N-3$, \begin{align}  y_{i-1,i+k}^{(N)} &- y_{i,i+k-1}^{(N)} + (y_{i+1,i+k}^{(N)} - y_{i,i+k+1}^{(N)})=0 \label{symm inside y_N} \\
y_{2,k}^{(N)} &= y_{1,k-1}^{(N)} + y_{1,k+1}^{(N)}+ \frac{\gamma}{c} z_{1,k}^{(N)},\quad \text{for}\quad 2 \le k \le N-1, \label{second line in y_N}\\ 
\text{and}\quad y_{2,N}^{(N)} &= y_{1,N-1}^{(N)} + \frac{2\gamma}{c}z_{1,N}^{(N)} \notag \\ 
y_{k,N}^{(N)}&= \frac{\gamma}{c}(z_{k-1,N}^{(N)} + z_{1,N-(k-2)}^{(N)})+ y_{1,N-(k-1)}^{(N)},\quad \text{for}\quad  2 \le k \le N \label{last column in y_N} 
\end{align}
\end{lemma}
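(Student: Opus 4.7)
The plan is to derive all three identities from equation \eqref{eq: 4a} specialized to $m = N$, which reads $y_N B - B y_N = \Gamma + z_N \Gamma + \Gamma z_N$. Since $B$ is tridiagonal with diagonal $(a+2c)$ and off-diagonals $-c$, the $(i,j)$-entry of the commutator $y_N B - B y_N$ equals $c(y_{i-1,j} + y_{i+1,j} - y_{i,j-1} - y_{i,j+1})$, with natural boundary modifications when $i$ or $j$ takes the values $1$ or $N$. On the right-hand side, $\Gamma$ is supported at $(1,1)$ and $(N,N)$, while $z_N \Gamma$ lives only on columns $1$ and $N$ and $\Gamma z_N$ only on rows $1$ and $N$.

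For the interior relation \eqref{symm inside y_N}, I would read off the entries $(i, i+k)$ with $2 \le i$ and $i + k \le N - 1$, where the right-hand side vanishes identically. For \eqref{second line in y_N}, I would inspect the $(1, k)$-entries with $2 \le k \le N-1$: only $\Gamma z_N$ contributes, producing $\gamma z_{1,k}$, which after dividing by $c$ and using the boundary form of the commutator yields $y_{2,k} = y_{1,k-1} + y_{1,k+1} + \frac{\gamma}{c} z_{1,k}$. At the corner $(1, N)$ both $z_N \Gamma$ and $\Gamma z_N$ contribute a copy of $\gamma z_{1,N}$, which explains the factor of $2$ in the separate $y_{2, N}$ identity.

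The last column identity \eqref{last column in y_N} is the more delicate one. I would introduce $F(i,j) := y_{i-1,j} - y_{i,j-1}$ and observe that \eqref{symm inside y_N} is precisely the statement $F(i,j) = F(i+1, j+1)$, so $F$ depends only on $j - i$. Computing $F(2, j) = y_{1,j} - y_{2,j-1}$ via the already-established \eqref{second line in y_N} gives $F(2, j) = -y_{1, j-2} - \frac{\gamma}{c} z_{1, j-1}$, and by the diagonal invariance, $F(k, N) = -y_{1, N-k} - \frac{\gamma}{c} z_{1, N-k+1}$ for $2 \le k \le N-1$. Meanwhile, reading off the $(k, N)$-entry of \eqref{eq: 4a} for $2 \le k \le N-1$ produces the boundary recurrence $y_{k+1,N} - y_{k,N-1} + y_{k-1,N} = \frac{\gamma}{c} z_{k,N}$. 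Substituting $y_{k,N-1} = y_{k-1,N} - F(k,N)$ telescopes the $y_{k-1, N}$ terms, leaving $y_{k+1,N} = y_{1, N-k} + \frac{\gamma}{c}(z_{k,N} + z_{1, N-k+1})$; shifting $k \mapsto k-1$ gives \eqref{last column in y_N} for $3 \le k \le N$, and the $k = 2$ case is exactly the corner identity for $y_{2, N}$ derived above.

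The main bookkeeping obstacle is tracking the admissible ranges: the interior equation \eqref{symm inside y_N} requires $j \le N-1$, so the chain $F(2, \ell+2) = F(3, \ell+3) = \cdots = F(N-\ell, N)$ must stay within these bounds, and one must verify that this chain indeed reaches every $(k, N)$ with $2 \le k \le N-1$. Apart from carefully distinguishing which corner and edge contributions from $z_N\Gamma$ and $\Gamma z_N$ double up and which do not, the argument then proceeds by direct substitution, with no further computation needed.
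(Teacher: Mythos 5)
Your proposal is correct and follows essentially the same route as the paper: all three identities are read off from the block equation $y_N B - B y_N = \Gamma + z_N\Gamma + \Gamma z_N$, with \eqref{symm inside y_N} coming from interior entries, \eqref{second line in y_N} from the first row, and \eqref{last column in y_N} from the last column combined with the two previously established relations. The one place where you are a bit more systematic than the paper is in the final step: the paper obtains the boundary recurrence $y_{k,N} = \frac{\gamma}{c}z_{k-1,N} - y_{k-2,N} + y_{k-1,N-1}$ and says to apply \eqref{symm inside y_N} and \eqref{second line in y_N} iteratively, whereas you package that iteration cleanly by introducing $F(i,j) = y_{i-1,j} - y_{i,j-1}$, noting that \eqref{symm inside y_N} is exactly the diagonal-invariance $F(i,j) = F(i+1,j+1)$, and using it to collapse the recurrence in one step. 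This is a presentational improvement rather than a different argument, and all the index ranges you flag do check out.
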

\begin{proof} Due to symmetry of $y_N$ is enough to look at the upper-triagonal part. 
We look at the entries $(i,i+k)$ of equation \eqref{eq: 4a}. For $k=1$ we have $$-y_{i,i}^{(N)}- y_{i,i+2}^{(N)}+y_{i-1,i+1}^{(N)}+y_{i+1,i+1}^{(N)}=0 $$ which is the equation \eqref{symm inside y_N}. For $1<k< N-1$ we prove it by induction in $k$,  like in the proof of Lemma  \eqref{lemma: z_N block}.   Let us now look at the $(1,N)$- entry of \eqref{eq: 4a}: $$ -c y_{1,N-1}^{(N)}+ 2cy_{1,N}^{(N)}-2cy_{1,N}^{(N)} + cy_{2,N}^{(N)} = 2 \gamma z_{1,N}^{(N),a}$$ which gives $y_{2,N}^{(N)} = y_{1,N-1}^{(N)} + \frac{2\gamma}{c}z_{1,N}^{(N)}.$ For \eqref{second line in y_N}  we look at $(1,k)$- entry: $$-cy_{1,k-1}^{(N)}+2cy_{1,k}^{(N)}-cy_{1,k+1}^{(N)} -2cy_{1,k}^{(N)}+cy_{2,k}^{(N)}=\gamma z_{1,k}^{(N),a} $$ which is $$-y_{1,k-1}^{(N)} - y_{1,k+1}^{(N)} +   y_{2,k}^{(N)} = \frac{\gamma}{c} z_{1,k}^{(N),a}$$ and this is the desired equation. For \eqref{last column in y_N}, we look at $(k-1,N)$- entry of \eqref{eq: 4a} for $k\ge 3$. Performing the same calculations as above we get $$  y_{k,N}^{(N)} = \frac{\gamma}{c} z_{k-1,N}^{(N),a} - y_{k-2,N}^{(N)}+y_{k-1,N-1}^{(N)}.$$ Then using the relations \eqref{symm inside y_N} and \eqref{second line in y_N} for each of the terms above, we get the desired relation.  
\end{proof}

With the result of the following Lemma we relate the entries of $\underline{y_N}$ with the entries of $\underline{z_N}$.
\begin{lemma} \label{lemma relation y and z}
Let $B$ be the matrix \eqref{matrix B}. We have \begin{align} 
\underline{y_N} = B^{-1} \underline{\tilde{z}_N}  
\end{align}
where $\underline{\tilde{z}_N}$ is the vector $$  \underline{\tilde{z}_N}=  \begin{bmatrix} \gamma z_{1,N}^{(N)} + \frac{c}{2\gamma}\mu_{a,c} \\
  \frac{c}{\gamma}z_{1,N}^{(N)} - \frac{c}{\gamma}\\ 
  \frac{c}{\gamma}z_{1,N-1}^{(N)} + \frac{c}{\gamma} \mu_{a,c}\\
  \vdots \\
  \frac{c}{\gamma}z_{1,N-i}^{(N)} + \frac{c}{\gamma}\mu_{a,c}\\
  \frac{c}{\gamma}z_{1,N-(i+1)}^{(N)} - \frac{c}{\gamma} \\
  \vdots\\
   \frac{c}{\gamma}z_{1,3}^{(N)} -\frac{c}{\gamma} \\
   \frac{c}{\gamma}z_{1,2}^{(N)} +\frac{T_L+a+2c}{2\gamma}+ \frac{\gamma}{2} 
   \end{bmatrix}
 $$ where $\mu_{a,c} := \frac{1+a+2c}{2c}$. In particular: \begin{align} \label{eq: 1 estimate of y in terms of z} \|  \underline{y_N}\|_2 \lesssim \|\underline{z_N} \|_2 + N^{1/2}.\end{align}
\end{lemma}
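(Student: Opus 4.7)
The plan is to derive the identity $B\underline{y_N} = \underline{\tilde{z}_N}$ from equation \eqref{eq: 2} of Lemma \ref{lemm: blocks- equations} read along its first row, and then invert $B$. Write $u := (y_{1,1}^{(N)}, y_{1,2}^{(N)}, \ldots, y_{1,N}^{(N)})^T$. Extracting the $(1,j)$ entry of $x_N = By_N + \Gamma z_N$ gives $(x_N)_{1,j} = (a+2c)\,y_{1,j}^{(N)} - c\, y_{2,j}^{(N)} + \gamma z_{1,j}^{(N)}$. For $2 \le j \le N-1$ the relation $y_{2,j}^{(N)} = y_{1,j-1}^{(N)} + y_{1,j+1}^{(N)} + (\gamma/c) z_{1,j}^{(N)}$ from \eqref{second line in y_N} cancels the $\gamma z_{1,j}^{(N)}$ term and leaves $(Bu)_j = (x_N)_{1,j}$. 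At the boundaries I would use $y_{2,1}^{(N)} = y_{1,2}^{(N)}$ (symmetry of $y_N$) together with $z_{1,1}^{(N)} = -1/2$ to obtain $(Bu)_1 = (x_N)_{1,1} + \gamma/2$, and the separate relation $y_{2,N}^{(N)} = y_{1,N-1}^{(N)} + (2\gamma/c) z_{1,N}^{(N)}$ from Lemma \ref{lemma for y_N} to obtain $(Bu)_N = (x_N)_{1,N} + \gamma z_{1,N}^{(N)}$.

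Next I would replace each $x_{1,j}^{(N)}$ on the right by its expression in the $z_{1,k}^{(N)}$ via Lemma \ref{lemma: z relation x}: for interior $j$ these are $\frac{c}{\gamma}z_{1,j+1}^{(N)} - \frac{c}{\gamma}$ ($j$ even) and $\frac{c}{\gamma}z_{1,j+1}^{(N)} + \frac{c}{\gamma}\mu_{a,c}$ ($j$ odd), supplemented by $x_{1,1}^{(N)} = \frac{c}{\gamma} z_{1,2}^{(N)} + \frac{T_L+a+2c}{2\gamma}$ and $x_{1,N}^{(N)} = \frac{c}{2\gamma}\mu_{a,c}$. This converts the previous formulas into $Bu = \tilde{z}'$ for an explicit vector $\tilde{z}'$ in the $\{z_{1,k}^{(N)}\}$. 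Since $B$ is a symmetric tridiagonal Toeplitz matrix with constant diagonal, writing $R$ for the order-reversing permutation on $\mathbb{R}^N$ one has $RBR = B$; and by definition $\underline{y_N} = Ru$, so $B\underline{y_N} = BRu = RBu = R\tilde{z}'$. A termwise comparison of $R\tilde{z}'$ with $\underline{\tilde{z}_N}$, using that $N$ is odd so that $j$ and $N+1-j$ share parity, verifies that $R\tilde{z}' = \underline{\tilde{z}_N}$.

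Finally, for the norm estimate I would invoke the Schr\"odinger-operator interpretation of Remark \ref{remark before the induction proof}, according to which $B = -c\Delta^N + aI$ with Dirichlet boundary conditions. The Dirichlet Laplacian is positive definite, so $B \ge a I$ and $\|B^{-1}\|_2 \le 1/a$. Inspecting $\underline{\tilde{z}_N}$ entry by entry, each component is a single $z_{1,k}^{(N)}$ with an $O(1)$ coefficient plus a constant bounded uniformly in $k$ and $N$, hence $\|\underline{\tilde{z}_N}\|_2 \le C(\|\underline{z_N}\|_2 + N^{1/2})$ for some $C = C(a,c,\gamma,T_L)$. Combining, $\|\underline{y_N}\|_2 = \|B^{-1}\underline{\tilde{z}_N}\|_2 \lesssim \|\underline{z_N}\|_2 + N^{1/2}$. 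The only real obstacle is the bookkeeping of parity-dependent constants through the reversal step; once that is verified, the norm estimate is routine.
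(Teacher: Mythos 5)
Your derivation is correct and takes essentially the same path as the paper: read $x_N = B y_N + \Gamma z_N$ along its first row, eliminate $y_{2,j}^{(N)}$ via Lemma \ref{lemma for y_N} and substitute the $x_{1,j}^{(N)}$ expressions from Lemma \ref{lemma: z relation x} to obtain $B\underline{y_N}=\underline{\tilde z_N}$, then bound $\|B^{-1}\|_2$ uniformly. Your reversal via the persymmetry $RBR=B$ is only a bookkeeping device—the paper writes the row equations directly in the reversed index order ($x_{1,N},x_{1,N-1},\dots,x_{1,1}$)—and your explicit bound $\|B^{-1}\|_2\le 1/a$ is the same reasoning the paper invokes by referring to the spectral gap of $B$.
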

\begin{proof} We combine the information for $x_{1i}$'s we get from two equations: first from \eqref{eq: 2}, we remind that equation \eqref{eq: 2} is  $$x_N=By_N+\Gamma z_N$$ and second from the bordered entries of \eqref{eq: 3}, which is $$  -Bz_N+z_NB - B  = J_N^{(\Delta T)}- x_N \Gamma - \Gamma x_N.$$
We look at the element $x_{1,N}^{(N)}$ and we write: \begin{align*}x_{1,N}^{(N)}&=(a+2c)y_{1,N}^{(N)}-cy_{2,N}^{(N)} +\gamma z_{1,N}^{(N),a} = (a+2c)y_{1,N}^{(N)}-cy_{1,N-1}^{(N)}  -2\gamma z_{1,N}^{(N),a}+\gamma z_{1,N}^{(N),a}\\ &=   (a+2c)y_{1,N}^{(N)}-cy_{1,N-1}^{(N)}  -\gamma z_{1,N}^{(N),a} \end{align*} and $$  x_{1,N}^{(N)}= \frac{c}{2\gamma} \mu_{a,c}$$ which give $$  (a+2c)y_{1,N}^{(N)}-cy_{1,N-1}^{(N)} = \gamma z_{1,N}^{(N),a}+ \frac{c}{2\gamma} \mu_{a,c}. $$
Moreover \begin{align*}x_{1,N-1}^{(N)}&= (a+2c) y_{1,N-1}^{(N)} -c y_{2,N-1}^{(N)}+ \gamma z_{1,N-1}^{(N),a} \\&= (a+2c) y_{1,N-1}^{(N)} -cy_{1,N-2}^{(N)}-cy_{1,N}^{(N)} - \gamma z_{1,N-1}^{(N),a} + \gamma z_{1,N-1}^{(N),a} \\ &= (a+2c) y_{1,N-1}^{(N)} -cy_{1,N-2}^{(N)}-cy_{1,N}^{(N)} \end{align*}
and from the proof of Lemma \eqref{lemma: z relation x}, see relation \eqref{eq: x to z first line},  we have  $$ x_{1,N-1}^{(N)} = \frac{c}{\gamma} z_{1,N}^{(N),a}- \frac{c}{\gamma}.$$ Both of them give $$ (a+2c) y_{1,N-1}^{(N)} -cy_{1,N-2}^{(N)}-cy_{1,N}^{(N)} =   \frac{c}{\gamma} z_{1,N}^{(N),a}- \frac{c}{\gamma}. $$ 
In general using again Lemma \ref{lemma for y_N} and relation \eqref{eq: x to z first line}, we have \begin{align*} (a+2c)y_{1,N-i}^{(N)} -cy_{1,N-(i+1)}^{(N)}-cy_{1,N-(i-1)}^{(N)} &=    \left\{ \begin{array}{ c l }
  &\frac{c}{\gamma} z_{1,N-(i-1)}^{(N),a}- \frac{c}{\gamma},\ \text{if}\ i\ \text{odd}  \\ 
 & \frac{c}{\gamma} z_{1,N-(i-1)}^{(N),a}+\frac{c}{\gamma}\mu_{a,c},\ \text{if}\ i\ \text{even}. \end{array}
 \right. \end{align*}
 For $x_{1,1}^{(N)}$ we use that $$   x_{1,1}^{(N)} =\frac{c}{\gamma}z_{1,2}^{(N),a} +\frac{c(T_L+a+2c)}{2\gamma c} $$ from Lemma \eqref{lemma: z relation x}, and from \eqref{eq: 2}, $$ x_{1,1}^{(N)} =(a+2c)y_{1,1}^{(N)}-cy_{1,2}^{(N)}- \frac{\gamma}{2}.$$ Putting the above relations in a more compact form we have $$ B \underline{y_N} = \underline{\tilde{z}_N}.$$ We end up with \eqref{eq: 1 estimate of y in terms of z} considering that $\|B^{-1}\|_2$ is uniformly (in $N$) bounded, since $B$ has bounded spectral gap. 
\end{proof}
\begin{proof}[Proof of Proposition \ref{propos of induction}]
 
The following Lemma shows, through its proof, that there is one unique solution to the Lyapunov matrix equation (since one can explicitly find the entries of $\underline{z_N}$, that determine all the rest) and eventually gives the scaling in $N$ of the entries of $\underline{z_N}$: 
\begin{lemma}\label{lemma for order of z_1j's}
 For $1 \le k \le N-2$, using all the information we have from the block equations in Lemma \eqref{lemm: blocks- equations}, we write all the $z_{1,N-k}^{(N),a}$ in terms of  $z_{1,N}^{(N),a}$, which we then calculate explicitly.  Then, for the order of the entries of $\underline{z_N}$ we have \begin{align}  \label{z_1,N-k and z_1,N}
 \left\{ \begin{array}{ c l }
z_{1,N-k}^{(N),a} &= \mathcal{O}\left( R^kz_{1,N}^{(N),a} + \frac{k}{2}\mu_{a,c}\right),\ \text{for}\ k\ \text{odd}  \\ 
 z_{1,N-k}^{(N),a} &= \mathcal{O}\left( R^kz_{1,N}^{(N),a}-\frac{k}{2} \right),\ \text{for}\ k\ \text{even} \end{array}
 \right. \end{align}
 and $z_{1,N}^{(N),a} = \mathcal{O}\left( R^{1-N} \left( \frac{ \kappa_R-\kappa_L}{2\gamma} \right) \right)$, where $R:= \frac{c}{\gamma^2} + \frac{a+2c}{c}$ and $\mu_{a,c}:= \frac{1+a+2c}{2c}$. Therefore $$ |z_{1,i}^{(N),a}| \lesssim \mathcal{O}\left( (\Delta T) R^{-i+1}  +(N-i) \right),\quad  \text{for}\quad 2\le i \le N$$ where $\Delta T$ is the temperature difference at the ends of the chain.  
\end{lemma}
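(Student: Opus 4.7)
The plan is to derive an explicit three-term linear recurrence in $k$ satisfied by the sequence $z_{1,N-k}^{(N),a}$, $0 \le k \le N-2$, solve it, and then use the two end-of-chain boundary conditions to fix the free constants. The starting relations are already in hand: Lemma \ref{lemma: z relation x} expresses $x_{1,j}^{(N)}$ and $x_{N,j}^{(N)}$ linearly in terms of the entries $z_{1,\cdot}^{(N),a}$ (modulo additive constants proportional to $\mu_{a,c}$); Lemma \ref{lemma for y_N} organises the symmetries of $y_N$; and Lemma \ref{lemma relation y and z} produces the tridiagonal relation $B\,\underline{y_N} = \underline{\tilde{z}_N}$ tying the first row of $y_N$ to the first row of $z_N^a$.

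To close the system I would exploit equation \eqref{eq: 2} at the entries $(i,N)$ for interior $i$, which yields $x_{i,N}^{(N)} = (By_N)_{i,N} + (\Gamma z_N)_{i,N}$. Combined with the last-column analogue of the computation in Lemma \ref{lemma: z relation x}, this gives a second independent linear combination of the $y$-entries of the last column. Using Lemma \ref{lemma for y_N} to rewrite the last column of $y_N$ in terms of the first row of $y_N$ plus $z$-corrections, and then applying the antisymmetry of $z_N^a$ together with the cross-diagonal identity \eqref{specific case pert Toepl} to re-express all $z_{N,\cdot}^{(N),a}$'s in terms of $z_{1,\cdot}^{(N),a}$'s, produces a second linear relation of the same type as $B\,\underline{y_N} = \underline{\tilde{z}_N}$. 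Eliminating $\underline{y_N}$ between the two relations gives the sought closed three-term linear recurrence for the $z_{1,N-k}^{(N),a}$.

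The characteristic equation of this recurrence has a real root larger than $1$, namely $R = \frac{c}{\gamma^2} + \frac{a+2c}{c}$: the piece $\frac{a+2c}{c}$ reflects the tridiagonal structure of $B$ (ratio of its diagonal to its off-diagonal entries), while the additional $\frac{c}{\gamma^2}$ term is generated by the coupling through $\Gamma$, i.e.\ by the fact that dissipation enters only at the endpoints. The general solution splits as $z_{1,N-k}^{(N),a} = A R^k + A' R^{-k} + p_k$ with $p_k$ a particular solution linear in $k$ -- this is the origin of the $\tfrac{k}{2}\mu_{a,c}$ and $-\tfrac{k}{2}$ corrections in the statement. The two constants $A$ and $A'$ are fixed by matching the two endpoint conditions: the explicit value of $z_{1,2}^{(N),a}$ coming from $x_{1,1}^{(N)}$ via \eqref{eq: x relation z}, and its symmetric counterpart at $k = N-2$ coming from $x_{N,N}^{(N)}$. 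Since these two endpoint quantities differ only through the temperature asymmetry $\kappa_R - \kappa_L \propto \Delta T$, the decaying mode must absorb this entire asymmetry, yielding the explicit value $z_{1,N}^{(N),a} = \mathcal{O}\!\left(R^{1-N}(\kappa_R - \kappa_L)/(2\gamma)\right)$. The final bound $|z_{1,i}^{(N),a}| \lesssim (\Delta T)\,R^{-i+1} + (N-i)$ is then read off from the sizes of the homogeneous and particular parts of the solution.

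The step I expect to require the most calculation is the elimination producing the three-term recurrence: one has to verify that, after substituting the Toeplitz-shift identities, the antisymmetry relations and the cross-diagonal identity \eqref{specific case pert Toepl} into the two independent linear systems, the $y$-dependence genuinely drops out and the surviving relation involves only three consecutive $z_{1,N-k}^{(N),a}$'s with the claimed constant coefficients. Once the recurrence is established, solving it, matching the boundary conditions and passing to the stated asymptotics are routine second-order linear-algebra computations.
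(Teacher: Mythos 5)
Your proposal takes essentially the same approach as the paper. The paper's proof compares two independent expressions for $x_{k,N}^{(N)}$ — one coming from \eqref{eq: 2} at the $(k,N)$-entry, pushed down to the first row of $y_N$ via Lemma \ref{lemma for y_N}, and one coming from the bordered entries of \eqref{eq: 3} via Lemma \ref{lemma: z relation x} — and eliminates the $y$-dependence, which is exactly the elimination you describe. Concretely, the paper's intermediate identities
$z_{1,N-1}^{(N),a} = R\,z_{1,N}^{(N),a} + \tfrac{\mu_{a,c}}{2}$ and
$R\,z_{1,N-1}^{(N),a} = z_{1,N}^{(N),a} + z_{1,N-2}^{(N),a} + R\tfrac{\mu_{a,c}}{2} + 1$
are precisely the two-term seed and the three-term recurrence you predict; the paper then iterates rather than writing the recurrence abstractly, and fixes the one remaining free constant $z_{1,N}^{(N),a}$ by the boundary relation at $x_{N,N}^{(N)}$, as you also propose.

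One small technical point worth flagging: the characteristic polynomial of the recurrence $z_{1,N-(k+1)} = R\,z_{1,N-k} - z_{1,N-(k-1)} + (\text{const})$ is $\lambda^2 - R\lambda + 1 = 0$, so the roots are $\lambda_{\pm} = \tfrac{1}{2}\bigl(R \pm \sqrt{R^2-4}\bigr)$ with $\lambda_+ \lambda_- = 1$, not exactly $R$ and $R^{-1}$ as you claim. The coefficients of $z_{1,N}^{(N),a}$ produced by the paper's iteration ($1, R, R^2-1, R^3-2R, R^4-3R^2+1,\ldots$) are Chebyshev-type polynomials obeying $u_{k+1}=Ru_k-u_{k-1}$, consistent with this. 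Since $1 < \lambda_+ < R$, the $\mathcal{O}(R^k)$ estimate in the lemma remains a valid (if not tight) upper bound, so your final asymptotics are correct; but the identification of the roots with $R^{\pm 1}$ should be replaced by $\lambda_{\pm}$ to be precise.
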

 \begin{proof} We look at the equations around $x_{k,N}^{(N)}$ for $2 \le k \le N$. First we look at $x_{2,N}^{(N)}$ and from \eqref{eq: x to z, j even} we have $$x_{2,N}^{(N)} = -\frac{c}{\gamma}z_{1,N}^{(N),a}-\frac{c}{\gamma}$$ while from the $(2,N)$-entry of \eqref{eq: 2} we have \begin{align*} x_{2,N}^{(N)}&= -cy_{1,N}^{(N)}+(a+2c)y_{2,N}^{(N)}-cy_{3,N}^{(N)}\\ &=    -cy_{1,N}^{(N)}+(a+2c)y_{1,N-1}^{(N)} + \frac{2\gamma (a+2c)}{c}z_{1,N}^{(N),a}-\gamma(z_{2,N}^{(N),a}+z_{1,N-1}^{(N),a}) -cy_{1,N-2}^{(N)}  \\ &= x_{1,N-1}^{(N)} +  \frac{2\gamma (a+2c)}{c}z_{1,N}^{(N),a} -2\gamma z_{1,N-1}^{(N),a}+ \gamma \mu_{a,c}  \\&= \frac{c}{\gamma}z_{1,N}^{(N),a}-\frac{c}{\gamma}+   \frac{2\gamma (a+2c)}{c}z_{1,N}^{(N),a} -2\gamma z_{1,N-1}^{(N),a}+ \gamma \mu_{a,c}. \end{align*} Combine them and get \begin{align} \label{eq: z_1,n-1 with z_1,n} z_{1,N-1}^{(N),a} = R z_{1,N}^{(N),a}  + \frac{\mu_{a,c}}{2}.\end{align}
 Then we look at $x_{3,N}^{(N)}$: from \eqref{eq: x to z, j odd} we have $$-\frac{c}{\gamma}z_{1,N-1}^{(N),a}+2\frac{c\mu_{a,c}}{\gamma} $$ while from  the $(3,N)$-entry of \eqref{eq: 2} we have similarly \begin{align*} x_{3,N}^{(N)}&= -cy_{2,N}^{(N)}+(a+2c)y_{3,N}^{(N)}-cy_{4,N}^{(N)}\\ &=x_{1,N-2}^{(N)} -2\gamma z_{1,N}^{(N),a}+ \frac{2\gamma (a+2c)}{c}z_{1,N-1}^{(N),a}  -2\gamma z_{1,N-2}^{(N),a} - \frac{\gamma (a+2c)\mu_{a,c}}{c} -2\gamma .\end{align*} Combine them and get  \begin{align*}R z_{1,N-1}^{(N),a} = z_{1,N}^{(N),a}+z_{1,N-2}^{(N),a} + R \frac{\mu_{a,c}}{2} + 1.
 \end{align*} Then considering \eqref{eq: z_1,n-1 with z_1,n} as well, we have \begin{align} z_{1,N-2}^{(N),a}= (R^2-1)z_{1,N}^{(N),a}-1. 
 \end{align}
 In the same manner, but looking around $x_{4,N}^{(N)}$ and $x_{5N}^{(N)}$,  we get \begin{align} z_{1,N-3}^{(N),a} = (R^3-2R) z_{1,N}^{(N),a}+ \frac{3 \mu_{a,c}}{2},\quad  z_{1,N-4}^{(N),a} = (R^4-3R^2+1) z_{1,N}^{(N),a}-2.
 \end{align} respectively. Inductively, we have a  way to write all the elements of $\underline{z_N}$ in terms of $z_{1,N}^{(N),a}$, and looking at the leading order in terms of $N$ we have the general formula \eqref{z_1,N-k and z_1,N} for $1 \le k \le N-2$.
 In particular, for $k=N-3$ (is even by assumption on $N$) and $k=N-2$ (odd) : \begin{align} \label{z_13 and z_14}  z_{1,3}^{(N),a} \sim R^{N-3}z_{1,N}^{(N),a} - \frac{N-3}{2},\quad z_{1,2}^{(N),a} \sim R^{N-2}z_{1,N}^{(N),a} + \frac{(N-2)\mu_{a,c}}{2}. \end{align}
 respectively. Moreover, by looking at $x_{N,N}^{(N)}$ combining \eqref{eq: 2} and \eqref{eq: 3} we have  $$  R z_{1,2}^{(N),a} = R\frac{(N-2)\mu_{a,c}}{2}- \frac{(3-N)}{2}+ \frac{(\kappa_R- \kappa_L)}{2 \gamma} + z_{1,3}^{(N),a}.$$ Plugging in the above equation the relations from \eqref{z_13 and z_14}, we write \begin{align*} (R^{N-1}+R^{N-3} ) z_{1,N}^{(N),a} &+ \frac{R(N-2)\mu_{a,c}}{2} \sim \frac{R(N-2)\mu_{a,c}}{2}  - \frac{(3-N)}{2}  + \frac{(\kappa_R- \kappa_L)}{2 \gamma} - \frac{(N-3)}{2} \\ &\text{which is}\quad  z_{1,N}^{(N),a} \sim R^{1-N} \left( \frac{\kappa_R- \kappa_L}{2 \gamma}\right).
 \end{align*} We conclude the last statement by combining the above estimate on $z_{1,N}^{(N),a}$ with \eqref{z_1,N-k and z_1,N}.
 \end{proof}
%Also, the elements of $y_N$ are symmetric about the 'cross-diagonal' as well. Considering \eqref{choice of z_ij} and the relations among the elements of $y_N$, Lemma \ref{lemma for y_N}:      \begin{align*} & y_{2,k}^{(N)} - y_{N-(k-1),N-1}^{(N)}= \\&= \big( y_{1,k-1}^{(N)} + y_{1,k+1}^{(N)} + \frac{\gamma}{c} z_{1,k}^{(N),a}\big) - \big(  y_{N-k,N}^{(N)} + y_{N-k+2,N}^{(N)} + \frac{\gamma}{c} z_{N-(k-1),N}^{(N),a} \big) =0
%% \end{align*} after \eqref{choice of z_ij} and Lemma \ref{lemma for y_N}. More general we write 
%%% \begin{align*} & y_{i,i+k}^{(N)} - y_{N-k-i+1,N-i+1}^{(N)} = \\ &= \big( y_{1,1+k}^{(N)} + y_{1,3+k}^{(N)}+ \cdots+ y_{1,2i+k-1}^{(N)} + \frac{\gamma}{c}( z_{1,2+k}^{(N),a}+ \cdots+ z_{1,2i+k-2}^{(N),a}) \big) \\ & - \big( y_{N-k,N}^{(N)}  +\cdots+ y_{N-2i-k+2,N}^{(N)} + \frac{\gamma}{c}( z_{N,N-k-1}^{(N),a}+ \cdots+ z_{N,N-2i-k+3}^{(N),a}) \big) = 0 
% \end{align*}
% since from \eqref{last column in y_N}: $y_{N-k,N}^{(N)} = y_{1,k+1}^{(N)}, \dots, y_{1,2i+k-1}^{(N)} = y_{N-2i-k+2,N}^{(N)} $ and from \eqref{choice of z_ij}: $ z_{1,2+k}^{(N),a} = z_{N,N-k-1}^{(N),a}, \dots,   z_{N,N-2i-k+3}^{(N),a} = z_{1,2i+k-2}^{(N),a}$. \\
 
 Now we continue by estimating the entries $\underline{y_N}$: from \eqref{eq: 1 estimate of y in terms of z} and Lemma \ref{lemma for order of z_1j's}, \begin{align*} \|  \underline{y_N}\|_2 &\lesssim \left( \sum_{i=1}^N |z_{1,i}|^2 \right)^{1/2}  + N^{1/2} \lesssim N^{3/2} + N^{1/2} \lesssim N^{3/2}.\end{align*}
This gives that \begin{align} \label{estimate for yij} |y_{1,j}^{(N)}| \lesssim \mathcal{O}(N)  \end{align} and then also, since $ y_{k,N}^{(N)}= \frac{\gamma}{c}(z_{k-1,N}^{(N)} + z_{1,N-(k-2)}^{(N)})+ y_{1,N-(k-1)}^{(N)} $, \begin{align}\label{estimate for y_Nj} |y_{j,N}^{(N)}| \lesssim \mathcal{O}(N). \end{align}

\begin{lemma}[Estimate on the spectral norm of $y_N$] \label{lemma for spectral norm of y_N} For the  spectral norm of $y_N$ we have that $$ \| y_N\|_2 \lesssim \mathcal{O}(N^3). $$ 
 \end{lemma}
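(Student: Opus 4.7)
The plan is to reduce everything to the pointwise bound $\max_{i,j}|y_{ij}^{(N)}| = \mathcal{O}(N^2)$, since then the crude Frobenius estimate $\|y_N\|_2 \le \|y_N\|_F \le N\cdot\max_{i,j}|y_{ij}^{(N)}|$ immediately gives $\|y_N\|_2 = \mathcal{O}(N^3)$. The key structural input is the discrete wave-equation identity hidden in \eqref{symm inside y_N}: rewritten as $y_{i+1,j}^{(N)} - y_{i,j+1}^{(N)} = y_{i,j-1}^{(N)} - y_{i-1,j}^{(N)}$, it says that the anti-diagonal difference $\delta(i-j) := y_{i+1,j}^{(N)} - y_{i,j+1}^{(N)}$ depends only on the index $i-j$, and the symmetry $y_{ij}^{(N)} = y_{ji}^{(N)}$ then forces $\delta$ to be odd.

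I would first bound $\delta$ by $\mathcal{O}(N)$ uniformly. For $j=1,\dots,N-1$ we have $\delta(1-j) = y_{2,j}^{(N)} - y_{1,j+1}^{(N)}$; the first-row estimate \eqref{estimate for yij} controls $y_{1,j+1}^{(N)}$, while the identity \eqref{second line in y_N}, combined with the estimate $|z_{1,k}^{(N)}| = \mathcal{O}(N)$ from Lemma \ref{lemma for order of z_1j's}, gives $|y_{2,j}^{(N)}| = \mathcal{O}(N)$. Hence $|\delta(d)| = \mathcal{O}(N)$ on the full admissible range $d \in [-(N-2), N-2]$ (the sign $d>0$ is handled by oddness).

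Next, I would propagate along anti-diagonals using $y_{u,v}^{(N)} = y_{u-1,v+1}^{(N)} + \delta(u-v-1)$. Iterating until the first row yields, for $u+v \le N+1$,
\[
y_{u,v}^{(N)} = y_{1,u+v-1}^{(N)} + \sum_{k=0}^{u-2}\delta(u-v-2k-1),
\]
and the symmetric iteration up to the last column handles $u+v \ge N+1$, giving $y_{u,v}^{(N)} = y_{u+v-N,N}^{(N)} + \sum_{k=1}^{N-v}\delta(u-v-2k+1)$ and invoking \eqref{estimate for y_Nj}. In either case the boundary term is $\mathcal{O}(N)$ and the sum has at most $N-1$ summands each of size $\mathcal{O}(N)$, so $|y_{u,v}^{(N)}| = \mathcal{O}(N^2)$ for every $(u,v)$; plugging into the Frobenius bound finishes the argument.

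The main obstacle I anticipate is bookkeeping: checking that the propagation identity and the well-definedness of $\delta$ hold uniformly throughout the admissible ranges of indices, and that the $\mathcal{O}(N)$ bound on $\delta$ transfers intact to $d>0$ via oddness. None of these raises a genuine conceptual difficulty once the wave-equation structure is identified; by contrast the naive row-by-row recurrence $y_{i+1,j}^{(N)} = y_{i,j-1}^{(N)} + y_{i,j+1}^{(N)} - y_{i-1,j}^{(N)}$ would amplify each row by a factor of $3$ and produce only exponential-in-$N$ bounds, so exploiting the d'Alembert-type cancellation built into $\delta$ is essential for the polynomial estimate.
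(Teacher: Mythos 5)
Your proposal is correct and reaches the same entrywise bound $\max_{i,j}|y_{ij}^{(N)}|=\mathcal O(N^2)$ as the paper, but packages it more cleanly. The paper iterates the recurrence \eqref{symm inside y_N} directly to write each $y_{i,j}^{(N)}$ as a sum of $\nu$ terms from $\underline{y_N}$ and $\underline{z_N}$ (with $\nu$ growing like $2\min(i,j)-1$, peaking near the middle row), and then estimates $|y_N v|_2^2$ row by row with a Cauchy--Schwarz inside each $|L_i\cdot v|^2$, eventually bounding the worst row's contribution. Your reformulation of \eqref{symm inside y_N} as the statement that the anti-diagonal increment $\delta(i-j):=y_{i+1,j}^{(N)}-y_{i,j+1}^{(N)}$ depends only on $i-j$, together with the observation that symmetry forces $\delta$ to be odd, is the same structural fact in disguise, but it makes the propagation formula $y_{u,v}^{(N)}=y_{1,u+v-1}^{(N)}+\sum_{k=0}^{u-2}\delta(u-v-2k-1)$ (and its mirror to the last column when $u+v>N+1$) immediate and transparent, and in particular makes manifest why the naive row recurrence does not produce exponential growth. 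The remaining inputs are identical to the paper's: $|y_{1,j}^{(N)}|,|y_{j,N}^{(N)}|=\mathcal O(N)$ from \eqref{estimate for yij}--\eqref{estimate for y_Nj} and $|z_{1,k}^{(N)}|=\mathcal O(N)$ from Lemma \ref{lemma for order of z_1j's} (note $R>1$, so the $R^{-i+1}$ factor there is harmless). Finally, your Frobenius bound $\|y_N\|_2\le\|y_N\|_F\le N\max_{i,j}|y_{ij}^{(N)}|$ is more elementary than, and just as sharp for this purpose as, the paper's row-by-row estimate. The only caveat is one of bookkeeping, which you already flagged: \eqref{symm inside y_N} is stated for $2\le i\le N-(k+1)$ and $1\le k\le N-3$, so the well-definedness of $\delta(d)$ across the full range $|d|\le N-2$, and the applicability of the propagation identity at every step of the iteration (including the hand-off to the oddness relation when crossing $d=0$), should be checked explicitly; none of this introduces a genuine obstruction, but a careful write-up would spell it out.
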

 \begin{proof} Let $v=(v_1,v_2,\dots,v_N) \in \mathbb{C}^{N}$. We write $L_i$ for the $i$-th row of the matrix $y_N$ and then calculate \begin{align*} 
 | y_N v |_2^2 &= |L_1 \cdot v |^2 + \cdots + |L_N \cdot v |^2 \\  \leq & N \Bigg( |y_{1,1}^{(N)} v_1|^2 + |y_{1,2}^{(N)}v_2|^2 +\cdots + |y_{1,N}^{(N)} v_{N}|^2 + \qquad \quad (\text{from}\ L_1\cdot v) \\ &\qquad + |y_{1,2}^{(N)}v_2 |^2 +  |y_{2,2}^{(N)}v_2|^2+ \cdots+ | y_{2,N}^{(N)}v_N|^2+ \qquad \quad (\text{from}\ L_2 \cdot v) \\ & \qquad\quad \vdots \\  + |y_{1, \lfloor \frac{N}{2}\rfloor+1}^{(N)}&v_1|^2 +\cdots+  |y_{\lfloor \frac{N}{2}\rfloor+1, \lfloor \frac{N}{2}\rfloor+1}^{(N)}v_{\lfloor \frac{N}{2}\rfloor+1}|^2+ \cdots+  |y_{N, \lfloor \frac{N}{2}\rfloor+1}^{(N)}v_N|^2 + \quad  \big(\text{from}\ L_{\lfloor \frac{N}{2}\rfloor+1} \cdot v \big) \\ & \qquad\quad \vdots \\    & \quad +|y_{1,N}^{(N)} v_1|^2+ | y_{2,N}^{(N)}v_2|^2+ \cdots + |y_{N,N}^{(N)} v_N |^2   \Bigg)  \qquad (\text{from}\ L_N \cdot v )  \end{align*}
 %Considering the symmetry about the 'cross-diagonal' for $y_N$: 
 %\begin{align*} 
  %| y_N v |_2^2  &\le 2N   \Big( |y_{1,1}^{(N)} v_1|^2 + |y_{1,2}^{(N)}v_2|^2 +\cdots + |y_{1,N}^{(N)} v_{N}|^2 + \qquad \quad (\text{from}\ L_1\cdot v) \\ &\qquad + |y_{1,2}^{(N)}v_2 |^2 +  |y_{2,2}^{(N)}v_2|^2+ \cdots+ | y_{2,N}^{(N)}v_N|^2+ \qquad \quad (\text{from}\ L_2 \cdot v) \\ & \qquad\quad \vdots \\   + |y_{1, \lfloor \frac{N}{2}\rfloor +1}^{(N)}&v_1|^2 +\cdots+  |y_{\lfloor \frac{N}{2}\rfloor+1, \lfloor \frac{N}{2}\rfloor+1}^{(N)}v_{\lfloor \frac{N}{2}\rfloor+1}|^2+ \cdots+  |y_{N, \lfloor \frac{N}{2}\rfloor+1}^{(N)}v_N|^2\Big)  \quad (\text{from}\ L_{\lfloor \frac{N}{2}\rfloor+1} \cdot v) .
 %\end{align*}
 We estimate the terms due to the first half of the matrix, \textit{i.e.} the terms until $ L_{\lfloor \frac{N}{2}\rfloor+1} \cdot v $: 
from Lemma \ref{lemma for y_N} we write all the $y_{i,j}^{(N)}$'s in terms of the entries of $\underline{y_N}$ and $\underline{z_N}$ that, due to the observations above, scale at most like $N$. In particular for the second line $$y_{2,k}^{(N)} = y_{1,k-1}^{(N)} + y_{1,k+1}^{(N)} + \frac{\gamma}{c} z_{1,k}^{(N),a}$$ and more general $$ y_{i,i+k}^{(N)}  =  y_{1,1+k}^{(N)} + y_{1,3+k}^{(N)}+ \cdots+ y_{1,2i+k-1}^{(N)} + \frac{\gamma}{c}\left( z_{1,2+k}^{(N),a}+ \cdots+ z_{1,2i+k-2}^{(N),a}\right).$$ 
Then, from \eqref{estimate for yij}:  \begin{align} \label{estimates for y_N}   |L_1 \cdot v |^2 &+ \cdots + \left\vert  L_{\lfloor \frac{N}{2}\rfloor+1} \cdot v \right\vert^2  \lesssim N \Bigg( N^2 |v_1|^2+ \cdots + N^2 |v_{N}|^2+ \\ &\qquad + N^2 |v_1|^2+ 3^2N^2|v_2|^2+ \cdots +3^2N^2|v_{N-1}|^2+ N^2 |v_N|^2 + \notag \\+ N^2|v_1|^2&+3^2N^2|v_2|^2+5^2N^2|v_3|^2+5^2N^2|v_4|^2+ \cdots+5^2N^2|v_{N-2}|^2+3^2|v_{N-1}|^2+ N^2|v_N|^2 \notag \\  & \qquad\quad \vdots \notag  \\ +N^2|v_1|^2&+3^2N^2|v_2|^2+\cdots+ \left(2 \Big\lfloor \frac{N}{2} \Big\rfloor+1 \right)^2N^2 \left\vert v_{\lfloor \frac{N}{2}\rfloor+1}\right\vert^2+\left(2 \Big\lfloor \frac{N}{2} \Big\rfloor-1\right)^2N^2 \left\vert v_{\lfloor \frac{N}{2}\rfloor+2}\right\vert^2+  \notag \\ &\qquad \qquad\qquad\qquad\qquad \qquad\qquad\qquad \cdots+  N^2 |v_N|^2 \Bigg). \notag \end{align}
So the highest order is due to $\left\vert L_{\lfloor \frac{N}{2}\rfloor+1} \cdot v \right\vert^2$ for which we estimate $$\left\vert L_{\lfloor \frac{N}{2}\rfloor+1} \cdot v \right\vert^2 \lesssim   \Bigg( 2N^2\sum_{i=1}^{\lfloor \frac{N}{2}\rfloor+1}(2i-1)^2  \Bigg)   |v|_2^2. $$ The terms $(2i-1)$ in the sum above, denote the number of the entries of $\underline{y_N}, \underline{z_N}$  that each $y_{i,j}^{(N)}$ is given by.\\
Regarding the terms due to the second half of the matrix, we use again Lemma \ref{lemma for y_N}, equations \eqref{symm inside y_N}. This way we write the elements $y_{i,j}^{(N)}$'s in terms of $y_{N,j}^{(N)}$'s and then from
 relation \eqref{last column in y_N}, we have all  the $y_{i,j}^{(N)}$'s in terms of the entries of $\underline{y_N}$ and $\underline{z_N}$, that scale at most like $N$.
  So in the end we have
  $$ | y_N v |_2^2  \lesssim N \Bigg( N^3N^2 \Bigg) | v |_2^2 = N^6 |v|_2^2. $$
 Then $$ \frac{|y_N v |_2}{|v |_2} \lesssim \mathcal{O}(N^3)\qquad \text{and so}\qquad  \|y_N \|_2 \lesssim \mathcal{O}(N^3) .$$ 
 
  Before we finish the proof, we give more details on the estimates \eqref{estimates for y_N} above:\\
   For the first inequality we apply iteratively Lemma \ref{lemma for y_N}. Regarding the row $L_2$:  $$y_{2,2}^{(N)} = y_{1,3}^{(N)} + y_{1,1}^{(N)}+ \frac{\gamma}{c} z_{1,2}^{(N),a}.$$ So $y_{2,2}^{(N)}$ is given by the sum of $3$ terms whose absolute value is of order not more than $\mathcal{O}(N)$. The same holds (from Lemma \ref{lemma for y_N}) for each $y_{2,j}^{(N)}$ for $j \le N-2$, \textit{i.e.} until we reach the 'cross-diagonal'. After the 'cross-diagonal': $y_{2,N}^{(N)}= y_{1,N-1}^{(N)} + \frac{2\gamma}{c}z_{1,N}^{(N)}$,  and $|y_{1,N-1}^{(N)}|, |z_{1,N}^{(N)}|$ have order less than $N$. \\
 Regarding the  row $L_3$: $$y_{3,2}^{(N)} = y_{1,2}^{(N)} + y_{1,4}^{(N)}+ \frac{\gamma}{c}z_{1,3}^{(N),a}$$ is given by the sum of $3$ terms whose absolute value has order less than $N$, while for $y_{3,3}^{(N)}$, by applying Lemma \ref{lemma for y_N} twice, \textit{i.e.} until we end up only with elements of $\underline{y_N}$ and $\underline{z_N}$, we get $$ y_{3,3}^{(N)} = y_{1,3}^{(N)} + y_{1,1}^{(N)}+y_{1,5}^{(N)}+ \frac{\gamma}{c} \left( z_{1,2}^{(N),a} +  z_{1,4}^{(N),a} \right). $$ So $y_{3,3}^{(N)}$ is given by the sum of $5$ terms whose absolute value has order less than $N$.  For $y_{3,j}^{(N)}$, $j \le N-2$ (until the 'cross-diagonal'), apply Lemma \ref{lemma for y_N} twice: the value of $y_{3,j}^{(N)}$ is given by the sum of $5$ such terms, while for $N-1 \le j \le N$, \begin{align*} y_{3,N-1}^{(N)}  &=y_{1,N-3}^{(N)}+ y_{1,N-1}^{(N)} +  \frac{\gamma}{c} z_{1,N-2}^{(N),a} \\ y_{3,N}^{(N)}&=\frac{\gamma}{c}\left(z_{2,N}^{(N)} + z_{1,N-1}^{(N)}\right)+ y_{1,N-2}^{(N)} = \frac{2\gamma}{c}z_{1,N-1}^{(N)}- \frac{\gamma \mu_{a,c}}{c} + y_{1,N-2}^{(N)} \end{align*} 
and so they are given by $3$ terms with absolute value of order at most $N$. 
 %y_{k,N}^{(N)}&= \frac{\gamma}{c}(z_{k-1,N}^{(N)} + z_{1,N-(k-2)}^{(N)})+ y_{1,N-(k-1)}^{(N)}
 
In general, the same holds  for the row $L_i$, $i \le \lfloor \frac{N}{2}\rfloor+1$ from applications of Lemma \ref{lemma for y_N} inductively. For all $y_{i,j}^{(N)}$ we apply Lemma \ref{lemma for y_N} until we have written each $y_{i,j}^{(N)}$ only in terms of entries of $\underline{y_N}$ and $\underline{z_N}$. \\  For $j \le  i $, \textit{i.e.} until the main diagonal, $y_{i,j}^{(N)} $ is given by the sum of $\nu$ terms, whose order is less than $N$, and $$\nu=1,3,5, \cdots,(2i-1)\quad \text{for}\quad y_{i,1}^{(N)}, y_{i,2}^{(N)}, \cdots, y_{i,i}^{(N)},\ \text{respectively}.$$ For that we apply Lemma \ref{lemma for y_N} and write \begin{align*} y_{i,j}^{(N)}= y_{j,i}^{(N)} =   y_{1,i-j+1}^{(N)}+y_{1,i-j+3}^{(N)}+ \cdots+ y_{1,j+i-1}^{(N)}+ \frac{\gamma}{c}\left( z_{1,i-j+2}^{(N),a}+ \cdots+ z_{1,i+j-2}^{(N),a}\right). \end{align*} This formula gives that $y_{i,j}^{(N)}$ is the sum of $(2j-1)$ terms whose absolute value has order less than $\mathcal{O}(N)$.

 The same holds for $j > N-(i-1)$, \textit{i.e.} after the 'cross-diagonal', considering also \eqref{estimate for y_Nj}. As for the rest terms in $L_i$, for $i \le j\le N-(i-1)$:  $y_{i,j}^{(N)}$ is given by the sum of $(2i-1)$ terms whose order is less than $\mathcal{O}(N)$. 
 \end{proof}
 
 \noindent
 Now, from \eqref{eq: 2} we can see that the entries of $x_N$ can be written in terms of entries of $z_N$ as well:
 \begin{align*}  x_{i,j}^{(N)} &= \sum_{\substack{k=1}}^N \beta_{i,k} y_{k,j}^{(N)} + \gamma \sum_{k} ( \delta_{(i=1,k=1)}+\delta_{(i=N,k=N)})  z_{k,j}^{(N)}  \\ &=  \sum_{\substack{k=1,\\ k+j \le N}}^N \beta_{i,k}z_{1,j+k}^{(N)} +  \sum_{\substack{k=1,\\ k+j > N}}^N \beta_{i,k}z_{N,j+k-N-1}^{(N)} + \gamma \sum_{k} ( \delta_{(i=1,k=1)}+\delta_{(i=N,k=N)})  z_{k,j}^{(N)}  
    \end{align*}
    where $\beta_{ij}$ are the elements of the matrix $B$, \eqref{matrix B}, and the entries of $y_N$ are split into two sums regarding their position about the cross diagonal. \\
    %Since for every element, at most $3$ coefficients  $\beta_{ij}$ are non-zero (due to the tridiagonal form of $B$), the order of  $ |x_{i,j}^{(N)}|$ will be $$|x_{i,j}^{(N)}| \lesssim \mathcal{O}(N^3), $$ because of \eqref{final z_Nk}.
   % So, exactly as in Lemma \ref{lemma for spectral norm of y_N} we end up with:  $$\|x_N\|_{2} \lesssim  \mathcal{O}(N^3) . $$
We write   \begin{align*}\|x_N\|_2 \le \|B\|_2 \|y_N\|_2+ \|\Gamma z_N\|_2 \lesssim \|y_N\|_2+ N \lesssim N^3 . \end{align*} 
 
 %Now look at \eqref{eq: y_N} for bounding $\|y_N\|_{\infty}$: $$ \|y_N\|_{\infty} \lesssim O(NN^2+ N^2 ) \sim O(N^3) $$ and so $$ \|y_N\|_2 \le N^{1/2} \|y_N\|_{\infty} \lesssim N^{7/2}.$$ The same is true for $\|x_N\|_2$ and so we end up with \begin{align*} \| b_N\|_2 &\le  \|x_N\|_2 + \|y_N\|_2 \\ & \lesssim O(N^{7/2}).
 %\end{align*}
 \noindent
 Eventually, we write \begin{align*}  \|b_N \|_2 &\leq   \|x_N\|_2 + \|y_N\|_2 \\ & \lesssim \mathcal{O}(N^{3})
 \end{align*} where for the first inequality: since $b_N$ is positive definite, decomposing $b_N$ in its square root matrices: \begin{align*} b_N  = \begin{bmatrix} \chi & \zeta \\ \zeta^T & \psi \end{bmatrix} \begin{bmatrix} \chi & \zeta \\ \zeta^T & \psi \end{bmatrix} &=  \begin{bmatrix} \chi & 0 \\ \zeta^T & 0 \end{bmatrix} \begin{bmatrix} \chi & \zeta \\ 0& 0 \end{bmatrix} +  \begin{bmatrix} 0 & \zeta \\ 0 & \psi \end{bmatrix} \begin{bmatrix} 0 & 0 \\ \zeta^T & \psi \end{bmatrix} \\ &=: X^*X + Y^*Y.
 \end{align*} And since $X^{*}X$ and $XX^{*}$ are unitarily congruent and the same holds for $Y^{*}Y$ and $YY^{*}$ (from polar decomposition for example),   there are unitary matrices $U$, $V \in \mathbb{C}^{N \times N}$  so that:  $$b_N =  X^{*}X+ Y^{*}Y = U XX^{*} U^{*} + V YY^{*}V^{*}  = U  \begin{bmatrix} x_N & 0 \\ 0 & 0 \end{bmatrix} U^{*} + V  \begin{bmatrix} 0 & 0 \\ 0 & y_N \end{bmatrix} V^{*}.
 $$
 Then  it is clear that for the spectral norm (which is unitarily invariant): $$ \left\| \begin{bmatrix} x_N & z_N \\ z_N^T  & y_N \end{bmatrix}  \right\|_2 \leq \|x_N\|_2 + \|y_N\|_2 .$$ 
 \end{proof}
 
 %Also if we take $v=(1,0, \dots, 0)$  and use the lower bound from \eqref{final z_Nk}:  \begin{align*}  \| b_N\|_2^2 &\ge \frac{|b_N v|^2}{|v|^2} \\ &= |x_{1,1}^{(N)}|^2+\dots+|x_{N,1}^{(N)}|^2+|z_{1,1}^{(N)}|^2+ \dots+ |z_{N,1}^{(N)}|^2 \gtrsim N N^2=N^3, \end{align*}

 % so finally $$ N^{3/2} \lesssim\|b_N\|_2 \lesssim N^3 .$$
 
 \subsection{Quantifying the constants from \hyperref[section with functional ineqaualities]{Section 3}.}
 
 Let us first state some facts about the spectrum of the matrix $b_0$ that corresponds to the $0$-th step of the induction:\\
It is known that $b_0$  corresponds to the covariance matrix of the NESS of the harmonic chain (and it has been found explicitly in \cite{RLL67}, see a description of their approach in the beggining of the proof of Lemma \eqref{lemma: z_N block}). From  \cite[Lemma 5.1]{JPS17}, we know that $b_0$ is bounded below and above: $$ T_R \begin{bmatrix} I &0 \\ 0 &B^{-1} \end{bmatrix} \leq b_0 \leq T_L \begin{bmatrix} I &0 \\ 0&B^{-1} \end{bmatrix}. $$

\noindent
So $\| b_0\|_2 $ and $\|b_0^{-1} \|_2$  are uniformly bounded in terms of $N$: note that, Remark \eqref{remark before the induction proof},  $B=-c\ \Delta^N + \sum_{i=1}^N \alpha \delta_i$.  Even though here we will only use that  $\|b_0^{-1} \|_2$ is finite, in fact when $a>0$, $B$ possesses a spectral gap uniformly in $N$ (also in a more general non homogeneous scenario).\\  %We give a short proof:\\

%\begin{lemma}
%\label{prop: Schrodinger op}
%Let $U_{pin}^{(i)}:\mathbb R \rightarrow \mathbb R$ be a family of one body potentials $U_{pin}^{(i)}(q_i):=\alpha_i q_i^2 $ and $U_{int}^{(i)}(q_i-q_j):= c_i (q_i-q_j)^2$ a harmonic interaction potential with $c_i>0$ and $\alpha_i \ge \kappa >0$. Then the matrix $B$ possesses a positive spectral gap uniformly in the number of oscillators $N$, which is at least $\kappa>0.$
%\end{lemma}
%\begin{proof}[Proof of Lemma \ref{prop: Schrodinger op}]
%$B$ can be seen as a family of (Schr\"odinger) operators on $l^2([N])$
%\[  -\Delta^N(c) + \sum_{i=1}^{N-1}\alpha_i \delta_i\]
%where $\delta_i(u)=u_i $ and $c=(c_i)_{i=1}^N $. So we have $ B \geq \kappa>0, $
%since we assumed that $\alpha_i>0$. We are done since Bakry-Emery criterion implies the claim.
%\end{proof}

 We  estimate the equivalence constants in the relation \eqref{equiv T and grad}: $$  C | \nabla f|^2 \leq  \mathcal{T}(f,f) \leq \mathcal{O} \left( N^{3} \right) | \nabla f |^2 , $$ 
 with $C = \| b_0^{-1} \|_2^{-1} >0$ constant that does not depend on $N$: $b_N \ge b_0$ because $ \Pi_N > 2 \tilde{\Theta}$ and so for every $t>0$, $$e^{-t M}\ \Pi_N\ e^{-tM^T} > e^{-t M}\ 2 \tilde{\Theta}\ e^{-t M^T}$$ and since $-M$ is stable (all the characteristic roots have negative real part) we have $$ b_N = \int_0^{\infty} e^{-t M} \Pi_N e^{-tM^T}  dt  >  \int_0^{\infty} e^{-t M} 2 \tilde{\Theta}  e^{-tM^T}  dt = b_0.$$   So $b_N^{-1} \leq b_0^{-1}$ and so $ \| b_{ N}^{-1} \|_2 \leq \| b_0^{-1} \|_2 $ which  is less than a finite constant (because of the spectrum of the discrete Laplacian). So $\| b_{ N}^{-1} \|_2 ^{-1} >0$. \\

\noindent 
Now coming back to the constant $\lambda_N$ given by \eqref{lambda_N} from the proof of the gradient estimate \eqref{T_2>cT}: We remind \eqref{lambda_N}: $$ \lambda_N =  \| b_N \|_{2}^{-1}  - 2(C_{pin}(N) + C_{int}(N) ) \|b_N\|_2 \| b_N^{-1}\|_2. $$ Then \begin{align*}  \lambda_N \gtrsim \frac{1}{N^3} - 2(C_{pin}(N)+C_{int}(N)) N^3  \end{align*}and choosing  \begin{align} \label{behaviour of bounds} C_{pin}(N)+C_{int}(N) = \mathcal{O} \left(C_0\ N^{-6} \right) \end{align} for suitable $C_0 < \frac{1}{2}$, then  $\lambda_N > 0$ and in particular \begin{align} \label{size of lambda_N} \lambda_N \gtrsim (1-2C_0)N^{-3}:= \lambda_0 N^{-3} . \end{align}

Now the constant in the  $\text{LSI(C)}^{h}$ \footnote{Whenever we write $\text{LSI(C)}^{h} $ and $ \text{LSI(C)}^{nh}$ is the constant in the Log-Sobolev inequality in the purely harmonic and weakly anharmonic chain respectively.}, in the harmonic case, is $ \|b_{N} \|_2^2 \| b_{N}^{-1} \|_2 $ and  $ \| b_{ N}^{-1} \|_2 \leq \| b_0^{-1} \|_2 < \infty$. We get the following upper bound  in terms of $N$: $$ \text{LSI(C)}^h\  \lesssim \mathcal{O}( N^{6}) . $$ 
%Also $$\|b_{N} \|_2^2 \| b_{N}^{-1} \|_2 \gtrsim N^3 \| b_N\|_2^{-1} \ge N^3N^{-3} =1$$ and so $$ \mathcal{O}(1) \lesssim  \text{LSI(C)}^h \lesssim \mathcal{O}( N^6) $$ 

 Finally for the $\text{LSI(C)}^{nh}$ in  \textit{weakly anharmonic} case: we remind that the constant is given by $$ \text{LSI(C)}^{nh} = \frac{T_L \|b_N\|_2^2 \|b_N^{-1}\|_2}{2- 2( C_{pin}(N)+C_{int}(N) ) \|b_N\|_2^2\|b_N^{-1}\|_2}$$ as in Proposition \eqref{log sobolev inqlt}.
  Then $$ \text{LSI(C)}^{nh} \lesssim \mathcal{O}(N^6)$$ % and choosing for example the perturbing constants to satisfy $  C_{pin}(N)+C_{int}(N) \sim N^{-7}$, $$ \text{LSI(C)}^{nh} \gtrsim \frac{1}{2-2N^{-6}} \sim \mathcal{O}(1) . $$ 
 So the same bounds hold for this weak anharmonicity as well. \\
 
 \noindent
 \textit{To sum up: for the homogeneous weakly anharmonic chain,  the method described in \hyperref[section with functional ineqaualities]{Section 3} with the perturbed Bakry-Emery criterion,  bounds  the $\text{LSI(C)}$  from above by $N^6$. It gives  a lower bound on the spectral gap that is of order $N^{-3}$ (see the exponential rate in Theorem \ref{maintheorem}). For the purely harmonic chain, since we know that it always decays with $N$ from Proposition \eqref{optimal sg for harmonic}, this lower bound  shows that the spectral gap in this case can not decay at an exponential rate in $N$, it is at most polynomial. }\\
 
 \noindent
In the next Proposition, exploiting the estimates on $\|b_N\|_2$ from the above matrix analysis, we get alternatively the lower bound on the spectral gap of the \textit{harmonic} chain.
%which in fact corresponds to the optimal rate: from the forthcoming work \cite[Proposition 9.1]{BM19} we find exactly the spectral gap of the harmonic homogeneous chain. It is of order $\mathcal{O}(N^{-3})$. %This is done in a different way exploiting properties of the spectrum of the discrete Laplacian, which appears in the blocks of $M$. 

 \begin{proof}[Proof of Proposition \ref{Veselic}]
Remember that $\| b_{ N} \|_2 \lesssim \mathcal{O}(N^3) $ by Proposition \eqref{propos of induction} and that the order of the spectral gap is given by the order of $\inf\{\text{Re}(\mu) : \mu \in \sigma(M) \}$. 
 From  \cite[Inequality (13)]{Ves03}, \cite{GKK91}, we have an estimate for the decay of $e^{-Mt}$: $$ \| e^{-Mt } \|^2 \leq \| b_{N} \| \| b_{N}^{-1} \| e^{-t/ \| b_{N}\| } $$ So, for $u$ be the (normalized) eigenvector corresponding to an eigenvalue of $M$, $\mu>0$, we write  $$ e^{-2 \text{Re}(\mu)  t } = \| e^{-2 \text{Re}(\mu)  t} u \|^2  \leq \|e^{-Mt} u \|^2 \leq \| b_{N} \| \| b_{N}^{-1} \| e^{-t/ \| b_{N}\|} $$
 So we have
 $ -2 \text{Re}(\mu) \leq - \frac{1}{\| b_{N}\|}  $ which means $$  \text{Re}(\mu) \geq  \frac{1}{2 \| b_{N}\|}$$
 and taking the infimum over the real parts of the eigenvalues of $M$, we get that the spectral  gap of the chain is of order bigger than  $N^{-3}$:
 $$\inf\{\text{Re}(\mu) : \mu \in \sigma(M) \} \geq \mathcal{O}\left( \frac{1}{\| b_{N}\|} \right)= \mathcal{O} \left( \frac{1}{N^{3}}\right). $$
 \end{proof}
 
 \noindent
 Eventually, from the whole procedure in this note we have that the spectral gap of the homogeneous harmonic chain is in between $N^{-3}$ and $N^{-1}$. In \cite{BM19} it is proven that this bound is the sharp one. \\
 
 \noindent
  From a simple numerical simulation on the spectral gap of the matrix $M$, the true value is indeed $N^{-3}$. 
 In particular calculating the real part of the smallest eigenvalue of the matrix $M$ in $N$ iterations, and multiplying the result by $N^3$ we get the following behaviour in Figure \ref{matlab fig}, which shows that then the spectral gap converges for large $N$:
 
 \makebox[0pt][l]{
\begin{minipage}{\textwidth}
 \centering
 %\caption[width=0.25\textwidth]{Here $N=300$, the pinning coefficient $a=0$ and the friction constant $\gamma=1$.}
 \includegraphics[width=.6\textwidth]{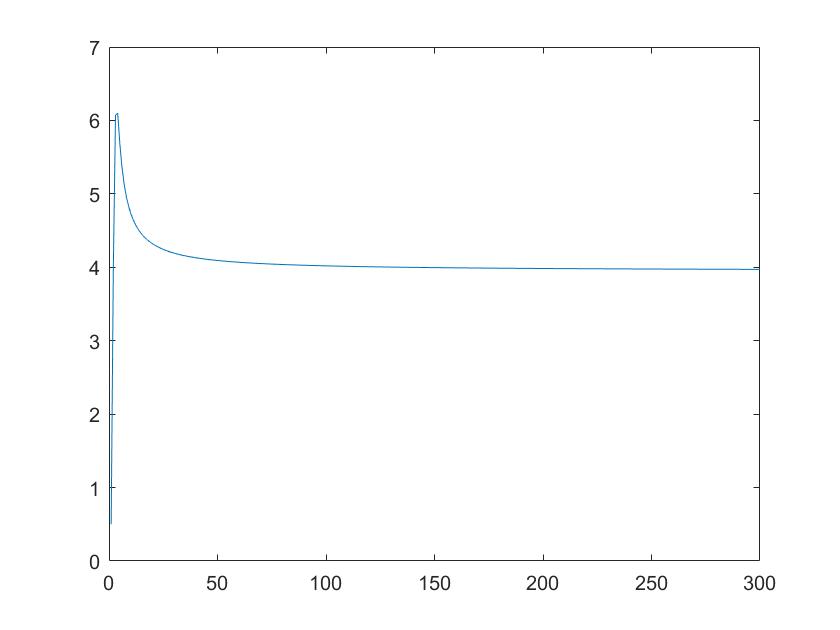}
 \captionof{figure}{Here we implemented it for $N=300$ particles,  the pinning coefficient $a=0$, interaction coefficient $c=1$ and the friction constant $\gamma=1$ .}
 \label{matlab fig}
 \end{minipage}
 }
 
\bigskip
\noindent
{\bf{Acknowledgments}}: I thank my advisor, Cl\'ement Mouhot, for useful conversations, suggestions and encouragement. Also, I would like to thank Simon Becker, for pointing out a mistake in earlier version of this draft and suggesting the reference \cite{Ves03}, to me. Finally, I thank Josephine Evans for  first discussions on the idea. This work was supported by the EPSRC grant EP/L016516/1 for the University of Cambridge CDT, the CCA. 
 
\bibliographystyle{alpha}
\bibliography{bibliography}

\end{document}